\newcommand{\h}{\hat h}
\newcommand{\F}{\hat F}
\newcommand{\e}{\epsilon}
\newcommand{\te}{\tilde \epsilon}
\newcommand{\va}{\vartheta}
\newcommand{\tva}{\tilde{\vartheta}}
\newcommand{\tF}{\tilde F}
\newcommand{\tih}{\tilde h}
\newcommand{\tP}{\tilde \Pi}
\newcommand{\beq }{\begin{equation}}
\newcommand{\eeq }{\end{equation}}
\newtheorem{theorem}{Theorem}
\newtheorem{lemma}{Lemma}
\newtheorem{proposition}{Proposition}
\newtheorem{corollary}{Corollary}
\newtheorem{definition}{Definition}
\theoremstyle{definition}
\newtheorem{remark}{Remark}
\newtheorem{example}{Example}
\numberwithin{equation}{section}
\begin{document}
\title[Supplementary balance laws]%
{\textbf{Supplementary balance laws and the entropy principle.}}
\author{Serge Preston}(\address{Department of Mathematics and Statistics, Portland State University,
Portland, OR, U.S.}) \email{serge@mth.pdx.edu}

\begin{abstract}
In this work we study the mathematical aspects of the development in the continuum thermodynamics known as the "Entropy Principle". It started with the pioneering works of B.Coleman, W.Noll and I. Muller in 60th of XX cent. and got its further development mostly in the works of G. Boillat, I-Shis Liu and T.Ruggeri.  "Entropy Principle" combines in itself the structural requirement on the form of balance laws of the thermodynamical system (denote such system $(\mathcal{C})$) and on the entropy balance law with the convexity condition of the entropy density. First of these requirements has pure mathematical form defining so called "supplementary balance laws" (shortly SBL) associated with the original balance system. Vector space of SBL can be considered as a kind of natural "closure" of the original balance system.  This space includes the original balance laws, the entropy balance, the balance laws corresponding to the symmetries of the balance system and some other balance equations. We consider the case of Rational Extended Thermodynamics where densities, fluxes and sources of the balance equations do not depend on the derivatives of physical fields $y^i$. We present the basic structures of RET:  Lagrange-Liu equations,"main fields", and dual formulation of the balance system. We obtain and start studying the defining system of equations for the density $h^0$ of a supplementary balance law. This overdetermined linear system of PDE of second order determines all the densities $h^0$ and with them, due to the formalism of RET, the fluxes and sources of SBL. Solvability conditions of defining system delivers the constitutive restrictions on the balance equations of the original balance system. We illustrate our results by some simple examples of balance system and by describing all the supplementary balance laws and the constitutive restrictions for the Cattaneo heat propagation system. \par
\end{abstract}
 \maketitle
\today
 \tableofcontents

\section{\textbf{Introduction.}}
With any system $\mathcal{P}$ of partial differential equations for the fields $\{y^i (x^\mu),\ i=1,\ldots, m\}$ defined in a domain $D \subset X$ of the base (typically physical or material space-time) manifold $X$ there are associated several classes of objects carrying important information on this system and its solutions. \par
First class consists of symmetry groups of different type and levels of generality (see \cite{O}) and the corresponding Lie algebras of infinitesimal symmetries. Principal characteristic property of these symmetry groups is that the transformations from these groups acts on the solutions of system $\mathcal{P}$ transforming them into another solutions.\par  Second class consists of the conservation or more generally, balance laws of different types satisfied by all the solutions of system $\mathcal{P}$.\par
 Relations between these two classes of objects associated with the system $P$ are well know in the Variational Calculus.  They are are formulated as the theorems of Noether type of different level of generality (\cite{O}).  The most well known for the Lagrangian formalism is the original Noether theorem with the complement of Bessel-Hagen (see discussion in \cite{O})), the most general result in this case is the result obtained by P.Olver (\cite{O1}) and, independently, by A.Vinogradov, \cite{Vin}). These results establish the bijective correspondence between the equivalence classes of the conservation laws for the Euler-Lagrange system of equations and the equivalence classes of variational symmetries of the action functional (see \cite{O}, Thm. 5.42).\par

On the other hand, in the Continuum Thermodynamics, the II law postulate the existence of an additional (to the basic balance system) balance law - \emph{entropy law} $h^{\mu}_{,\mu}=\Sigma$ such that the part of the source term $\Sigma$ corresponding to the internal entropy production in the system is nonnegative.\par
 In the mid-60th of XX cent., it was suggested (B. Coleman and W. Noll,\cite{CN}) that the following statements forming the "\textbf{Entropy Principle}" are true for any thermodynamical continuum system of RET type:
\begin{enumerate}
\item Entropy balance \beq (\rho s)_{,t}+S^{A}_{,x^A}=\sigma \eeq is satisfied by all the solutions of the basic balance system () of the thermodynamical system.
\item Entropy density $h^0 (x,y)$ is the concave function of fields $y^i$.
\end{enumerate}

Notice that the first statement above has sense for any system of balance equations.  Thus, one can introduce the space of the "\emph{supplementary balance laws}" for a given balance system $\bigstar$, see ().\par

In the terms of the balance law () the II law of thermodynamics requires (in the Coleman-Noll form, (see \cite{CN})) that in the absence of the outside sources of entropy, the Clausius-Duhem inequality
\beq
\frac{\partial \rho s}{\partial t}+\partial_{x^A}(\rho s v^i +\frac{q^i}{T})\geqq 0
\eeq
is true for all the solutions of the balance system $\bigstar$.\par
Later on, I.Muller suggested (\cite{Mu1}) that the nonconvective entropy flux $S$ may have more general form then $\frac{q^i}{T}$ and has to be defined by the requirement of compatibility with the constitutive relation of the system $\bigstar$ together with the entropy density $h^0$. this suggestion can be naturally rephrased as the fulfillment of the requirement (1) in the list above.\par

As a result of these developments, the entropy principle crystalized as the \textbf{Amendment} (see \cite{ME}) to the II law of thermodynamics having sense for continuum thermodynamical systems of higher order (i.e. where the fluxes $F^{\mu}_{i}$ and sources $\Pi_{i}$ may depend on the derivatives of fields $y^i$ up to some order). This amendment requires that \textbf{the constitutive relations of the basic balance system $\bigstar$ and the entropy inequality (i.e. expression of the entropy density $s$ and the entropy flux $S$ in terms of the basic fields and their derivatives) should be such that any solution of the balance system $\bigstar$ would at the same time satisfy to the entropy inequality (1.2).}\par

In most works it is silently assumed that the entries in the entropy balance - $s,S,Q$ are the functions of $x^\mu, y^i$ and the derivatives \textbf{to the same order as the entries in the system} $\bigstar$. This, in particular, guarantees the removal, from the considerations, of trivial balance laws of the  II order, see \cite{O}.\par
This leaves the question about the structure of the right side $\Sigma$ of the entropy balance.  In continuum mechanics and in the material science the form of the entropy production function is of the utmost importance (\cite{JCL,Mu}). The entropy "production" splits $\Sigma=\Sigma^{out}+\Sigma^{in}$  into internal (\emph{real production}) and external (outside source) terms. After this, the II law of thermodynamics crystallizes as the requirement that $\Sigma^{in}\geqq 0$.\par

This leads naturally to the problem of description of all the balance laws satisfying the first requirement above (\textbf{supplementary balance laws}) for a given balance system $\bigstar$ and modulo the trivial balance laws  and extraction those of them that satisfy to the positivity condition.\par
In Section 2 we introduce basic setting: base (space-time) space $X$, configurational bundle $\pi:Y\rightarrow X$ and the balance systems $\bigstar$ and collect the notations that are used in the paper. In section 3 we introduce supplementary balancer laws (SBL) for a given balance system, give examples of such SBL.  In Section 4 we describe structures of affine subbundles of 1-jet bundle $J^{1}(\pi)\rightarrow Y$ corresponding to a balance system $\bigstar$ and a supplementary balance law for this system.  Relation between these affine subbundles define the main fields $\lambda^i$ and the Lagrange-Liu system of equations for supplementary balance laws. In section 5 we present, following the monograph of I.Muller and T.Ruggeri, \cite{MR} a brief review of basic notions of Rational Extended Thermodynamics including transition to the main fields variables and the symmetrical hyperbolic form of the balance system. In section 6 we study the LL-system for a case where main fields are functionally independent and get the geometrical form of the Boillat-Miller-Ruggeri theorem on the bijection between the SBL with the functionally independent mean fields $\lambda^i$ and the density functions $h^0$ with the non-degenerate Hessian satisfying to the \emph{defining linear system} of the PDE of 2 degree. In Sections 7 we present some simple examples of solving defining system.  In Sec.8 we show that in the case of 2 scalar fields $y^1,y^2$ in $1+1$ space-time, defining system reduces to one 2 order linear PDE that has to be hyperbolic for the balance system to have SBL with the density $h^0$ whose Hessian is definite. In section 9 we study the defining system for $h^0$.  We show that this system is generically subholonomic, find conditions on the flux fields $F^{A}_{i}(y)$ when this system is elliptic or holonomic.   In section 12 we analyze the structure of supplementary balance laws for the Cattaneo heat propagation system. We describe all the supplementary balance laws, determine the form of internal energy for the case when there are SBL and its structure when the production of the SBL is positive - formulation of II law of thermodynamics in terms of the defining function $\lambda^{0}(\va)$ of the "entropy" SBL.

\section{\textbf{Settings and Notations.}}

\subsection{Space-time manifold $X$.}

A state of material body will be described by the collection of the time-dependent fields $\{ y^i
,i=1,\ldots ,m\}$ defined in a domain $B\subset E^n$ of the physical, material or ... space with the
boundary $\partial B$.\par   Product of closure $\bar{B}$ and the time axis $T$ - $X=T\times \bar B$ is the space-time manifold - the cylinder in the Newtonian or Lorentz space-time.  We assume that the Pseudo-Riemannian metric $G$ is defined in the space-time $X.$   An example of
such a metric is the Euclidian metric $g=dt^2+h$ in the Newtonian space-time (\cite{MH}) or Minkowsky metric in the space-time of special relativity. We introduce (global) coordinates
$x^\mu ,\mu =1,2,3$ in $B$ (possibly induced from the global coordinates in $E^3$) and the time variable $t=x^0$. \par

Denote by $\eta$ the volume n-form $\eta=\sqrt{\vert G\vert}dt\wedge dx^1 \wedge dx^2 \wedge dx^n$
corresponding to the metric $G$.
\par

\subsection{State (configurational) bundle $\pi:Y\rightarrow X$.}
Basic fields of a continuum thermodynamical theory $y^i$ take values in the space $U\subset \mathbb{R}^m$ which we will call the
 {\bf basic state space} of the system.

Following the framework of a classical field theory (see \cite{BSF,FF}) we organize these fields in the
bundle
\[
\pi_{U}: Y\rightarrow X,\ X=\mathbb{R}_{t}\times {\bar B}, \ Y=X\times U
\]
with the base $X$ and the fiber
$U$.
\par
Denote by $s=s(x)$ sections of the configurational bundle - collection of values of fields $x\rightarrow \{ y^i(x)),i=1,\ldots,m \},\ x\in D$, where $D\subset X$ is an open subset of the space-time manifold $X$.\par
We will call variables $y^i$ - the \emph{vertical variables}.  differential of a function $f\in C^{\infty}(Y)$ along the fibers of the configurational bundle will be called - \emph{the vertical differential}: $d_{v}f=\sum_{i}\frac{\partial f}{\partial y^i}dy^i.$\par

We will also use the notation $d_{\mu}$ for the total derivative by $x^\mu$:
\[
d_{\mu}f=\frac{\partial f}{\partial x^\mu}+y^{i}_{,x^\mu}\frac{\partial f}{\partial y^i}.
\]
Denote by $\pi_{10}:J^{1}\rightarrow Y$ the 1-jet bundle of the bundle $\pi:Y\rightarrow X$ (see \cite{BSF,S}). Local chart $(x^\mu ,y^i )$ in the domain $W\subset Y$ defines the local chart $(x^,\mu, y^i, z^{i}_{\mu})$ in the preimage $W^{1}=\pi^{-1}_{10}(W)\subset J^{1}(\pi)$. Coordinates $z^{i}_{\mu}$ in the fibers of 1-jet bundle over $Y$ are defined as the partial derivatives: $z^{i}_{\mu}(s(x))=\frac{\partial y^i}{\partial x^\mu}$ for a sections $s(x)=\{y^I (x)\}$.\par
A section $s(x):D\rightarrow Y$ with the components $y^i (x)$ determine the section $j^{1}s (x)$ of the bundle $J^{1}(\pi)\rightarrow X$ - 1-jet of the section $s(x)$ - by the formula
\[
j^{1}(x^\mu )=(x^{\mu}, y^i (x),z^{i}_{\mu}(x)=\frac{\partial y^i}{\partial x^\mu} (x).
\]
for a manifold $M$ we denote by $T(M)\rightarrow M$ - its tangent bundle, by $T^{*}(M)\rightarrow M$ - its cotangent bundle.  If $\pi:Y\rightarrow X$ is a bundle, we denote by $V(\pi)\rightarrow Y$ the vertical subbundle of the tangent bundle $T(Y)$ formed by the vectors tangent to the fibers of $\pi$: $V_{y}=\{\xi \in T_{y}(Y)\vert \pi_{*y}\xi =0$.  In a local chart $(x^\mu ,y^i)$ the fiber $V_{y}(\pi)$ has the basis of vector fields  $\frac{\partial}{\partial y^i}$.  Denote by $V^{*}(\pi)\rightarrow Y$ - vertical cotangent bundle - dual bundle to the vertical tangent bundle $V(\pi)$. In a local chart $(x^\mu ,y^i)$ the fiber $V^{*}_{y}(\pi)$ has the basis of 1-forms $dy^i,\ i=1,\ldots, m$.

\subsection{Balance Equations.}
Fields $y^i$ are to be determined as solutions of the field equations
for the currents $F^{\mu}_{i}$. Often (but not always) $F^{0}_{i}=y^i$ are the densities of fields $y^i$:
\begin{equation}
F^{ \mu}_{i,\mu}=F^{0}_{i,t}+F^{A}_{i,x^{A}}  =\Pi_{i},\ i=1,\ldots, n.
\end{equation}
Here $\Pi_{i}(y,x)$ is the {\bf production+source terms} of the component $y^i$ and $F^{
A}_{i}(y,x)\frac{\partial }{\partial x^A}$ - the {\bf flow } in the $i$-th equation. In RET theory these quantities
are assumed to be functions of the fields $y^i$ and, possibly, of the points $(t,x)=\{x^\mu \} \in X.$ Quite
one restricts attention to the case where there $F^{\mu}_{i} ,\Pi_i$ \emph{do not depend explicitly on the
space-time point} $x^\mu$.

\section{\textbf{Entropy principle and the supplementary balance laws.}}\par

   Let
\[
 (F^{\mu}_{i}(s(x)))_{,x^{\mu}}=\Pi_{i}(s(x)),\ i=1,\ldots, m,\hskip2cm (\bigstar )
\]
be a balance system of order zero for the sections $s(x)=\{y^i (x) \}$ with the densities $F^{0}_{i}(x,y)$, fluxes $F^{A}_{i}(x,y),A=1,2,3$ and the source/production terms $\Pi_{i}(x,y)$. This system can be written as the system on the 1-jet bundle $J^{1}(\pi)$ using the total derivatives:
\[
d_{\mu}F^{\mu}_{i}=\Pi_{i}.
\]
Equations of system $\bigstar$ are obtained from the last equating by taking pullback by the 1-jet $j^{1}s(x)$ of a section $s:x\rightarrow Y.$\par

 A natural question generalizing the "entropy principle" of Continuum Thermodynamics (\cite{MR}) is - are there, except of the linear combinations of balance equations of the system  $\bigstar$, nontrivial (see below) balance laws \emph{depending on the same variables} $x^\mu ,y^i$ \emph{that are satisfied by all the solutions of  the balance system} $\bigstar$.  Thus, we define

\begin{definition} Let the $\bigstar$ is a system of balance equations of the RET type.
\begin{enumerate}
\item
  We call a balance law
  \beq (K^{\mu}(s(x))_{,\mu}=Q(x,y(x)). \eeq with the density $K^{0}(x,y)$ and flux $K^{A}(x,y)$- the \textbf{supplementary balance law} for the system $\bigstar$ if \textbf{any} solution $ s(x)$ of the balance system ($\bigstar$) is at the same time solution of the balance law (3.1).
  \item A supplementary balance law (3.1) for a balance system $\bigstar$ is called one \textbf{of the entropy type} if for all solutions $s(x)$ of the balance system $\bigstar$ the entropy production term $Q(s(x))$ \textbf{is nonnegative}.
  \end{enumerate}
\end{definition}

\begin{example}
An interesting subspace of the space $\mathcal{SBL}_\mathcal{C} $ of supplementary balance laws, including the linear combinations of the balance laws of the system ($\bigstar$), is determined by the following

\begin{proposition} Let a vertical vector field $\xi =\xi^i \partial_{y^i}\in V(\pi)$ is such that
  the condition $F^{\mu}_{i}d_\mu \xi^{i}=0$ is fulfilled.
  Then the balance law
\[
s^*d(\xi^{i}F^{\mu}_{i}\eta_{\mu})=\xi^{i}\Pi_{i}\eta \Leftrightarrow
((\xi^{i}F^{\mu}_{i})(s(x)))_{,x^\mu}=(\xi^{i}\Pi_{i})(s(x))
\]
is the supplementary balance law for the balance system $\bigstar.$
\end{proposition}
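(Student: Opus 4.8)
The plan is to verify the defining property of a supplementary balance law directly: taking an arbitrary solution $s(x)=\{y^i(x)\}$ of $\bigstar$, I would show that the proposed law holds identically along $j^1 s$. The only tool needed is the Leibniz rule for the total derivative $d_\mu$ applied to the product $\xi^i F^\mu_i$, together with the two facts at our disposal, namely the standing hypothesis $F^\mu_i d_\mu \xi^i = 0$ and the assumption that $s$ solves $\bigstar$.

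First I would record that the two displayed forms of the conclusion are equivalent through the standard identity $s^* d(K^\mu \eta_\mu) = \big((d_\mu K^\mu)(j^1 s)\big)\,\eta$, valid for any current $K^\mu$ and any section $s$, with $\eta_\mu = \partial_{x^\mu}\lrcorner\,\eta$. Applying this to $K^\mu = \xi^i F^\mu_i$ reduces the intrinsic statement to the coordinate identity $\partial_{x^\mu}\big((\xi^i F^\mu_i)(s(x))\big) = (\xi^i \Pi_i)(s(x))$, so it suffices to establish the latter.

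Next I would expand on the $1$-jet bundle by the product rule, $d_\mu(\xi^i F^\mu_i) = F^\mu_i\, d_\mu \xi^i + \xi^i\, d_\mu F^\mu_i$, and then pull back along $j^1 s$. The first summand vanishes identically by the hypothesis $F^\mu_i d_\mu \xi^i = 0$. For the second summand I would invoke that $s$ is a solution of $\bigstar$, so that the pullback of $d_\mu F^\mu_i$ equals $\Pi_i(s(x))$; hence $\xi^i\, d_\mu F^\mu_i$ pulls back to $(\xi^i \Pi_i)(s(x))$. Adding the two contributions gives precisely the required identity, so $s$ satisfies (3.1), and since $s$ was arbitrary the balance law is supplementary.

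The argument is essentially a one-line computation, so I do not expect a genuine obstacle; the only delicate points are bookkeeping ones. The first is the interpretation of the hypothesis $F^\mu_i d_\mu \xi^i = 0$ — whether it is imposed as an identity on all of $J^1(\pi)$ or merely along solution jets; only the latter is needed, and in either reading the first summand drops out. The second is to confirm that $d_\mu$ intertwines with pullback along $j^1 s$, so that total derivatives become honest partial derivatives $\partial_{x^\mu}$ of the composed functions $\xi^i(s(x))$ and $F^\mu_i(s(x))$ — this is exactly the \emph{on solutions} reduction already used in writing $\bigstar$ as $d_\mu F^\mu_i = \Pi_i$.
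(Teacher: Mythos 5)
Your proof is correct and follows essentially the same route as the paper's: both rest on the Leibniz rule applied to $\xi^{i}F^{\mu}_{i}$, with the hypothesis $F^{\mu}_{i}d_{\mu}\xi^{i}=0$ annihilating one term and the balance system $d_{\mu}F^{\mu}_{i}=\Pi_{i}$, pulled back along the solution jet, evaluating the other. The only difference is the order of bookkeeping --- the paper pulls back first and applies the Leibniz rule at the level of forms, while you apply it to the total derivative on $J^{1}(\pi)$ and then pull back --- which is immaterial.
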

\begin{proof} Follows from $s^* d(\xi^{i}F^{\mu}_{i}\eta_{\mu})=d(s^{*}\xi^{i}F^{\mu}_{i}\eta_{\mu})=
\xi^{i}(s)d(s^{*}(F^{\mu}_{i}\eta_{\mu})+s^{*}(F^{\mu}_{i}d_\mu \xi^{i})\eta =(\xi^i \Pi_{i})\circ s+0.$
\end{proof}
\end{example}

\begin{example} If a Lie group $G$ is the (geometrical) symmetry group of the balance system $\bigstar$ , it determines the family of the balance laws corresponding to the elements of Lie algebra $\mathfrak{g}$, see (\cite{Pr2,Pr3}).
\end{example}

\begin{example}
The \textbf{entropy principle} of Thermodynamics (see (\cite{ME,MR,Mu}, or Sec. below) requires that the entropy balance
\beq h^{\mu}_{,\mu}=\Sigma, \eeq
 with the entropy density $h^0$, entropy flux $h^A,A=1,2,3$, and entropy
production plus source $\Sigma=\Sigma_{s}+\Sigma_{p} $ of a given
theory. This requirement place a serious restrictions on the form of constitutive relation $\mathcal{C}$
and leads to the construction of a dual system in terms of Lagrange ("main") fields $\lambda $ considered below
(see Sec.3 or \cite{MR}).\par

In addition to the existence of the entropy balance
law, the second law of thermodynamics requires the fulfillment of the condition $\Sigma_{p} \geqq 0$ of \textbf{positivity of the entropy production}.
Thus, the "Entropy principle" is the \textbf{independent Amendment to the second balance law} (\cite{ME}) that places severe restrictions to the constitutive relations, more specifically to the form of entropy density, entropy flux and entropy production.\par
\end{example}
\begin{example}
System of Maxwell equations for the electrical field $\mathbf{E}$ and magnetic field $\mathbf{H}$ in the empty space ${\mathbb{R}}^3$
\beq
\begin{cases}
\mathbf{E}_{,t}-\nabla \times \mathbf{B}=0,\\
\mathbf{B}_{,t}+\nabla \times \mathbf{E}=0
\end{cases}
\eeq
is the balance system (of RET type). Conditions
\beq
\nabla \cdot \mathbf{E}=0, \ \nabla \cdot \mathbf{H}=0,
\eeq
admitted for the electrical and magnetic fields in the absence of electrical charges \emph{are supplementary balance laws for the system} (3.3).
\end{example}

\section{\textbf{Affine subbundles of the bundle $J^{1}(\pi)\rightarrow Y$,"main fields" $\lambda^i$ and LL-equations.}}

Consider a balance system ($\bigstar $)
\[
 (F^{\mu}_{i}\circ s)_{,{\mu}}=\Pi_{i}(s),\ i=1,\ldots, m,\hskip3cm \bigstar
\]
defined on the bundle  $\pi:Y\rightarrow X$,and calculate explicitly the derivatives in it.  We get
\beq
F^{\mu}_{i,x^{\mu}}(s)+F^{\mu}_{i,y^j}(s)s^{j}_{,\mu}= \Pi_{i}(s).
\eeq
Here $A=1,\ldots ,n;\mu=0,\ldots,n$. \par
 This can be written in the form
\beq
s^{*}\left[\sum F^{\mu}_{i,y^j}z^{j}_{\mu}\right] =\\
=s^{*}\left[ \Pi_{i}-F^{\mu}_{i,x^{\mu}}\right].
\eeq

We may consider these relations as defining, for each point $(x^\mu ,y^i)\in
Y$ \emph{through which a solution of the balance system $\bigstar$ passes}, the system of \emph{affine planes} $A_{i}(z),\ i=1,\ldots, m $ \emph{in the fibers} $J^{1}_{x,y}(\pi)$ of the bundle $J^{1}(\pi)$:

\beq \sum F^{\mu}_{i,y^j}(x,y)z^{j}_{\mu}=
\Pi_{i}(x,y)-F^{\mu}_{i,x^{\mu}}(x,y)\eeq

\vskip0.4cm

  For each $i$ affine planes $z\rightarrow A_{i}(x,y)$ form, upon varying of the point $(x,y)$, the affine
subbundle $Y\ni (x,y)\rightarrow A_{i}\subset J^{1}_{x,y}(\pi )$. Intersection $W$ of these planes generates the affine subbundle
$Y\ni (x,y)\rightarrow W(z)\subset J^{1}_{x,y}(\pi )$ (\emph{generically}) of codimension $m$.\par

Arguments presented above can be summarized in the following
\begin{proposition} A section $s\in \Gamma (D(s),Y)$ is the solution of the balance system $\bigstar$ if and only if the image of the domain $D(s)$ of section $s$ under the mapping $x\rightarrow j^{1}(x)$ in intersection with the fibers $J^{1}_{(x,y)}(\pi)$ of the bundle $J^{1}(\pi)\rightarrow Y$ lays in the affine subbundle $W_{(x,y)}$  of the fiber $J^{1}_{(x,y)}(\pi)$, i.e.
\[
j^{1}s(x)\in W_{s(x)},\ \forall x\in D(s).
\]
\end{proposition}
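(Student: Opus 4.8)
The plan is to recognize that the asserted condition is nothing but the pointwise, coordinate-free reformulation of the balance system $\bigstar$, so that the proof reduces to unwinding the definition of the $1$-jet together with the chain rule already carried out in passing from $\bigstar$ to (4.1) and to the affine planes (4.3). First I would fix an arbitrary point $x\in D(s)$ and set $p=s(x)=(x^\mu,y^i(x))\in Y$. By the very definition of prolongation the $1$-jet $j^1s(x)$ has coordinates $(x^\mu,\,y^i(x),\,z^i_\mu=\partial y^i/\partial x^\mu(x))$ and hence lies in the fiber $J^1_p(\pi)$ over $p$; this holds for every section, with no use of the equations. The only genuine question is whether this particular point of the fiber satisfies the affine equations that cut out $W_p$.

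Next I would evaluate the balance law along $s$ and compare it with the defining equation of the planes. Expanding the total derivative of $F^\mu_i\circ s$ by the chain rule gives, at the point $x$, exactly equation (4.1),
\[
F^\mu_{i,x^\mu}(s(x))+F^\mu_{i,y^j}(s(x))\,\frac{\partial y^j}{\partial x^\mu}(x)=\Pi_i(s(x)),\qquad i=1,\ldots,m.
\]
On the other hand, the affine plane $A_i$ in the fiber $J^1_p(\pi)$ is the solution set in the variables $z^j_\mu$ of (4.3),
\[
\sum F^\mu_{i,y^j}(p)\,z^j_\mu=\Pi_i(p)-F^\mu_{i,x^\mu}(p),
\]
evaluated at the base point $p=s(x)$. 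Substituting the jet coordinates $z^j_\mu=\partial y^j/\partial x^\mu(x)$ of $j^1s(x)$ into this equation and using $p=s(x)$ reproduces precisely the rearranged form of (4.1) displayed above. Thus, for each fixed $i$, the $i$-th scalar balance equation holds at $x$ if and only if $j^1s(x)$ lies in the plane $A_i$ over $p=s(x)$.

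Finally, since by construction $W_p=\bigcap_{i=1}^m A_i$ in the fiber $J^1_p(\pi)$, membership $j^1s(x)\in W_{s(x)}$ is equivalent to $j^1s(x)\in A_i$ over $s(x)$ for every $i$ simultaneously, hence to the simultaneous validity at $x$ of all $m$ equations of $\bigstar$. As $x\in D(s)$ was arbitrary, the equivalence propagates to the whole domain, which is the statement of the Proposition.

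The argument is essentially tautological, so I do not expect a substantive obstacle; the one point requiring care is the legitimacy of replacing the fiber coordinates $z^j_\mu$ by the actual derivatives $\partial y^j/\partial x^\mu(x)$, which is exactly what the prolongation $x\mapsto j^1s(x)$ encodes, together with keeping the base point $p=s(x)$ at which the coefficients $F^\mu_{i,y^j}$, $F^\mu_{i,x^\mu}$, $\Pi_i$ are frozen consistently aligned with the jet over that same point. I would also remark that the generic codimension $m$ of $W$ plays no role in the equivalence itself: the intersection $W_p=\bigcap_i A_i$ is well defined irrespective of the rank of the coefficient matrix $\big(F^\mu_{i,y^j}\big)$, and the Proposition holds verbatim in the degenerate cases as well.
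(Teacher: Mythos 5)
Your proof is correct and follows essentially the same route as the paper: the Proposition there is stated as a summary of the immediately preceding computation, namely expanding the balance system by the chain rule into (4.1)--(4.2), recognizing the resulting relations as the affine planes (4.3) in the jet fibers, and observing that simultaneous membership in all $A_i$ is membership in $W=\bigcap_i A_i$. Your closing remark that the equivalence holds even when $W$ fails to have constant rank (nondegeneracy being needed only for the subsequent Corollary about $W$ being a well-defined affine subbundle of constant rank) is also consistent with the paper's structure.
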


Previous considerations shows that
\begin{corollary}
 Let a balance system $\bigstar$ is locally solvable at each point $(x,y)\in Y$ (see Appendix I), then the affine subbundles $A_{i}$ and their intersection $W$ are correctly defined affine subbundles of the bundle $\pi_{10}: J^{1}(\pi)\rightarrow Y$ \emph{of the constant rank}.
\end{corollary}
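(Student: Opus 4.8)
The plan is to verify directly the two properties packaged into the phrase \emph{affine subbundle of constant rank}: that over each $(x,y)\in Y$ the prescribed fibre is a genuine (nonempty) affine subspace of $J^{1}_{(x,y)}(\pi)$, and that these fibres assemble into a locally trivial family of constant fibre dimension. For fixed $(x,y)$ the left side of (4.3) is, for each $i$, a single linear form in the fibre coordinates $z^{j}_{\mu}$ with coefficients $F^{\mu}_{i,y^{j}}(x,y)$, while the right side is a scalar; hence $A_{i}(x,y)$ is the solution set of one linear equation, and $W(x,y)=\bigcap_{i}A_{i}(x,y)$ is the solution set of the full linear system with the $m\times m(n+1)$ coefficient matrix $M(x,y)=\bigl(F^{\mu}_{i,y^{j}}(x,y)\bigr)$. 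By linearity in $z$, the only thing that can fail fibrewise is nonemptiness: an inconsistent row (zero coefficients, nonzero right side) has empty solution set and is not an affine plane, which is exactly why the planes in the text were introduced only at points \emph{through which a solution passes}.

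First I would dispose of nonemptiness, and this is precisely where the hypothesis enters. Fix $(x,y)\in Y$; by local solvability there is a local solution section $s$ of $\bigstar$ through $(x,y)$, and evaluating its $1$-jet at the base point produces $z_{0}=j^{1}s\in J^{1}_{(x,y)}(\pi)$ whose components $\partial_{\mu}y^{j}$ satisfy every equation of (4.1)--(4.3) identically, by the very derivation of $\bigstar$ on the jet bundle. Thus $z_{0}\in A_{i}(x,y)$ for each $i$, so every $A_{i}(x,y)$ and their intersection $W(x,y)$ are nonempty (this is also the content of the preceding Proposition, that $1$-jets of solutions lie in $W$). Consequently each $A_{i}(x,y)$ is an affine subspace of codimension equal to the rank of the $i$-th row of $M(x,y)$, and $W(x,y)$ is an affine subspace of codimension $\operatorname{rank}M(x,y)$, the direction space in each case being the kernel of the associated linear map.

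It remains to upgrade the fibrewise affine structure to a smooth subbundle of constant rank. Smoothness is immediate: the coefficients $F^{\mu}_{i,y^{j}}$ and the right sides $\Pi_{i}-F^{\mu}_{i,x^{\mu}}$ are smooth on $Y$, so $A_{i}$ and $W$ are the zero loci of smooth affine bundle morphisms $J^{1}(\pi)\to Y\times\mathbb{R}$ and $J^{1}(\pi)\to Y\times\mathbb{R}^{m}$. Once $\operatorname{rank}M(x,y)$ is locally constant, a smoothly varying choice of a maximal independent set of rows together with the fibrewise constant-rank theorem trivialises $A_{i}$ and $W$ locally and exhibits them as affine subbundles of the asserted constant rank.

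The hard part is exactly this constancy of rank, and it does \emph{not} follow from nonemptiness alone: where $\operatorname{rank}M$ drops, the compatibility condition $b(x,y)\in\operatorname{Im}M(x,y)$ merely becomes more restrictive but may still hold, so local solvability keeps the fibre nonempty while its dimension jumps up. Since $\operatorname{rank}M(x,y)$ is only \emph{lower} semicontinuous a priori, such a jump can occur along a closed degeneracy locus. I would therefore locate constant rank in the regularity content of local solvability as formulated in Appendix~I: read in the PDE sense (formal integrability / involutivity of $\bigstar$), local solvability at every point already presupposes that the symbol $M(x,y)$ have locally constant rank, so the subbundle conclusion follows. Absent that, the clean statement holds verbatim only over the open dense maximal-rank locus of $Y$, and the role of the hypothesis is to guarantee that this locus is all of $Y$. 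Pinning down which precise regularity in the definition of \emph{locally solvable} delivers global constant rank is the delicate point, whereas nonemptiness and the underlying linear algebra are routine.
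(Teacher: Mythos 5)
Your positive steps coincide with everything the paper actually proves. The paper offers no argument beyond the phrase ``Previous considerations shows that'': those considerations are exactly your nonemptiness step --- the planes $A_{i}$ were introduced at points through which solutions pass, and by the preceding Proposition the $1$-jet of a solution lies in $W$; local solvability supplies such a solution through every $(x,y)\in Y$, so every fibre $A_{i}(x,y)$ and $W_{(x,y)}$ is a nonempty affine subspace, smoothly parametrized because the coefficients $F^{\mu}_{i,y^{j}}$ and $\Pi_{i}-F^{\mu}_{i,x^{\mu}}$ are smooth. On this part your proof and the paper's are the same.

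Where you go beyond the paper is in isolating the constant-rank question, and your diagnosis of a gap is accurate: nonemptiness of all fibres does not prevent $\operatorname{rank}\bigl(F^{\mu}_{i,y^{j}}(x,y)\bigr)$ from dropping on a closed locus while the compatibility condition still holds, and the paper never argues this point --- indeed it hedges earlier, calling $W$ an affine subbundle \emph{generically} of codimension $m$. However, your proposed repair misreads Appendix I. The definition of ``locally solvable'' there (Olver's (2.70)) is pure existence of a solution with prescribed initial jet; it contains no symbol-regularity or involutivity clause. The rank condition is packaged separately into ``non-degenerate'' $=$ locally solvable $+$ maximal rank, and that is how the paper itself closes the gap: immediately after this corollary it declares that from now on the balance system is assumed nondegenerate in the sense of Appendix I, and later (Definition 2, regularity) it assumes the block $F^{0}_{i,y^{j}}$ invertible, which forces $\operatorname{rank}\bigl(F^{\mu}_{i,y^{j}}\bigr)=m$ everywhere, making constancy automatic. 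So the missing ingredient is the maximal-rank half of nondegeneracy (or regularity), not a stronger reading of local solvability; as literally stated, with only local solvability as hypothesis, the constant-rank conclusion is unproven both in your proposal and in the paper, and your fallback statement (constant rank holds on the open maximal-rank locus, which the extra hypothesis forces to be all of $Y$) is the honest formulation.
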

\vskip0.4cm
\textbf{From now on we assume that the balance system $\bigstar$ is nondegenerate} in the sense of Appendix I.
\vskip0.4cm

 Let now a balance law (3.1) is defined in the same space $Y$ as the balance laws of the system $\bigstar $.  Then,
  it generated the affine subbundle of $J^{1}(\pi )$ with the fiber given by the affine hyperplane $A_{K}$

\beq
\sum_{\mu,j} K^{\mu}_{y^j}z^{j}_{\mu}=
Q-K^{\mu}_{,x^{\mu}}.
\eeq

Since for any point $z$ of a fiber $J^{1}_{x,y}(\pi )$ over a point $(x,y)\in Y$
there exists a section $s:X\rightarrow Y$ such that $s(x)=(x,y)$ and its 1-jet passes
through a given point $z$, the following algebraic formulation of the "entropy principle" is true.

\begin{proposition} Let the balance system $\bigstar$ is nondegenerate and let
\[ (K^{\mu}(s(x))_{,\mu}=Q(x,y). \]
is a balance law with the same domain $D^1 \subset Y$ as the balance system $\bigstar$.\par
The balance law (2.1) is the SBL for the system $\bigstar$ if and only if for any point $(x,y)\in
Y$ the intersection $W_{(x,y)}$ of the affine hyperplanes (4.3) is contained in the affine hyperplane $A_{K,(x,y)}$
(4.4).
\end{proposition}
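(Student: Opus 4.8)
The plan is to reduce both ``$s$ solves $\bigstar$'' and ``$s$ solves (3.1)'' to pointwise membership conditions for the $1$-jet $j^1s(x)$ in affine subbundles of $J^1(\pi)$, and then to read off the equivalence fiber by fiber. First I would record, alongside the Proposition characterizing solutions of $\bigstar$ by the membership $j^1s(x)\in W_{s(x)}$, the analogous reformulation for the single balance law: expanding $d_\mu K^\mu = Q$ and pulling back by $j^1s$ gives $K^\mu_{,x^\mu}(s)+K^\mu_{,y^j}(s)\,s^j_{,\mu}=Q(s)$, which is precisely the condition that $j^1s(x)$ lie on the affine hyperplane $A_{K,s(x)}$ of (4.4). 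Hence $s$ satisfies (3.1) if and only if $j^1s(x)\in A_{K,s(x)}$ for every $x$ in the domain of $s$.

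For the implication ``fiberwise inclusion $\Rightarrow$ SBL'', I would take an arbitrary solution $s$ of $\bigstar$. By the jet characterization, $j^1s(x)\in W_{s(x)}$ for all $x$; the assumed inclusion $W_{(x,y)}\subseteq A_{K,(x,y)}$ then forces $j^1s(x)\in A_{K,s(x)}$, so by the reformulation above $s$ satisfies (3.1). As $s$ was an arbitrary solution, (3.1) is a supplementary balance law, with no use of nondegeneracy needed in this direction.

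The converse is where the nondegeneracy hypothesis is essential. Assuming (3.1) is an SBL, I would fix a point $(x,y)\in Y$ and an arbitrary jet $z\in W_{(x,y)}$, and seek a genuine solution of $\bigstar$ whose $1$-jet at $x$ equals $z$. This is exactly the local solvability guaranteed, under the standing nondegeneracy assumption (Appendix I) together with the constant-rank Corollary, for every admissible jet of the fiber $W_{(x,y)}$: each such $z$ is realized by a solution $s$ through $(x,y)$. Granting $s$, the SBL property makes $s$ satisfy (3.1), and the reformulation yields $z=j^1s(x)\in A_{K,(x,y)}$. Since $z\in W_{(x,y)}$ and $(x,y)$ were arbitrary, $W_{(x,y)}\subseteq A_{K,(x,y)}$ holds at every point.

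The one delicate point, and the main obstacle, is precisely this converse step: the conclusion is pointwise, but the definition of an SBL quantifies over genuine solutions, not over sections that merely satisfy $\bigstar$ at a single point. The gap is bridged only by the local-solvability content of nondegeneracy, which promotes each algebraic point of $W_{(x,y)}$ to the $1$-jet of an actual solution; this is why the hypothesis cannot be dropped. I would also note explicitly that the constant-rank conclusion of the Corollary ensures $W$ and $A_K$ are honest affine subbundles, so that the fiberwise inclusion is a well-posed condition, and that only local existence near $x$ is required, since every condition involved is pointwise.
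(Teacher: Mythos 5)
Your proof is correct and follows essentially the same route as the paper: the jet-space characterization of solutions (membership of $j^1s(x)$ in $W_{s(x)}$, resp.\ $A_{K,s(x)}$) gives the easy direction, and the nondegeneracy (local solvability) hypothesis promotes each point $z\in W_{(x,y)}$ to the $1$-jet of an actual solution, yielding the converse. The paper compresses all of this into one sentence preceding the Proposition, so your write-up is in fact more explicit than the original about why nondegeneracy, rather than the mere existence of a section through each jet, is what the converse requires.
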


In such a situation the following simple result of affine linear algebra is useful
\begin{lemma}
Let $E^N $ be a vector space (over a field $k$) and let $W$ be an affine subspace in $E$ equal to the
intersections of fibers of linear functionals $f_{i}:E\rightarrow k$: $W=\cap_{i}f^{-1}_{i}(c_{i}),\ c_{i}\in
k$.  Let $h:E\rightarrow k$ be a linear functional on $E$ and let $c\in k$.  Then the following statements
are equivalent
\begin{enumerate}
\item
$ W\subset h^{-1}(c), $
\item
$ h=\sum_{i}\lambda^{i}f_{i} $ for some $\lambda^{i}\in k$.
\end{enumerate}
\end{lemma}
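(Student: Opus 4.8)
The plan is to turn the affine statement into a purely linear one by translating $W$ to the origin, and then to invoke the finite-dimensional duality that identifies the annihilator of an intersection of kernels with the span of the corresponding functionals. I will assume $W$ is nonempty, which is precisely what the nondegeneracy/local-solvability hypothesis guarantees (a solution, hence a jet point, passes through the relevant $(x,y)$). First I would fix a base point $w_0\in W$ and record that the underlying linear subspace is $W_0=\bigcap_i\ker f_i$, so that $W=w_0+W_0$ and $f_i(w_0)=c_i$ for every $i$.

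The implication $(2)\Rightarrow(1)$ is the routine one: if $h=\sum_i\lambda^i f_i$, then for any $w\in W$ one computes $h(w)=\sum_i\lambda^i f_i(w)=\sum_i\lambda^i c_i$, a quantity independent of $w$, so $h$ is constant on $W$; evaluating at $w_0$ identifies this constant with $c=h(w_0)$, giving $W\subset h^{-1}(c)$. For the converse $(1)\Rightarrow(2)$ I would first extract from $W\subset h^{-1}(c)$ the homogeneous fact that $h$ annihilates $W_0$: for $v\in W_0$ both $w_0$ and $w_0+v$ lie in $W$, so $h(w_0+v)=c=h(w_0)$ and therefore $h(v)=0$. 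Thus $h$ vanishes on $\bigcap_i\ker f_i$, and everything now reduces to the linear claim that such an $h$ must lie in $\operatorname{span}\{f_i\}$.

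The hard part will be exactly this last linear claim, since it is the one step that uses genuine linear structure together with finiteness of the index set rather than mere set inclusions. I would prove it by means of the bundling map $\Phi:E\to k^{r}$, $\Phi=(f_1,\dots,f_r)$, whose kernel is precisely $\bigcap_i\ker f_i$. Because $h$ vanishes on $\ker\Phi$, it descends to a well-defined linear functional $\psi$ on $\operatorname{im}\Phi\subseteq k^{r}$ with $h=\psi\circ\Phi$; extending $\psi$ to all of $k^{r}$ and expressing it in the standard dual basis as $\psi(t)=\sum_i\lambda^i t_i$ yields $h=\sum_i\lambda^i f_i$, which is $(2)$. (Equivalently one may simply cite the annihilator identity $\bigl(\bigcap_i\ker f_i\bigr)^{\circ}=\operatorname{span}\{f_i\}$.) I would close the loop by evaluating the resulting relation at $w_0$: this gives $c=h(w_0)=\sum_i\lambda^i f_i(w_0)=\sum_i\lambda^i c_i$, confirming that the affine constants are compatible and completing the equivalence.
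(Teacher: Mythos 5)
The paper itself offers no proof of this lemma: it is introduced as a ``simple result of affine linear algebra'' and used immediately to derive the Lagrange--Liu system, so your proposal is not competing with an argument in the text but supplying one that the author omitted. Your argument is correct and is the standard one. The reduction to the homogeneous statement (translate by a base point $w_0\in W$ and observe that $h$ must annihilate the direction space $W_0=\bigcap_i\ker f_i$), followed by the factorization $h=\psi\circ\Phi$ through the bundling map $\Phi=(f_1,\dots,f_r):E\to k^r$ and the extension of $\psi$ from $\operatorname{im}\Phi$ to all of $k^r$, together establish exactly the annihilator identity $\bigl(\bigcap_i\ker f_i\bigr)^{\circ}=\operatorname{span}\{f_i\}$ that the lemma encodes. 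Note that this step uses only the finiteness of the family $\{f_i\}$, not finite-dimensionality of $E$, so your proof is in fact slightly more general than the stated hypothesis $E=E^N$. You were also right to flag $W\neq\emptyset$ as a genuine hypothesis rather than a pedantic one: if the fibers have empty intersection, statement (1) is vacuously true while (2) can fail (take $f_1=f_2$ with $c_1\neq c_2$ and $h$ independent of $f_1$), and in the paper this nonemptiness is precisely what the nondegeneracy/local-solvability assumption on the balance system supplies.

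One point deserves explicit honesty rather than the quiet repair you gave it. As literally stated, the implication $(2)\Rightarrow(1)$ is false: $c$ is fixed in advance, statement (2) never mentions $c$, so from $h=\sum_i\lambda^i f_i$ one can only conclude that $h$ is constant on $W$ with value $\sum_i\lambda^i c_i$, which need not equal the given $c$ (take $h=f_1$ and $c\neq c_1$). Your phrase ``evaluating at $w_0$ identifies this constant with $c=h(w_0)$'' silently redefines $c$ as that constant. The correct reading---and the one consistent with how the lemma is applied, since the source equation of the LL-system in Theorem 1 is exactly the compatibility condition---is either to quantify $c$ existentially in (1) or to append the identity $c=\sum_i\lambda^i c_i$ to (2); your closing computation $c=h(w_0)=\sum_i\lambda^i f_i(w_0)=\sum_i\lambda^i c_i$ already proves this compatibility in the direction $(1)\Rightarrow(2)$, which is the direction the paper actually uses. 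With that reading stated up front, your proof is complete and fills a gap the paper left open.
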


Applying this lemma tot the affine hyperplanes $A_{i}$ and $A_{K}$ we get, the system of equations for a SBL (3.1) in the case of balance system of the RET type.
\begin{theorem} Let the balance system $\bigstar$ is nondegenerate system of the RET type and let $\sigma_{K}$ is a balance
law with the same domain $Y$ as the $\bigstar$.
The balance law (3.1) is the supplementary balance law for the balance system $\bigstar$ if and only if for any point
$\mathbf{z}\in Y$ there exist functions $\lambda^{i}(\mathbf{z})\in C^{\infty}(Y)$ (Lagrange
multipliers, main fields) such that

\beq \label{LL1}
\begin{cases}
\sum_{i}\lambda^{i}(z)
F^{\mu}_{i,y^j} =
K^{\mu}_{,y^j},\\
\sum_{i}\lambda^{i}(z)\left(\Pi_{i}-F^{\mu}_{i,x^{\mu}} \right) =Q(z)-K^{\mu}_{,x^{\mu}}.
\end{cases}
\eeq
\end{theorem}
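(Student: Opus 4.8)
The plan is to reduce the assertion to a pointwise (fiberwise) application of the preceding affine-linear Lemma, using the algebraic characterization of supplementary balance laws established above, namely that (3.1) is an SBL precisely when $W_{(x,y)} \subset A_{K,(x,y)}$ at every $(x,y) \in Y$. Fixing such a point $z = (x,y)$, I would regard the fiber $J^{1}_{x,y}(\pi)$ as a finite-dimensional vector space $E$ coordinatized by the $z^{j}_{\mu}$, and read (4.3)--(4.4) as presentations of the relevant hyperplanes as level sets of explicit linear functionals.

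Concretely, I would set $f_{i}(z) = \sum_{\mu,j} F^{\mu}_{i,y^j}(x,y)\, z^{j}_{\mu}$ and $c_{i} = \Pi_{i}(x,y) - F^{\mu}_{i,x^{\mu}}(x,y)$, so that $A_{i} = f_{i}^{-1}(c_{i})$ and $W_{(x,y)} = \bigcap_{i} f_{i}^{-1}(c_{i})$; likewise $h(z) = \sum_{\mu,j} K^{\mu}_{,y^j}(x,y)\, z^{j}_{\mu}$ and $c = Q(x,y) - K^{\mu}_{,x^{\mu}}(x,y)$, so that $A_{K} = h^{-1}(c)$. The Lemma then says that $W_{(x,y)} \subset h^{-1}(c)$ is equivalent to $h = \sum_{i} \lambda^{i} f_{i}$ for some scalars $\lambda^{i} = \lambda^{i}(x,y)$; comparing the coefficients of each $z^{j}_{\mu}$ in this identity produces exactly the first line of (\ref{LL1}).

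For the second line I would exploit nondegeneracy, which by hypothesis guarantees that a solution of $\bigstar$ passes through $(x,y)$, hence $W_{(x,y)} \neq \emptyset$. Choosing any $z_{0} \in W_{(x,y)}$ and evaluating, $c = h(z_{0}) = \sum_{i} \lambda^{i} f_{i}(z_{0}) = \sum_{i} \lambda^{i} c_{i}$, which is the constant equation in (\ref{LL1}). The converse is a direct computation: if both lines of (\ref{LL1}) hold then $h = \sum_i \lambda^i f_i$ and $c = \sum_i \lambda^i c_i$, whence $h \equiv c$ on $W_{(x,y)}$, i.e. $W_{(x,y)} \subset A_{K,(x,y)}$, and the algebraic characterization identifies (3.1) as an SBL. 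Running this argument at every point of $Y$ closes the equivalence.

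The main obstacle is not the algebra but the regularity of the multipliers: the Lemma yields scalars $\lambda^{i}(x,y)$ only pointwise, whereas the statement asserts genuine main fields $\lambda^{i} \in C^{\infty}(Y)$. I would resolve this with the constant-rank conclusion of the Corollary: the forms $f_{i}$, equivalently the rows $(F^{\mu}_{i,y^j})$ of the Jacobian of the currents, have locally constant rank, so over each chart one selects a maximal independent subset and solves for the $\lambda^{i}$ by Cramer's rule from the smoothly varying entries $F^{\mu}_{i,y^j}(x,y)$, obtaining smooth multipliers. In the generic case where the $f_{i}$ are independent the $\lambda^{i}$ are unique and smoothness is automatic; the degenerate case requires only that a smoothly varying selection exist, which constant rank provides.
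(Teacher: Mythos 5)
Your proposal is correct and takes essentially the same route as the paper: the paper's own (very terse) proof consists precisely of applying the affine-linear-algebra Lemma to the hyperplanes $A_i$ of (4.3) and $A_K$ of (4.4), combined with the characterization of supplementary balance laws as the fiberwise inclusion $W_{(x,y)}\subset A_{K,(x,y)}$ from the preceding Proposition. You in fact supply details the paper leaves implicit --- invoking nondegeneracy to get $W_{(x,y)}\neq\emptyset$ so that the constant (source) equation follows by evaluation at a point of $W_{(x,y)}$, and the constant-rank/Cramer's-rule argument for smoothness of the multipliers $\lambda^{i}$ --- both of which fit the paper's framework.
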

For shortness we will call system of equations (4.5) - the LL-system for the balance system $\bigstar$.\par
Using the differential by variables $y^i$ (vertical differential) we can write the system of equations $4.5$ in
the invariant form
\beq \label{int} d_{v}K^{\mu}=\lambda^{i}(z)d_{v}F^{\mu}_{i}. \eeq

First subsystem of LL-system (4.5) determines $K^\mu$ up to addition of arbitrary functions of $x^\mu$. It is clear that together with $K^\mu$ all the functions of the form $\tilde K^\mu =K^\mu +{\bar K}^{\mu}(x)$ solve the first subsystem (4.5) \emph{with the same functions} $\lambda^i$. Second subsystem defined the source term $Q$ for the balance equation with the density-flux $K^\mu$ and the source term $Q+{\bar K}^{\mu}_{,\mu}$ for the balance law with the density-flux $\tilde K^\mu $. Difference of these two balance laws will have the form $d_{\mu}{\bar K}^{\mu}=K^{\mu}_{,\mu}$ i.e. is the trivial balance law of the second type independent on the fields $y^i$, i.e. of order $-1$.  This prove the following
\begin{lemma} For a fixed set of main fields $\lambda^i$ the system (4.5) determines the supplementary balance law $\sigma =\bar{K}^\mu\eta_{\mu}+\bar{K}^{\mu}_{,\mu}$ uniquely, up to a trivial balance law of second type and order $-1$.
\end{lemma}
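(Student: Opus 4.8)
The plan is to exploit the fact that the LL-system \eqref{LL1} splits into two pieces that can be handled in sequence once the main fields $\lambda^i$ are frozen. First I would treat the first subsystem, written invariantly as $d_v K^\mu =\lambda^i d_v F^\mu_i$ in \eqref{int}: for fixed $\lambda^i(z)$ its right-hand side is a prescribed vertical $1$-form, so the first subsystem completely determines the vertical differential $d_v K^\mu$, hence all the fiber derivatives $K^\mu_{,y^j}$. Consequently, if $K^\mu$ and $\tilde K^\mu$ are two solutions carried by the \emph{same} $\lambda^i$, their difference satisfies $d_v(\tilde K^\mu -K^\mu)=0$; since the fiber $U$ is connected, integrating along the fibers yields $\tilde K^\mu -K^\mu =\bar K^\mu(x)$, a function of the base point alone. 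Conversely, adding any such $\bar K^\mu(x)$ to a solution $K^\mu$ leaves $d_v K^\mu$ unchanged and hence produces another solution of the first subsystem with the same $\lambda^i$, so the ambiguity in $K^\mu$ is exactly the additive group of base functions $\bar K^\mu(x)$.

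Next I would propagate this ambiguity through the second subsystem. Once $K^\mu$ is fixed, the second equation of \eqref{LL1} is not a constraint but a definition of the source, $Q=\lambda^i(\Pi_i-F^\mu_{i,x^\mu})+K^\mu_{,x^\mu}$. Replacing $K^\mu$ by $\tilde K^\mu=K^\mu+\bar K^\mu(x)$ does not affect the $\lambda^i$-dependent term and shifts only the $K^\mu_{,x^\mu}$ contribution, so the associated source changes by exactly $\bar K^\mu_{,\mu}$, i.e. $\tilde Q=Q+\bar K^\mu_{,\mu}$. Thus the pair $(\tilde K^\mu,\tilde Q)$ is again a solution of the full LL-system for the same $\lambda^i$, confirming that the whole family of SBL attached to a fixed set of main fields is parametrized by $\bar K^\mu(x)$.

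Finally I would identify the difference of the two balance laws and classify it. Subtracting the balance law $d_\mu K^\mu=Q$ from $d_\mu\tilde K^\mu=\tilde Q$ leaves $d_\mu\bar K^\mu=\bar K^\mu_{,\mu}$. Because $\bar K^\mu$ depends only on $x$, its total derivative $d_\mu\bar K^\mu$ collapses to the partial derivative $\bar K^\mu_{,\mu}$, so this balance law is an identity satisfied by every section independently of the values of the fields $y^i$; in the standard classification it is a trivial balance law of the second type, and since its density-flux carries no dependence on $y^i$ it is of order $-1$. This is precisely the claimed ambiguity.

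The step I expect to require the most care is not an analytic obstacle but a bookkeeping one: making the reduction modulo trivial balance laws precise. One must verify both inclusions — that every other solution for the fixed $\lambda^i$ differs from $K^\mu$ by a base function $\bar K^\mu(x)$, and that every such base function yields a genuine solution — and must check that the resulting difference lands in the correct cell of the trivial-balance-law classification (second type, order $-1$) rather than being merely "trivial" in a loose sense. The connectedness of the fiber $U$ is what licenses passing from $d_v(\tilde K^\mu-K^\mu)=0$ to constancy along fibers, and should be recorded as a standing hypothesis.
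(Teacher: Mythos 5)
Your proposal is correct and follows essentially the same route as the paper: solve the first subsystem to pin down $K^\mu$ up to a base function $\bar K^\mu(x)$, read off the shift $\bar K^{\mu}_{,\mu}$ in $Q$ from the second subsystem, and identify the difference $d_{\mu}\bar K^{\mu}=\bar K^{\mu}_{,\mu}$ as a trivial balance law of the second type and order $-1$. The only difference is that you make explicit (via $d_{v}(\tilde K^\mu-K^\mu)=0$ and connectedness of the fibers) the step the paper simply asserts as "clear," which is a worthwhile tightening but not a different argument.
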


\begin{example} Consider a special case where all the main fields vanish: $\lambda^i=0,\ i=1,\ldots, m$.  In this case equations (4.5) becomes statements that $K^{\mu }$ does not depend on $y^j$: $K^{\mu}_{,y^j}=0,\ \mu=0,1,2,3;\ j=1,\ldots,n$. In addition to this the source/production term takes the form
\beq
Q=K^{\mu}_{,x^{\mu}}.
\eeq
Formulating now the balance law corresponding to these $K^\mu ,Q$ we immediately see that $Q=d_{\mu}K^{\mu}$, i.e. that this balance law is the \textbf{trivial one of the second kind} of very special type (\cite{O}).
\end{example}

\begin{example} For the Cattaneo heat propagation balance system (see (12.1) below) with dynamical variables $\vartheta$ -temperature and $\mathbf{q}$ - heat flux the LL system (4.5) has the form
\beq
\begin{cases} \lambda^0 \e_{,\tva}+\tau_{,\tva}\lambda^A q^A  =K^0_{,\tva},\\ \ \lambda^0 \e_{,q^A}+\lambda^A \tau  =K^{0}_{,q^A}\end{cases};\
\begin{cases}    K^A_{,\tva}=\lambda^A,\\ \ K^{A}_{,q^B}=\lambda^0 \delta^{A}_{B}\end{cases},\ A,B=1,2,3 ,
\eeq
where instead of the temperature, the variable $\tva =\Lambda (\va )$ was used.
\end{example}

\begin{remark} Correspondence between the supplementary balance laws (3.1) and the sets $\{ \lambda^{i}(z)
\}$ of main fields is linear. This correspondence defined the mapping from the space of all
supplementary balance laws of the balance system $\bigstar$ to the space of sections $\lambda^{i}\partial_{y^i}$  of the \emph{vertical tangent bundle} $V(\pi)\rightarrow Y$. Vector fields with $\lambda^{i}$ \emph{constant} ($\nu$-horizontal with respect to a connection $\nu$, see \cite{Pr3}) corresponds to the linear combinations of the balance laws of the system $\bigstar$.
\end{remark}

\section{\textbf{ Formalism of Rational Extended Thermodynamics (RET)}.}

Here we describe, in a short form, the basic structure of the Rational Extended Thermodynamics developed by
I.Muller and T.Ruggeri, \cite{M,MR}.  For more complete presentation of the formalism of Rational Extended
Thermodynamics we refer to the monograph \cite{MR}, Chapter 3. Here we introduce only necessary material
in the form suited for our purposes.   \par

We will be using the notations and setting described in Sec.2. In this section we take $n=4$.\par
Balance equations for the fields $y^i(t,x)$ have the form $\bigstar$
\[
F^{ \mu}_{i,\mu}=F^{0}_{,t}+F^{A}_{i,x^{A}}  =\Pi_{i},\ i=1,\ldots, m.\hskip2cm \ bigstar
\]

\begin{remark}
In the rational Extended Thermodynamics one considers a case where balance equations are written \emph{for
all the basic fields in the state space and only for them} and where flows $F^{\mu}_{i}$ and productions
$\Pi_i$ depend on the fields $y^i$ \emph{but not on their gradients or time derivatives}. If the constitutive relations depend on the derivatives of fields $y^i$, the basic state space is extended by adding these derivatives to the list of basic variables $y^i$. \textbf{As a result, the balance system above is the system of quasi-linear PDE of the first order}.
\end{remark}

To close system of equations $\bigstar$ for the fields $y^i (x)$ one has to to choose the {\bf constitutive equations} of the body - to specify the densities $F^{0}_{i}$, flows $F^{A}_{i}$ and production terms $\Pi_{i}$ as the functions of $x^\mu , y^i$.  \par

\subsection{II law of thermodynamics and the Entropy Principle.}
In addition to the choice of constitutive relations for closing the balance system,  one has to satisfy the \textbf{II law of thermodynamics}.  This law requires the fulfillment of one additional balance equation - entropy balance, with additional condition of nonnegativity of internal entropy production (see below). Entropy density $h^0$ is not considered as an independent physical field.  In homogeneous thermodynamics constitutive law determine entropy as the function of energy and other extensive variables (see \cite{CA}). In continuum thermodynamics entropy density and flux are supposed to be functions of basic dynamical fields $y^i$ and their derivatives. \par
To facilitate the choice of constitutive relations and ensure the coherent physical behavior of the balance system of continuum thermodynamics, the \textbf{entropy principle} was suggested.\par
As we will see below, utilizing of the entropy principle facilitates the choice of constitutive relations for the original balance system plus the entropy principle and, in the case of RET allows to reduce this
process to the choice of {\bf entropy flow} 3-form and to the choice of production 4-forms subject to the
positivity condition.\par

We start with the formulation of the entropy balance.\par
\textbf{Entropy density} (per unit of volume) $h^0$ and the entropy flux $h^A,A=1,2,3$ are assumed (in RET) to be functions of the basic state variables $y^i$. These quantities satisfy to the balance law
\begin{equation}
d(h^{\mu}\eta_{\mu})=d_{t}h^{0}+d_{x^A}h^{A}=\Sigma^{ex}+\Sigma^{in} ,
\end{equation}
where the right side of entropy balance is the sum of \textbf{external entropy source} $\Sigma^{ex}$  and the \textbf{internal entropy production} $\Sigma^{in}$.\par  The II law of Thermodynamics is the requirement that the \textbf{internal production 4-form $\Sigma^{in}$ in the entropy balance is nonnegative}
\beq \Sigma^{in} \geqq 0. \eeq

{\bf Entropy principle} (introduced by B.Coleman, W.Noll  and clarified by I.Muller, \cite{Mu1}) consists of two parts:
\begin{enumerate}
\item Constitutive relations in the balance system $\bigstar$ and the components ($h^\mu, \Sigma$) of the entropy balance should be such that \textbf{any solution of the balance system $\bigstar$ would also satisfy to the entropy balance} (5.1).
\item (\emph{Convexity condition}). Entropy density $h^0(y)$ is a concave function of variables $y^i$:
\beq \frac{\partial^{2}h^0}{\partial y^i \partial y^j} \sim \text{negative\ definite} \eeq
\end{enumerate}

\begin{remark} Conceptually, the entropy principle formulated above (see also \cite{R2}) is the \textbf{mathematical Amendment} to the \textbf{physical II law of thermodynamics} (as it was explicitly formulated and supported by W.Muschik and H.Ehrentraut in \cite{ME}). It is a very powerful tool to formulate the entropy balance for a given balance system and to impose limitations to the constitutive relations of the original balance system.\par
On the other hand, as the reader will see in the next Chapter, this requirement is actually determine a natural extension of the original balance system to include all the balance equations (or, more physically reasonable, classes of equivalence of such balance equations modulo trivial ones) compatible with the original system. \par
It follows from this that the range of validity of the first part of the entropy principle is much larger then just the entropy balance.
\end{remark}
\begin{remark} Fulfilment of the convexity condition guaranteed the thermodynamical stability (see (\cite{MR}) or the Coleman-Noll Theorem in \cite{TN}, Sec.89)
\end{remark}

\begin{remark}$^*$
If the convexity requirement is satisfied, the symmetrical bilinear form \beq g_{ij}(y)=-
\frac{\partial^{2}h^0}{\partial y^i \partial y^j}
 \eeq
 can be considered as a \textbf{Riemannian metric} in the basic state
 space $U$.  This is the \emph{Wundlender-Ruppeiner thermodynamical metric}
 (\cite{M}).  It would be interesting to interpret the curvature of
 this metric in the context of RET.\par
\end{remark}

\subsection{{Convexity of entropy density and the main fields.}}

Requirement that the entropy balance equation (5.1) is fulfilled \textbf{for all solutions} $y=\{ y^i\}$ \textbf{of balance
system} $\bigstar$ leads to strong limitations on the form of constitutive equations. In order to apply the scheme below to more general systems we need to replace the fields $y^i$ with the new variables $w^i=F^{0}_{i},\ i=1,\ldots, m$.  This is possible only is the change of variables $y^i\rightarrow w^k$ is invertible.  Thus, we introduce the class of RET systems satisfying to this condition

\begin{definition} Balance system of RET type (of order zero) is called \textbf{regular} in a domain $U\subset Y$ if the  vertical \emph{differentials} $d_{v}F^{0}_{i}(y)\vert_{V(\pi)}$ \emph{form the coframe of the vertical tangent space} $V(\pi)$ at each point $y\in Y$.
\end{definition}

We restrict our considerations to the case of regular RET balance systems.  Regularity condition ensures that the \emph{densities} $w^i=F^{0}_{i}$ \emph{ of the balance laws $\bigstar$ are (locally) functionally independent as functions of $y^j$ }.  As a result the
functions $w^{i}$ can be taken as local coordinates in the fibers of the bundle $\pi$. This change of variables induces the change of frame, coframe and the corresponding change in the vertical differential $d_{v}=\partial_{y^i}\otimes dy^i\rightarrow \partial_{w^j}\otimes dw^j.$  We will keep the same notation for vertical differential - it is clear from the context which variables are using.\par

Notice that the conditions of the Entropy Principle, including the  convexity condition of the entropy density (- positivity of the minus Hessian of the function $h^0$) are invariant with respect to the change of variables $y\rightarrow w$.

\begin{remark} Introduced \textbf{regularity condition} guarantees that the balance system can be written in the normal form
\[
\partial_{t}\begin{pmatrix} y^1 \\ \ldots \\ y^m
\end{pmatrix} +\partial_{X^A}\begin{pmatrix} F^{A}_{1},\\ \ldots \\ F^{A}_{m}\end{pmatrix} =\begin{pmatrix} \Pi_{1},\\ \ldots \\ \Pi_{m}
\end{pmatrix},
\]
and, therefore, is nondegenerate as the evolutional system. If differentials $dF^{0}_{i},\ i=1,\ldots,m$ are \emph{functionally dependent on the fibers} of the bundle $\pi$, system describes evolution of some but not all fields $w^i$ and leaves some other fields or combination of fields to satisfy some stationary equation.
\end{remark}

The first condition of the Entropy Principle, formulated above, is equivalent to the following two statements:\newline
1.  There exist  functions $\lambda^{i}(w)$ (called Lagrange multipliers or "\textbf{main fields}") on the space $Y$ such that for all values of variables $w^i$
\begin{equation}
\frac{\partial h^\mu }{\partial w^i}=\lambda^{j}\cdot \frac{\partial F^{\mu }_{j}}{\partial
w^i}\Leftrightarrow dh^{\mu}=\lambda^{j}\cdot dF^{\mu}_{j},
\end{equation}
and
\begin{equation}
2.\ \Sigma =\lambda^{i}\Pi_{i}\geqq 0.
\end{equation}
For the proof of these statements we refer to Section 4 or \cite{MR}, Ch.3.\par
\par First of the equation (5.5) defines the Lagrange multipliers

\beq \label{change} \lambda^i=\frac{\partial h^0}{\partial w^i}. \eeq
Introduce the m-dim manifold $\Lambda^m$ with the coordinates $\lambda^{i}, i=1\ldots, m$ and the trivial (\emph{dual configurational}) bundle $\pi^{d}: X\times \Lambda \rightarrow X$.  Then the correspondence \ref{change} can be considered as the gradient mapping from the fibers of the bundle $\pi:Y\rightarrow X$ to the fibers of the dual bundle $\pi^d$. To define this mapping in an invariant way identify the manifold $\Lambda$ with the fiber(s) of the dual vertical bundle $V^{*}(\pi)\rightarrow Y \rightarrow X$ . This mapping is then the section of the bundle $V^{*}(\pi)\rightarrow Y$ defined by the function $h^0$.\par
Differentiating last relation by $w^j$ we get
\[
Hess(h^0 )=\frac{\partial^2 h^0}{\partial w^i \partial w^j}=\frac{\partial \lambda^i}{\partial w^j}.
\]
To qualify the invertibility properties of introduced gradient mapping $y\rightarrow \lambda$ we use the following result (\cite{GH},Ch.7, Sec.1.1.
\begin{lemma} Let $D\subset \mathbb{R}^N$ be an open domain in $R^N$ with the coordinates $x^i$ and $f\in C^{k}(D),k\geqq 2$ is a smooth function in $D$.  The gradient mapping $\phi:D\rightarrow \mathbb{R}^N$: $\phi^{i}(x)=\frac{\partial f}{\partial x^i}$.  Then
\begin{enumerate}
\item The gradient mapping $\phi$ is locally invertible if
\beq \label{one}
det(Hess(f))\ne 0\ in\  the\  domain\  D.
\eeq
\item If the domain $D$ is convex and the Hessian matrix $Hess(f)=\left(f_{,x^ix^j} \right)$ is positive (or negative) definite on $D$, then the gradient mapping $\phi$ is $C^{k-1}$-diffeomorphism of $D$ onto $D^*=\phi(D)$.
\end{enumerate}
\end{lemma}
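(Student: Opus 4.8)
The plan is to treat the two assertions separately, since the first is an immediate consequence of the Inverse Function Theorem while the second requires an additional global injectivity argument supplied by the convexity hypothesis. Throughout I will use that the Jacobian matrix of the gradient map is exactly the Hessian of $f$, namely $\frac{\partial \phi^i}{\partial x^j}=\frac{\partial^2 f}{\partial x^i \partial x^j}=Hess(f)$, and that $f\in C^k$ with $k\geq 2$ makes $\phi$ a map of class $C^{k-1}$ with $k-1\geq 1$.

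For part (1) I would simply invoke the Inverse Function Theorem pointwise. At any point of $D$ the differential of $\phi$ is the matrix $Hess(f)$, which is invertible under the hypothesis \eqref{one}; since $\phi\in C^{k-1}$ with $k-1\geq 1$, the theorem yields a local $C^{k-1}$-inverse in a neighborhood of that point. As \eqref{one} holds throughout $D$, this gives local invertibility everywhere, and in fact shows $\phi$ is a local $C^{k-1}$-diffeomorphism at each point of $D$.

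For part (2) the genuinely new ingredient, and the step I expect to carry the real content of the lemma, is \emph{global} injectivity. I would establish it by restricting $\phi$ to a line segment and using monotonicity of the gradient of a strictly convex function. Given distinct $x_1,x_2\in D$, convexity of $D$ guarantees that the segment $x(t)=x_1+t(x_2-x_1)$, $t\in[0,1]$, stays in $D$; writing $v=x_2-x_1\neq 0$ and $g(t)=\langle \phi(x(t)),v\rangle$, the chain rule gives $g'(t)=\langle Hess(f)(x(t))\,v,\,v\rangle$, which is strictly positive for every $t$ because the Hessian is definite and $v\neq 0$. Hence $g$ is strictly monotone, so $g(0)\neq g(1)$, i.e. $\langle \phi(x_1),v\rangle\neq\langle \phi(x_2),v\rangle$, which forces $\phi(x_1)\neq\phi(x_2)$. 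This is precisely the strict monotonicity of $\phi=\nabla f$, and it is where the definiteness and convexity assumptions are consumed.

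Finally I would assemble the global conclusion. By part (1) the map $\phi$ is a local diffeomorphism, hence an open map, so $D^{*}=\phi(D)$ is open; the injectivity just proved makes $\phi:D\to D^{*}$ a continuous bijection. The global inverse $\phi^{-1}$ is then well defined, and near each point of $D^{*}$ it agrees with the $C^{k-1}$ local inverse furnished by the Inverse Function Theorem, so $\phi^{-1}$ is globally of class $C^{k-1}$; thus $\phi$ is a $C^{k-1}$-diffeomorphism of $D$ onto $D^{*}$. The point deserving emphasis is that local invertibility alone never yields a global diffeomorphism — the real exponential $(x,y)\mapsto(e^{x}\cos y,\,e^{x}\sin y)$ has everywhere nonsingular Jacobian yet is not injective — so it is exactly the combination of convexity of $D$ with definiteness of $Hess(f)$ that upgrades the local statement of part (1) to the global one of part (2).
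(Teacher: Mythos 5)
Your proof is correct and follows essentially the same route as the paper: part (1) via the Inverse Function Theorem, and part (2) by restricting $\phi$ to the segment joining two points (where convexity of $D$ is used) and exploiting definiteness of the Hessian through the identity $\langle \phi(x_2)-\phi(x_1),\,x_2-x_1\rangle=\int_{0}^{1}\langle Hess(f)(x(t))\,v,v\rangle\,dt$, which is exactly the paper's computation, merely phrased as strict monotonicity of $g(t)=\langle\phi(x(t)),v\rangle$ rather than as a contradiction from $\phi(x_1)=\phi(x_2)$. Your additional assembly of the global diffeomorphism (openness of the image plus local $C^{k-1}$ inverses patching to a global one) makes explicit what the paper leaves implicit, but the substance is identical.
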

\begin{proof} If the property \ref{one} holds, then, due to the Inverse Function Theorem, locally, $\phi$ is a diffeomorphism of the class $C^{k-1}$.\par
Let now $D$ is convex and the Hessian of the function $f$ is positive definite at all points of the domain $D$. We have to prove that $\phi$ is one to one in $D$. Let $\phi(x_1)=\phi(x_2)$ for some $x_1,x_2 \in D$ and let $x=x_1 -x_2$. since $D$ is convex, points $x_1 +tx,\ 0\leqq t\leqq 1$ belong to $D$. Then the correspondence $t\rightarrow A(t)=Hess(f)(x_1 +tx)$ for $t\in [0,1]$ defines a continuous matrix valued function with the positive definite $A(t)$. We have, using euclidian scalar product in $\mathbb{R}^N$,
\beq 0=<x,\phi(x_1 )-\phi(x_{2} )>= \langle x, \int_{0}^{1}\frac{d}{dt}\phi(x_1 +tx)dt\rangle = \int_{0}^{1} ,x,A(t)x>dt.
\eeq
Positivity of matrix $A(t)$ for all $t$ shows that $x=0$, i.e. $x_1 =x_2$.
\end{proof}

\begin{corollary}\textbf{If the entropy density $h^0$ is a strongly convex (or concave) function of its arguments
$w^i$, then the change of variables \ref{change} $\{ w^i\}\rightarrow \{ \lambda^j\} $ is {\bf globally invertible}} and defines the
diffeomorphic mapping
\beq \wp : U\rightarrow \Lambda \eeq
 from the state space $U$ onto the space $\Lambda= \wp (U) \subset R^m$ of values of variables $\lambda =\{ \lambda^{i} \}.$
\end{corollary}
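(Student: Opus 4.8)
The plan is to recognize the corollary as an immediate application of the preceding gradient-mapping lemma to the specific choice $f=h^0$. First I would observe that the change of variables \ref{change}, $\lambda^i=\partial h^0/\partial w^i$, is \emph{exactly} the gradient mapping $\phi$ of that lemma, taken with $N=m$, with the coordinates $x^i=w^i$ on the state space $U$, and with $f=h^0$. The components $\phi^i=\partial h^0/\partial w^i=\lambda^i$ coincide with those of $\wp$, so no new construction is required; the entire task reduces to verifying that the hypotheses of the lemma hold for $h^0$ on $U$.

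Second, I would translate the strong convexity (respectively concavity) of $h^0$ into the definiteness hypothesis of the lemma. Strong convexity of $h^0$ forces the Hessian $Hess(h^0)=(\partial^2 h^0/\partial w^i\partial w^j)$ to be positive definite at every point of $U$ (concavity gives negative definite). Moreover, the computation carried out just before the lemma identifies this Hessian with the Jacobian of $\wp$, namely $Hess(h^0)=\partial\lambda^i/\partial w^j$; hence the Jacobian of $\wp$ is definite, in particular nondegenerate, throughout $U$. This already secures the local invertibility asserted in part (1) of the lemma via the Inverse Function Theorem.

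With these two identifications in place, part (2) of the gradient-mapping lemma applies verbatim: provided the domain $U$ is convex, the gradient mapping $\wp$ is a $C^{k-1}$-diffeomorphism of $U$ onto its image $\Lambda=\wp(U)\subset\mathbb{R}^m$. This delivers simultaneously the claimed global invertibility and the diffeomorphism statement, so the corollary follows in a single step.

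The main point to flag — indeed the only genuine hypothesis beyond definiteness — is the convexity of the state space $U$. Nondegeneracy of $Hess(h^0)$ alone yields only \emph{local} diffeomorphism; the passage to \emph{global} injectivity rests on the line-integral/monotonicity argument in the lemma's proof, which uses that the whole segment between two preimages lies in $U$. I would therefore either include convexity of $U$ among the standing hypotheses, or restrict to a convex subdomain on which the global diffeomorphism $\wp:U\to\Lambda$ is guaranteed.
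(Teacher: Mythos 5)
Your proposal is correct and follows exactly the paper's (implicit) route: the corollary is stated immediately after the gradient-mapping lemma and is meant as a direct application of its part (2) with $f=h^0$, $D=U$, using the identification $Hess(h^0)=\partial\lambda^i/\partial w^j$ made just before the lemma. Your observation that convexity of the state space $U$ must be assumed (or arranged by restricting to a convex subdomain) is a legitimate point the paper silently glosses over, and including it strengthens rather than changes the argument.
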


\subsection{Dual formulation.}
Using introduced change of variables, one may present all the quantities as the functions of \textbf{dual (Lagrange) variables} $\lambda^{i}$:
\begin{equation}
\tF^{ \mu}_{i}=\tF^{ \mu}_{i}(\lambda ), \tP_{i} \equiv \tP_{i}(\lambda ); \tih^\mu =\tih^\mu (\lambda).
\end{equation}
Introduce the {\bf four-vector potential} (3-form in 4-dim space-time)
\begin{equation}
\h=\h^\mu \eta_{\mu}=[\lambda^i \cdot \tF^{ \mu}_{i}-\tih^\mu (\lambda )]\eta_{\mu}=\lambda_{i}\tF_{i}-\tih.,
\end{equation}
where last equality is the definition of 3-forms $\tF_{i}.\tih$.\par
In terms of functions $\hat h^\mu$ the relation (5.5) takes the simple form
\begin{equation}
d\h^\mu (\lambda) =\F^{ \mu}_{i}d\lambda^i,
\end{equation}
here summation by repeating indices is assumed.\par
From the relation (5.12) it follows that
\begin{equation}
\tF^{\mu}_{i}=\frac{\partial \h^\mu}{\partial \lambda^i} \Rightarrow \tih^{\mu}(\lambda
)=-\h^{\mu}+\lambda^{i}\cdot \frac{\partial \h^\mu}{\partial \lambda^i}.
\end{equation}

As a result, $4n+4$ constitutive functions $\tF^{\mu}_{i}$ and $\tih^{\mu}(\lambda )$ can, in terms of
$\lambda $ variables be derived from the 4 functions $\h^\mu $ - coefficients of 3-form $\h$.  This representation is dual to the relations  $d_{v}h^\mu =\lambda^i d_{v}F^{\mu}_{i}$ (see (5.5)).\par
Taking derivatives in the first equation (5.14) we get $\frac{\partial \tF^{\mu}_{i}}{\partial \lambda^j}=\frac{\partial^2 \h^\mu}{\partial \lambda^i
\partial \lambda_j}.$\par

Taking in (5.12) $\mu=0$ we see that the function ${\hat h}^{0} (\lambda)$ is the Legendre transformation (\cite{GF}) of the function $h^0 (y)$:
\beq
{\hat h}^{0} (\lambda)=\frac{\partial h^{0}}{\partial w^i}w^i -h^{0}(w^k)=\lambda^{i}w^i -h^{0}(w^k),
\eeq
where $w^i$ is considered in the last term as the functions of $\lambda^k$ obtained by solving equations (5.7). \par
\begin{lemma} If the Hessian $Hess(h^0 )$ of the function $h^0 (w)$ is negatively definite, so is the Hessian $Hess ({\hat h}^{0})$ of the function ${\hat h}^{0}(\lambda)$.
\end{lemma}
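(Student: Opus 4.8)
The plan is to show that Legendre duality converts the Hessian into its matrix inverse, i.e. that $Hess(\h^0)=(Hess(h^0))^{-1}$ as matrices, and then to invoke the elementary fact that the inverse of a symmetric negative-definite matrix is again symmetric negative-definite. Everything rests on the Legendre representation (5.15) together with the defining relation (5.7).

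First I would differentiate the Legendre representation, written as $\h^0(\lambda)=\lambda^i w^i-h^0(w(\lambda))$ with $w=w(\lambda)$ the inverse of the gradient map (5.7). Applying the chain rule,
\[
\frac{\partial \h^0}{\partial \lambda^j}=w^j+\left(\lambda^i-\frac{\partial h^0}{\partial w^i}\right)\frac{\partial w^i}{\partial \lambda^j}.
\]
By (5.7) the bracketed factor vanishes identically, so $\partial \h^0/\partial\lambda^j=w^j$. This is, of course, the $\mu=0$ case of the dual relation (5.14) with $\tF^0_i=w^i$; I record it here because it is the only place the stationarity condition is used.

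Next I would differentiate once more. From $\partial\h^0/\partial\lambda^j=w^j(\lambda)$ we obtain
\[
Hess(\h^0)_{jk}=\frac{\partial^2 \h^0}{\partial\lambda^j\partial\lambda^k}=\frac{\partial w^j}{\partial\lambda^k}.
\]
On the other hand, differentiating (5.7) gives $\partial\lambda^i/\partial w^k=\partial^2 h^0/\partial w^i\partial w^k=Hess(h^0)_{ik}$. Since the maps $w\mapsto\lambda$ and $\lambda\mapsto w$ are mutually inverse (their global invertibility being exactly the content of the Corollary on the gradient mapping, under the standing nondegeneracy of $Hess(h^0)$), the chain rule forces the matrix $(\partial w^j/\partial\lambda^k)$ to be the inverse of $(Hess(h^0)_{jk})$. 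Combining the two displays yields $Hess(\h^0)=(Hess(h^0))^{-1}$.

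Finally I would conclude by pure linear algebra: if $H:=Hess(h^0)$ is symmetric and negative definite, then its eigenvalues are all negative, so the eigenvalues of $H^{-1}$ are their reciprocals, again all negative; as $H^{-1}$ is symmetric, it is negative definite. Hence $Hess(\h^0)=H^{-1}$ is negative definite, as claimed. I do not expect any genuine obstacle here: the whole argument is the single observation that the Legendre transform inverts the Hessian. The only two points deserving care are the cancellation in the first differentiation, which uses precisely the stationarity identity (5.7), and the legitimacy of treating $w$ and $\lambda$ as mutually inverse coordinates, which is where the regularity and global-invertibility hypotheses enter.
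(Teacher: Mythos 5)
Your proof is correct and follows essentially the same route as the paper: both arguments reduce to the observation that $w^j=\partial\hat h^0/\partial\lambda^j$ (via the stationarity cancellation from (5.7)), hence $Hess(\hat h^0)=\bigl(Hess(h^0)\bigr)^{-1}$, and the inverse of a symmetric negative-definite matrix is negative definite. Your write-up merely makes explicit two steps the paper compresses --- the chain-rule cancellation and the final linear-algebra fact --- which is fine but not a different method.
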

\begin{proof} From $d{\hat h}^{0}=\lambda^i dw^i +y^i d\lambda^i -h^{0}_{,w^k}dw^k =w^i d\lambda^i,$ we see that $w^j=\frac{\partial {\hat h}^{0}}{\partial \lambda^j}$.  Using this we get
\[
\frac{\partial^{2}{\hat h}^{0}}{\partial \lambda^i \partial \lambda^j}=\frac{\partial w^i}{\partial \lambda^j}=(\frac{\partial \lambda^{i}}{\partial w^j})^{-1}=(\frac{\partial^{2}{ h}^{0}}{\partial w^i \partial w^j})^{-1}<0.
\]
\end{proof}

After presenting currents $\tF^{\mu}_{i}$ in the form (5.11) what is left of the requirements of entropy
principle (provided the condition of convexity of $h^0$ is fulfilled) is the {\bf residual inequality}
\begin{equation}
\Sigma(\lambda )=\lambda^i \Pi_i \geqq 0.
\end{equation}
Two statements containing here determine the entropy production $\Sigma $ in terms of the production
4-forms $\Pi_i $ and {\bf require} non-negativity of $\Sigma $.\par

Reversing the arguments leading to the statements (5.14) and (5.16) one proves the following basic
result of RET leading to the dual formulation of system $\bigstar$ of balance equations and the entropy principle (5.1)
\begin{theorem}{\cite{MR}}
The following statements are equivalent under the condition of the convexity of entropy density
$h^{0}(w^i)$ as the function of fields $w^i$:
\begin{enumerate}
\item  Entropy principle is fulfilled for the balance equations
$\bigstar$ and the entropy balance equation (5.1) for given constitutive functions
$F^{\mu}_{i}(y),\Pi_{i}(y),h^{\mu}(y),\Sigma(y)$.

\item Constitutive fields $F^{\mu}_{i}(y),h^{\mu}(y),\Sigma(y)$ are obtained by
the relations (5.7),(5.14), (5.15) from the four-potential $\h(\lambda )$ (formal 3-form) and the
production 4-forms $\Pi_i (\lambda)\eta$ for which the residual inequality
\[
\lambda^i \Pi_i \geqq 0
\]
is fulfilled.
\end{enumerate}
\end{theorem}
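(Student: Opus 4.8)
The plan is to read this theorem as a repackaging of the pointwise relations (5.5)--(5.6) into the single potential $\h$, so that the genuine content is a Legendre-transform duality made rigorous by the convexity hypothesis. The first reduction is already available: by the analysis of Section 4 --- the affine-subspace Lemma together with the LL-system Theorem, applied to the entropy balance (5.1) viewed as a supplementary balance law for $\bigstar$ --- statement (1) is equivalent to the existence at each point of main fields $\lambda^i(w)$ with
\beq
d_v h^\mu = \lambda^i\, d_v F^\mu_i, \qquad \Sigma = \lambda^i \Pi_i \geqq 0,
\eeq
that is, to (5.5) together with (5.6). It therefore suffices to prove that, under the standing assumption that $Hess(h^0)$ is negative definite, these relations are equivalent to the data described in (2).

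For the implication toward (2), I would first take $\mu=0$ in (5.5) to obtain $\lambda^i=\partial h^0/\partial w^i$, the gradient map (5.7); the convexity hypothesis and the Corollary to the gradient-mapping Lemma then make $\wp\colon U\to\Lambda$ a global diffeomorphism, legitimizing the passage of every constitutive function to the variables $\lambda$. Defining the four-potential $\h^\mu$ by (5.12) and differentiating gives
\beq
d\h^\mu = \tF^\mu_i\, d\lambda^i + \lambda^i\, d\tF^\mu_i - d\tih^\mu = \tF^\mu_i\, d\lambda^i,
\eeq
where the cancellation uses the dual-variable form $d\tih^\mu=\lambda^i d\tF^\mu_i$ of (5.5). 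This is precisely (5.13); reading off coefficients yields $\tF^\mu_i=\partial\h^\mu/\partial\lambda^i$, which is (5.14), its $\mu=0$ case is the Legendre relation (5.15), and the residual inequality in (5.16) is just the second relation of the first display. Hence (2) holds.

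For the converse I would run the same computation backwards. Given $\h^\mu(\lambda)$ and productions $\Pi_i(\lambda)$ with $\lambda^i\Pi_i\geqq 0$, define $\tF^\mu_i$ and $\tih^\mu$ by (5.14). From the $\mu=0$ Legendre relation one has $w^i=\partial\h^0/\partial\lambda^i$, exactly as in the proof of the Legendre Lemma; the standing convexity hypothesis is, by that same Lemma, equivalent to the negative-definiteness of $Hess(\h^0)=Hess(h^0)^{-1}$, so $\lambda\mapsto w$ is a diffeomorphism inverse to $\wp$ and the inverse Legendre transform reconstructs a strongly concave $h^0(w)$ with $\lambda^i=\partial h^0/\partial w^i$. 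Differentiating $\h^\mu=\lambda^i\tF^\mu_i-\tih^\mu$ and using (5.14) then recovers $d_v h^\mu=\lambda^i d_v F^\mu_i$, so by the Section 4 equivalence every solution of $\bigstar$ satisfies the entropy balance; combined with the given residual inequality $\Sigma=\lambda^i\Pi_i\geqq 0$ this is statement (1).

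The crux --- and the only place where anything beyond the Leibniz rule enters --- is the invertibility of the change of variables $w\leftrightarrow\lambda$ and the fact that the Legendre transform is a genuine involution. This is supplied by the global diffeomorphism $\wp$ (Corollary to the gradient-mapping Lemma) in one direction and by the Hessian identity $Hess(\h^0)=Hess(h^0)^{-1}$ of the Legendre Lemma in the other; with these in hand both directions close, and the two differential identities displayed above are routine. I would therefore expect the write-up to spend essentially all of its care on justifying the coordinate change and the matching of sign-definiteness across the transform, and almost none on the computations themselves.
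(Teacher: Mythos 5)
Your proof is correct and takes essentially the same route as the paper: the paper's own argument is exactly the chain of Section 5 --- the Section 4 LL-equivalence giving (5.5)--(5.6), the gradient map (5.7) made globally invertible by the convexity Corollary, the potential construction (5.12)--(5.15), and the residual inequality (5.16) --- followed by the remark that the converse is obtained by ``reversing the arguments,'' which is precisely what you carry out, using the same Legendre--Hessian Lemma to transfer definiteness. The only difference is one of explicitness: you write out both implications in full, while the paper leaves the reversal to the reader.
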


\vskip1cm
\emph{2. DEFINE in Appendix the symmetric hyperbolic systems.}
\vskip1cm

Combining balance equations $\bigstar$ with the change of variables $w^i\rightarrow \lambda^j$ and using the relation 5.14, we rewrite the balance system $\bigstar$ in the form
\begin{equation}
\frac{\partial \tF^{ \mu}_{i}}{\partial \lambda^{j}}\cdot \frac{\partial \lambda^{j}}{\partial
x^{\mu}}=\tP_{i}(\lambda ) \Leftrightarrow \frac{\partial^2 \h^\mu}{\partial \lambda^i
\partial \lambda^j}\frac{\partial \lambda^{j}}{\partial
x^{\mu}}=\tP_{i}(\lambda ),\ i=1,\ldots ,m.
\end{equation}
 This is the system of first order PDE for the fields $\lambda^{i}(w)$ (of the type first suggested by S.Godunov, see \cite{Gd,Gd2}) with the symmetric matrix $\frac{\partial^2 \h^\mu}{\partial \lambda^i
\partial \lambda^j}.$  Lemma 4 proves that the matrix $\frac{\partial^2 \h^0}{\partial \lambda^i \partial \lambda^j}$ is \emph{negatively definite}. This finishes the proof of the following lemma where symmetry of the matrices $\frac{\partial^2 \h^\mu}{\partial \lambda^i \partial \lambda^j}$ is obvious.
\begin{lemma} Let the function $h^0$ is concave or convex and let $w^i \rightarrow \lambda^j$ is the change of variables $w^i$ to the main fields $\lambda^i$.  In the variables $\lambda^i$, the RET balance system $\bigstar$ has the form of \emph{symmetric hyperbolic system} (by Friedrichs, see \cite{FL})
\beq
\frac{\partial^2 \h^\mu}{\partial \lambda^i
\partial \lambda^j}\frac{\partial \lambda^{j}}{\partial
x^{\mu}}=\tP_{i}(\lambda ),\ i=1,\ldots ,m
\eeq
for the Lagrange fields $\lambda^i$.
\end{lemma}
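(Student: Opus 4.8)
The plan is to recognize (5.18) as a quasilinear first-order system of the shape $A^{\mu}(\lambda)\,\partial_{\mu}\lambda=\tP(\lambda)$ with coefficient matrices
\[
A^{\mu}=\left(\frac{\partial^{2}\h^{\mu}}{\partial\lambda^{i}\partial\lambda^{j}}\right),\qquad \mu=0,1,2,3,
\]
and then to check the two hallmarks of a Friedrichs symmetric hyperbolic system: that each $A^{\mu}$ is symmetric, and that the matrix $A^{0}$ attached to the time direction $x^{0}=t$ is sign-definite, so that $t$ is non-characteristic.

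First I would derive the stated form. Starting from $\bigstar$ in the total-derivative form $d_{\mu}F^{\mu}_{i}=\Pi_{i}$ and passing to the main fields by the change of variables $w\mapsto\lambda$ (a global diffeomorphism by the Corollary, under concavity or convexity of $h^{0}$), the relation $\tF^{\mu}_{i}=\partial\h^{\mu}/\partial\lambda^{i}$ from (5.14) together with the chain rule gives
\[
d_{\mu}\tF^{\mu}_{i}=\frac{\partial\tF^{\mu}_{i}}{\partial\lambda^{j}}\,\partial_{\mu}\lambda^{j}=\frac{\partial^{2}\h^{\mu}}{\partial\lambda^{i}\partial\lambda^{j}}\,\partial_{\mu}\lambda^{j}=\tP_{i},
\]
which is exactly (5.18) with $A^{\mu}_{ij}=\partial^{2}\h^{\mu}/\partial\lambda^{i}\partial\lambda^{j}$. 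Symmetry is then immediate, and this is the part the statement calls ``obvious'': each $A^{\mu}$ is the Hessian in $\lambda$ of the single scalar potential $\h^{\mu}$, so $A^{\mu}_{ij}=A^{\mu}_{ji}$ by equality of mixed second partials.

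The one substantive step is the definiteness of $A^{0}$, and for this I would invoke the preceding Lemma on the Legendre transform. Since $\h^{0}$ is the Legendre dual of $h^{0}$, its Hessian in the $\lambda$ variables is the inverse of the Hessian of $h^{0}$ in the $w$ variables,
\[
A^{0}=\frac{\partial^{2}\h^{0}}{\partial\lambda^{i}\partial\lambda^{j}}=\left(\frac{\partial^{2}h^{0}}{\partial w^{i}\partial w^{j}}\right)^{-1}.
\]
Inversion preserves definiteness, so $A^{0}$ is negative definite when $h^{0}$ is concave and positive definite when $h^{0}$ is convex; in either case it is definite. Choosing the covector $\xi_{\mu}=\delta^{0}_{\mu}$ exhibits $\sum_{\mu}A^{\mu}\xi_{\mu}=A^{0}$ as a definite symmetric matrix, and after multiplying the system by $-1$ in the concave case $A^{0}$ is positive definite, which is precisely the Friedrichs condition, completing the argument.

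The computation itself is routine; the real content lies in ensuring the hypotheses behind each step. Concretely, one needs the regularity condition so that the densities $w^{i}=F^{0}_{i}$ are legitimate fibre coordinates, and strict concavity (or convexity) of $h^{0}$ so that the gradient map of the Corollary is a genuine diffeomorphism and the Legendre transform $\h^{0}(\lambda)$ is well defined with invertible Hessian. With these in force, the only step that could fail, the sign-definiteness of $A^{0}$, is delivered by the Legendre-transform Lemma, so I expect no genuine obstacle beyond carefully bookkeeping the standing assumptions.
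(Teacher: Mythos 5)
Your proposal is correct and follows essentially the same route as the paper: derive the quasilinear form via the chain rule and the potential relation $\tF^{\mu}_{i}=\partial\h^{\mu}/\partial\lambda^{i}$ from (5.14), note that symmetry is automatic since each coefficient matrix is the Hessian of the scalar $\h^{\mu}$, and obtain definiteness of $A^{0}$ from the preceding Legendre-transform lemma (the paper's Lemma on $Hess(\h^{0})=Hess(h^{0})^{-1}$). Your only addition is the explicit remark that one multiplies by $-1$ in the concave case to match the positive-definiteness convention of the Friedrichs definition, a point the paper leaves implicit.
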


\section{\textbf{Main fields: Case of functionally independent} $\mathbf{\lambda^i}$.}

 In the RET case the Lagrange-Liu system of equations (4.5) for the fluxes and the source of the SBL of the balance system $\bigstar$ has the form (we are using variables $w^i$ considering system $\bigstar$ regular (see Definition 2)).
\beq
\begin{cases}
\sum_{i}\lambda^{i}(w)d_{v}F^{\mu}_{i} &=d_{v}K^{\mu},\ \Leftrightarrow
\sum_{i}\lambda^{i}(w)F^{\mu}_{i,w^j}=K^{\mu}_{,w^j},\ \forall \mu=1,\ldots, n;j=1,\ldots ,m, \\
\sum_{i}\lambda^{i}(w)\Pi_{i}(w) &=Q(w)-K^{\mu}_{,x^\mu},
\end{cases}
\eeq
with $d_{v}$ being in this case the vertical differential in the bundle $\pi$:
\[ d_{v}f(x,w)=f_{,w^i}dw^i.\]\par

In this and the next section we omit the sign of variables $x^\mu$ in the arguments of functions
silently assuming that this dependence is present.  No derivatives by $x^\mu$ will be calculated in these
sections, so this will not lead to any ambiguities.\par
 Fix an index $\mu$ in the first of these relations and notice that locally in $Y$ the
 exactness of the form in the left side is equivalent to its vertical closeness i.e. to the following integrability
 condition

\beq d_{v}(\sum_{i}\lambda^{i}(w)d_{v}F^{\mu}_{i})=\sum_{i}d_{v}\lambda^{i}(w)\wedge
d_{v}F^{\mu}_{i}(w)=0. \eeq

Here we restrict our considerations to the case of regular RET balance systems (see Definition 2 above).  Regularity condition ensures that the
functions $w^{i}=F^{0}_{i}(y)$ can be taken as local coordinates in the fibers of the bundle $\pi$. We will keep the same notation for vertical differential with respect to variables $w^i$- it is clear from the context which variables we are using.\par

As the next step, we apply Cartan Lemma, see Appendix 3, to the second form of the condition (6.2) for $\mu =0$  written in terms of the vertical variables $w^i$:
\[
\sum_{i}d_{v}\lambda^{i}(w)\wedge dw^i=0.
\]
We get: $d_{v}\lambda^{i}=\sum_{j}A_{ij}dw^j,\ A_{ij}=A_{ji} \rightarrow \lambda^{i}_{,w^j}=A_{ij}=A_{ji}=\lambda^{j}_{,w^i}$
for the functions $A_{ij}(w)$. Applying the mixed derivative test we finish the proof of the following
\begin{lemma}  If the RET constitutive relation $C$ is regular (functions $F^{0}_{i}$ are vertically functionally independent) then the condition (6.2) for $\mu=0$ is fulfilled if and only if locally (and in a domain $W\subset Y$ with star-shaped fibers over $X$, globally) there exists a function $h^{0}\in C^{1}(W)$ such that
\[
\lambda^{i}=\frac{\partial h^{0}}{\partial w^{i}}.
\]
\end{lemma}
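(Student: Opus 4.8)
The plan is to prove both implications, the nontrivial one being the passage from the closedness condition (6.2) at $\mu=0$ to the existence of a potential $h^0$. The converse is immediate, so I would dispose of it first: if $\lambda^{i}=\partial h^{0}/\partial w^{i}$ for some $h^{0}\in C^{1}(W)$, then in the $w$-coordinates supplied by regularity $\sum_{i}\lambda^{i}d_{v}F^{0}_{i}=\sum_{i}(\partial h^{0}/\partial w^{i})\,dw^{i}=d_{v}h^{0}$, whence $d_{v}(\sum_{i}\lambda^{i}d_{v}F^{0}_{i})=d_{v}d_{v}h^{0}=0$ because $d_{v}\circ d_{v}=0$.

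For the forward direction I would first invoke the regularity hypothesis to use $w^{i}=F^{0}_{i}(y)$ as vertical coordinates, so that $d_{v}F^{0}_{i}=dw^{i}$ and the $dw^{i}$ form a vertical coframe (they are pointwise linearly independent). In these coordinates the assumption (6.2) at $\mu=0$ reads $\sum_{i}d_{v}\lambda^{i}\wedge dw^{i}=0$, i.e.\ the vertical $1$-form $\omega=\sum_{i}\lambda^{i}\,dw^{i}$ is vertically closed. This is precisely the setting of Cartan's Lemma (Appendix 3): applied with the independent forms $dw^{i}$ and the coefficient forms $d_{v}\lambda^{i}$, it yields $d_{v}\lambda^{i}=\sum_{j}A_{ij}\,dw^{j}$ with a \emph{symmetric} matrix $A_{ij}=A_{ji}$. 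Reading off components gives the symmetry relation $\partial\lambda^{i}/\partial w^{j}=A_{ij}=A_{ji}=\partial\lambda^{j}/\partial w^{i}$, which is the classical integrability (mixed-partials) condition for the field $(\lambda^{i})$ to be a gradient.

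With the symmetry in hand, the existence of $h^{0}$ follows from the vertical Poincar\'e lemma. Locally this is the mixed-derivative test. To obtain the global statement on a domain $W$ whose fibers over $X$ are star-shaped, I would exhibit $h^{0}$ explicitly by the homotopy formula: fixing a center $w_{0}$ in each fiber and setting
\[
h^{0}(x,w)=\int_{0}^{1}\sum_{i}\lambda^{i}\bigl(x,\,w_{0}+t(w-w_{0})\bigr)\,(w^{i}-w_{0}^{i})\,dt.
\]
Differentiating under the integral sign with respect to $w^{j}$ and using the symmetry $\partial\lambda^{i}/\partial w^{j}=\partial\lambda^{j}/\partial w^{i}$ to recombine the terms into a total $t$-derivative recovers $\partial h^{0}/\partial w^{j}=\lambda^{j}$ after an integration by parts; since everything depends smoothly on the passive parameter $x$, the resulting $h^{0}$ lies in $C^{1}(W)$.

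The step I expect to require the most care is not the algebra but the global bookkeeping tied to regularity: I must ensure that the star-shaped fibers and the choice of center $w_{0}$ are compatible with the chart in which the $w^{i}$ are genuine coordinates (this is exactly where regularity is indispensable, since without it the $dw^{i}$ need not be independent and Cartan's Lemma fails to apply), and that the $x$-dependence, carried throughout as a parameter, does not spoil the differentiation under the integral sign. The purely local claim, by contrast, is immediate once the symmetry of $A_{ij}$ is established.
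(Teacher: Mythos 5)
Your proof is correct and follows essentially the same route as the paper: regularity to pass to $w$-coordinates, Cartan's Lemma to get the symmetric matrix $A_{ij}$ and hence $\lambda^{i}_{,w^j}=\lambda^{j}_{,w^i}$, then the mixed-derivative test (Poincar\'e lemma) to produce $h^{0}$. You merely spell out two points the paper leaves implicit — the easy converse via $d_{v}\circ d_{v}=0$ and the explicit homotopy-formula potential for the star-shaped global case — both of which are sound.
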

Applying this lemma to the condition (6.2) we find that Lagrange multipliers $\lambda^{i}(x,y)$ have
the form

\[
\lambda^{i}(x,y)=\frac{\partial h^{0}}{\partial w^{i}}\vert_{w^{i}=F^{0}_{i}(y)},
\]
for a smooth function $h^{0}\in C^{1}(W).$ Substituting these expressions into (6.1) we find that one
should have

\beq
\begin{cases}
K^{0}&=h^0 +{\bar K}^{0}(x),\\
 d_{v}K^{A}(y)&=\sum_{i}\frac{\partial h^{0}}{\partial w^{i}}\vert_{w^{j}=F^{0}_{j}(y)}d_{v}F^{A}_{i},\ \forall A=2,\ldots,n,\\
 Q(y)&=\sum_{i}\frac{\partial h^{0}}{\partial w^{i}}\vert_{w^{j}=F^{0}_{j}(y)}\Pi_{i}(y)-K^{\mu}_{x^\mu}.
\end{cases}
\eeq

These relations defines uniquely the production term $Q(y)$ in (3.1). Flux terms $K^{A},A=2,\ldots,n$ in
the balance law (3.1) can be found up to the adding of an arbitrary function of $x$ \emph{provided that for all $A$ the
1-forms on the right are closed with respect to the vertical differential} $d_{v}$. Closeness condition(s) form the set of necessary and locally sufficient conditions for a function $h^0$ to generate a secondary balance law of the system $\bigstar$.  Specifically, these solvability conditions have, in terms of variables $w^{i}=F^{0}_{i}(y)$ the form  (here $d_{v}\tilde{F}^{0}_{j}=dw^{j}$ and $\tilde F^{A}_{i}(w)=F^{A}_{i}(y(w))$)

\beq d_{v}\frac{\partial h^{0}}{\partial w^{i}}(w^k =\tilde{F}^{0}_{k})\wedge d_{v}\tilde{F}^{A}_{i}=0 \Leftrightarrow
\frac{\partial^2 h^{0}}{\partial w^{i}\partial w^{j}}(\tilde{F}^{0}_{k})dw^j\wedge d_{v}\tilde{F}^{A}_{i}=0,\forall A=2,\ldots,n.
\eeq
Here ${\tilde F}^{A}_{i}(w)=F^{A}_{i}(y^k =y^k(w))$.\par

As a result we have proved the following

\begin{theorem} Let ($\bigstar$) be a balance system of the RET type with the functionally independent by vertical variables $y^i$ density functions $F^{0}_{i}$.\par
  There is a bijection between the \emph{ regular} supplementary balance laws (3.1) of the  balance
 system ($\bigstar$) defined in an open subset $W\subset Y$ such that the multipliers $\lambda^{i}=\frac{\partial K^0}{\partial w^i}$ are
\textbf{functionally independent} and the smooth functions (potentials) $h^{0}(w^{1},\ldots ,w^{m})\in
C^{\infty}(W)$, modulo addition of an arbitrary function $\pi^{*}f(x),f\in C^{\infty}(\pi(W))$, with the
nondegenerate \textbf{vertical} Hessian $Hess(h^0)=\frac{\partial^{2}h^{0}}{\partial w^i
\partial w^j}$ such that
\[
\frac{\partial^2 h^{0}}{\partial w^{i}\partial w^{j}}(w^{k}=F^{0}_{k})d_{v}\tilde{F}^{0}_{j}(w)\wedge
d_{v}\tilde{F}^{A}_{i}(w)=0,\ A=2,\ldots ,n.
\]

This bijection is given by the relations
 \beq
\begin{cases}
K^{0}(y)&=h^{0}(F^{0}_{k})(y),\\
d_{v}K^{A}(y)&=\sum_{i}\lambda^{i}d_{v}F^{A}_{i},\ \forall A=2,\ldots,n,\\
Q(y)&=\sum_{i}\lambda^{i}\Pi_{i}(y)+K^{\mu}_{,x^\mu},
\end{cases}
\eeq where Lagrange multipliers $\lambda^{i}$ are defined by \beq \lambda^{i}(y)=\frac{\partial
h^0}{\partial w^{i}}\vert_{w^{k}=F^{0}_{k}(y)}. \eeq
\end{theorem}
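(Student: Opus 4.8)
The plan is to read the asserted bijection directly off the RET form of the LL-system (6.1), splitting it according to the value of $\mu$ and exploiting regularity. Because $\bigstar$ is regular we may use $w^i=F^0_i$ as fibre coordinates, so $F^0_{i,w^j}=\delta^i_j$ and the $\mu=0$ component of the first subsystem of (6.1) collapses to $\lambda^j=K^0_{,w^j}$. This already exhibits the main fields as the vertical gradient of $K^0$: writing $h^0$ for $K^0$ regarded as a function of $w$ (the ambiguity $K^0=h^0+\bar K^0(x)$ noted before being exactly the freedom $\pi^*f(x)$ in the statement) gives relation (6.6), $\lambda^i=\partial h^0/\partial w^i|_{w=F^0}$. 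Conversely any $h^0(w)$ produces admissible multipliers by the same prescription, so $K^0\leftrightarrow h^0$ is the germ of the claimed correspondence, and no appeal to Cartan's lemma is needed here since the potential is explicit.

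First I would dispose of the remaining components of (6.1). For the spatial indices $A$ the first subsystem reads $d_vK^A=\sum_i\lambda^i d_v\tilde F^A_i$, a vertical $1$-form in the variables $w$; by the Poincaré lemma on fibres with star-shaped closure it is locally (and globally on such fibres) $d_v$-exact in $K^A$, up to an additive $\pi^*\bar K^A(x)$, precisely when it is $d_v$-closed. Computing $d_v$ of the right-hand side and substituting $\lambda^i=\partial h^0/\partial w^i$ yields exactly the compatibility conditions (6.4), which are therefore necessary and locally sufficient for the fluxes $K^A$ to exist. The second subsystem then pins the source down uniquely as $Q=\sum_i\lambda^i\Pi_i+K^\mu_{,x^\mu}$, reproducing the last line of (6.5). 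This assembles the forward map from a regular SBL (3.1) with functionally independent multipliers to a potential $h^0$ modulo $\pi^*f(x)$.

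For the converse and bijectivity I would argue as follows. Given $h^0$ with nondegenerate vertical Hessian satisfying (6.4), the formulas (6.5)--(6.6) define data $(\lambda^i,K^\mu,Q)$ solving (6.1) by construction, hence by the LL-characterization (the Theorem on system (4.5)) a genuine supplementary balance law for $\bigstar$; the uniqueness-up-to-trivial-laws statement (the Lemma for fixed $\lambda^i$) shows the two maps are mutually inverse on the quotient by the $\pi^*f(x)$ ambiguity. The one point requiring care, and the main obstacle, is matching the two regularity notions: functional independence of the multipliers must correspond to nondegeneracy of $Hess(h^0)$. Here I would observe that under (6.6) the Jacobian of the map $w\mapsto\lambda$ is $\partial\lambda^i/\partial w^j=\partial^2 h^0/(\partial w^i\partial w^j)=Hess(h^0)_{ij}$, so the $\lambda^i$ are functionally independent as functions of $w$ exactly when this Jacobian has full rank $m$, that is exactly when $\det Hess(h^0)\neq 0$. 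This closes the equivalence and completes the bijection.
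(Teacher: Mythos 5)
Your proposal is correct, and its overall skeleton is the paper's own: split the Lagrange--Liu system (6.1) by the index $\mu$, extract a potential from the $\mu=0$ component, convert the spatial components into the closedness conditions (6.4) via the Poincar\'e lemma on fibres, read $Q$ off the source subsystem, and obtain the converse from the LL-characterization (Theorem 1) together with the uniqueness-modulo-trivial-laws lemma (Lemma 2). You deviate in two places, both to your advantage. First, the paper handles the $\mu=0$ component by writing the closedness condition $\sum_i d_v\lambda^i\wedge dw^i=0$ and invoking Cartan's Lemma (its Lemma 3) to manufacture a potential $h^0$ with $\lambda^i=h^0_{,w^i}$; you instead note that in the coordinates $w^i=F^0_i$ one has $F^0_{i,w^j}=\delta_{ij}$, so the $\mu=0$ equation of (6.1) literally reads $\lambda^j=K^0_{,w^j}$, making the potential explicit ($h^0=K^0$ viewed as a function of $w$, modulo $\pi^*f(x)$) with no integrability argument at all. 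That is a genuine streamlining: Cartan's Lemma is doing work that the regularity hypothesis and the choice of fibre coordinates have already done, and the Poincar\'e lemma is then needed only where it is unavoidable, namely for the existence of the fluxes $K^A$. Second, you explicitly verify that functional independence of the $\lambda^i$ is equivalent to nondegeneracy of $Hess(h^0)$, via the identity $\partial\lambda^i/\partial w^j=h^0_{,w^i w^j}$; the paper asserts this correspondence in the statement of the theorem but never spells it out in the proof, so on this point your argument is more complete than the source.
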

\begin{example}
\begin{enumerate}
\item It is clear that all the linear functions $h^{0}(w)=a^{i}w_i+b$ satisfy to these conditions.  They
correspond to the linear combinations of balance laws of the system ($\bigstar$) with constant coefficients.\par
\item Let now
$h^{0}(w)=\frac{1}{2}k_{ij}w^{i}w^{j}$ be a homogeneous quadratic function of its arguments ($k_{ij}\in \mathbb{R}$ - const).
Then, the condition of Theorem 3 takes the form
\[
k_{ij}dw^{j}\wedge d_{v}{\tilde F}^{A}_{i}(w)=0 \Leftrightarrow \ (k_{ij}{\tilde F}^{A}_{i,l})dw^j \wedge dw^l \Leftrightarrow k_{ij}{\tilde F}^{A}_{i,l}-k_{il}{\tilde F}^{A}_{i,j}=0\ \forall A,j,l.
\]
In terms of the quantities $D^{A}_{j}=k_{ij}{\tilde F}^{A}_{i}$, last condition takes the form $D^{A}_{j,l}=D^{A}_{l,j}$. This condition can be rewritten in the form of 1-form closeness
\beq
d_{v}(D^{A}_{j}dw^j )=0 \Rightarrow ^{locally}\ D^{A}_{j}=\frac{\partial q^A}{\partial w^j},
\eeq
for some functions $q^A (w)$. Returning to the functions ${\tilde F}^{A}_{i}$ we get similar \emph{constitutive} condition for existence of a SBL with the quadratic by $w^i$ density $h^0 (w)$ and the local representation for linear combinations of these functions (and for the functions themselves if matrix $k_{ij}$ is nondegenerate) in terms of "potentials" $q^A$:
\beq
d_{v}(k_{ij}{\tilde F}^{A}_{i}dw^j)=0 \Rightarrow^{locally}\ k_{ij}{\tilde F}^{A}_{i}=\frac{\partial q^A}{\partial w^j}.
\eeq
Next we will see that the situation existing for the quadratic density $h^0(w)$ extends to the general concave (and convex) by $w$ function $h^{0}(w).$
\end{enumerate}
\end{example}

Let now $h^0$ be a \emph{concave function}, i.e. let the Hessian of $h^0$ be negative definite in its
domain
\[
\frac{\partial^2 h^0}{\partial w^{i}\partial w^{j}}(w)<0.
\]
Then, the transformation $w^i\rightarrow \lambda^j = \frac{\partial h^0}{\partial
w^{i}}\vert_{w^{k}=F^{0}_{k}(y)}$ is the \emph{globally defined} diffeomorphism. Using $\lambda^i$ instead of
$w^i=F^{0}_{i}(y)$ as the new variables along the fiber of the bundle $\pi:Y\rightarrow X$ we write
integrability condition (6.2) in the form

\beq d\lambda^i \wedge d_{v}{\hat F}^{A}_{i}(\lambda )=0, \eeq
where ${\hat F}^{A}_{i}(\lambda )={\tilde F}^{A}_{i}(w(\lambda )).$ Using Lemma 2 we see that this
condition is (locally ) equivalent to the existence of the functions (potentials) $h^A (\lambda)$ defined
uniquely, up to an addition of functions of $x$ such that

\beq {\hat F}^{A}_{i}(\lambda )=\frac{\partial h^A}{\partial \lambda^{i}}. \eeq

\begin{definition} Let $h^0 (w)$ be a concave function of variables $w^j$ such that the condition (6.4) is
satisfied.  Then the n-form
\[
\mathbf{h}=h^{\mu}\eta_{\mu}=h^0 \eta_{0}+h^A \eta_{A}
\]
for which condition (6.4) is fulfilled is called a (local) \textbf{\emph{n-potential of the balance system}} $\bigstar$.
Balance law (3.1) corresponding to the function $h^0$ will be said to be generated by the \textbf{n-potential
$\mathbf{h}$},\cite{RS}.
\end{definition}

In a case where $h^0$ is a nondegenerate at a point $(x,w_{0}=F^{0}(x, y_{0}))$ in the sense that its
vertical Hessian
\[
Hess(h^{0})(x,w_{0})=\left( \frac{\partial^2 h^0}{\partial w^{i}\partial w^{j}}\right)\vert_{x,w=w_{0}}
\]
is the nondegenerate matrix, the change of variables $\{w^{k}\}\rightarrow \{ \lambda^{j}=\frac{\partial
h^0}{\partial w^{j}}\}$ exists locally, in a neighborhood of the point $(x,w_{0})$ and the conclusions
done for the convex $h^0$ will follow locally.  As a result we get the following

\begin{theorem} Let in a neighborhood of a point $y_{0}$ the differentials $d_{v}F^{0}_{i}(y)$ form a
coframe in the fibers $Y_{x}$, i.e. functions $w^{k}(y)=F^{0}_{k}(x,y)$ are functionally independent.
 Then,
\begin{enumerate}
\item Locally, in a neighborhood of the point $y_{0}$ there is a bijection between the supplementary balance
  laws (3.1) for the balance system $\bigstar$ such that the main fields
  $\lambda^{i}=\frac{\partial K^0}{\partial w^i}$ are
functionally independent \emph{and the functions} $h^{0}(w^{k})$ of variables
  $x,w^{k}=F^{0}_{k}(y)$ with the nondegenerate vertical Hessian at the point
  $w_{0}=F^{0}_{k}(y_{0})$ defined up to an addition
  of an arbitrary function of $x$ and \textbf{such that} for the variables
  $\lambda^{i}= \frac{\partial h^0}{\partial
w^{i}}\vert_{w^{k}=F^{0}_{k}(y)}$ that are defined in a neighborhood of the point $y_{0}$ the
integrability conditions

\beq d\lambda^i \wedge d_{v}{\hat F}^{A}_{i}(\lambda )=0\Leftrightarrow d_{v}({\hat F}^{A}_{i}d\lambda^i
)=0
 ,\ A=1,2,\ldots,n,
 \eeq
where ${\hat F}^{A}_{i}(\lambda )=F^{A}_{i}(x,y(w(\lambda )))$, holds.

  \item For a given supplementary balance law (3.1) the function $h^0(x,w)$ is defined by the condition
  \[
h^{0}(x,w)=K^{0}(x,y(w)),\ \text{where} \ w^{k}(y)=F^{0}_{i}(x,y),
  \]
conditions (6.4) are fulfilled, local vertical coordinates $\lambda^{i}= \frac{\partial h^0}{\partial
w^{i}}\vert_{w^{k}=F^{0}_{k}(y)}$ that are defined in a neighborhood of the point $y_{0}$ satisfy to the
integrability condition and there exists functions $h^A (x,\lambda^i)$ such that
\[
{\hat F}^{A}_{i}(\lambda )={\tilde F}^{A}_{i}(w(\lambda ))=\frac{\partial h^A}{\partial \lambda^{i}},
\] thus defining the n-potential $h(\lambda)=h^{\mu}(x,\lambda)\eta_\mu ,$
corresponding to the balance law (3.1).
  \item Vice versa, for a given a given function $h^{0}(w^{k})$ of variables
 $x,w^{k}=F^{0}_{k}(y)$ satisfying to the conditions above at the point $w_{0}=F^{0}_{k}(y_{0})$,
the supplementary balance is defined from the conditions:
\beq
\begin{cases}
K^{0}(y)&=h^{0}(F^{0}_{k}(x,y)),\\
d_{v}K^{A}(y)&=\sum_{i}\lambda^{i}d_{v}F^{A}_{i},\ \forall A=1,\ldots,n,\\
Q(y)&=\sum_{i}\lambda^{i}\Pi_{i}(y)+K^{\mu}_{,x^\mu},
\end{cases}
\eeq where Lagrange multipliers $\lambda^{i}$ defined by $\lambda^{i}(y)=\frac{\partial
h^0}{\partial w^{i}}\vert_{w^{k}=F^{0}_{k}(y)}$ are functionally independent.
  \end{enumerate}
\end{theorem}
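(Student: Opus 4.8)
The plan is to localize the proof of Theorem 3: every global step there---which rested on strict concavity of $h^0$ and the global diffeomorphism of Corollary 2---gets replaced by its pointwise counterpart near the fixed point $y_0$, furnished by the inverse-function ($\det\neq 0$) part of the gradient-mapping Lemma 3. Throughout I work in the vertical coordinates $w^k=F^0_k(x,y)$, which are legitimate local fibre coordinates precisely because the hypothesis asserts that $d_v F^0_i$ is a coframe near $y_0$. First I would shrink the neighbourhood of $y_0$ once and for all so that the coframe property, the nondegeneracy of the vertical Hessian $Hess(h^0)$ at $w_0=F^0_k(y_0)$, and star-shapedness of the fibres all hold on one common domain $W$; nondegeneracy at the single point $w_0$ persists on an open neighbourhood by continuity of $\det Hess$.

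The opening move is to invoke Theorem 1 to replace ``(3.1) is an SBL'' by the LL-system (6.1). Setting $\mu=0$ and using $d_v F^0_i=dw^i$, the first line of (6.1) reads $\sum_i\lambda^i\,dw^i=d_vK^0$, so $\lambda^i=\partial K^0/\partial w^i$; defining $h^0:=K^0$ as a function of $(x,w)$ gives $\lambda^i=\partial h^0/\partial w^i$ and $K^0(y)=h^0(F^0_k(y))$, which is the first line of (6.12). Conversely, Lemma 6 guarantees that any family $\lambda^i$ for which $\sum_i\lambda^i\,dw^i$ is $d_v$-closed arises from such a potential. The stated equivalence ``the $\lambda^i$ are functionally independent $\Leftrightarrow$ $Hess(h^0)$ is nondegenerate at $w_0$'' is then immediate from $\partial\lambda^i/\partial w^j=\partial^2 h^0/\partial w^i\partial w^j$: functional independence of the $\lambda^i$ is exactly the nonvanishing of this Jacobian.

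Next I would handle the spatial indices $A=1,\ldots,n$. The first line of (6.1) demands $d_vK^A=\sum_i\lambda^i\,d_v\tilde F^A_i$, and since $d_v$ is exact on the star-shaped fibres of $W$, such a $K^A$ exists iff the right-hand one-form is $d_v$-closed; substituting $d_v\lambda^i=\sum_j(\partial^2 h^0/\partial w^i\partial w^j)\,dw^j$ turns this into the integrability condition (6.4). Here the Hessian is used a second time: by part (1) of Lemma 3 the gradient map $w\mapsto\lambda=\partial h^0/\partial w$ is a local diffeomorphism near $w_0$, so I may pass to the $\lambda$-coordinates and recast (6.4) as $d\lambda^i\wedge d_v\hat F^A_i(\lambda)=0$, which is the form (6.11). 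This asserts that $\partial\hat F^A_i/\partial\lambda^j$ is symmetric in $i,j$, so the vertical Poincar\'e lemma produces potentials $h^A(x,\lambda)$ with $\hat F^A_i=\partial h^A/\partial\lambda^i$, yielding the $n$-potential and proving part (2); the second line of (6.1) then fixes the source as $Q=\sum_i\lambda^i\Pi_i+K^\mu_{,x^\mu}$, which is the last line of (6.12) and completes part (3).

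Finally I would check that the two constructions are mutually inverse, hence a bijection. The map SBL $\mapsto h^0$ sends (3.1) to $h^0=K^0(x,y(w))$; the map $h^0\mapsto$ SBL uses (6.12). Their composite returns $K^0$ verbatim and returns $(K^A,Q)$ up to the ambiguity already isolated in Lemma 2---the fluxes $K^A$ are pinned down only modulo an additive function of $x$, and adjoining such a term alters the balance law only by a trivial second-kind law---so the correspondence descends to a bijection onto the stated equivalence classes. I expect the only real obstacle to be locality bookkeeping rather than new algebra: the computational identities are identical to those of Theorem 3, and the work lies in confirming that all three neighbourhood hypotheses can be met on a single domain $W$, that the Poincar\'e lemma applies there, and that every ``there exists / is unique'' assertion is read as a local statement near $y_0$.
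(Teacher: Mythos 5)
Your proposal is correct and takes essentially the same route as the paper: the paper's own proof of this theorem consists of the single remark that all statements except the last were established in the discussion preceding it (the LL-system of Theorem 1, the Cartan-lemma potential $h^0$, the integrability conditions (6.4), and the local gradient-map diffeomorphism $w\mapsto\lambda$ furnished by the nondegenerate Hessian via Lemma 3), with the last statement following from Theorem 3 --- which is precisely the chain of steps you reconstruct in localized form, including the treatment of the additive-function-of-$x$ ambiguity via Lemma 2.
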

\begin{proof} Al statements except the last one were proved before the theorem.  Last one follows directly
from the Theorem 3.
\end{proof}

\begin{remark} To compare results obtained in this section with these of the RET in Sec.5 we have to
replace $h^\mu$  by $K^\mu$ here, ${\hat h}^{\mu}$ in Section 5 by $h^\mu$ here.
\end{remark}

\section{\textbf{Examples.}}
 \begin{example} Consider the case of one field $u(x,t)$ in the $1+1$ space-time.  Balance system takes the form
 \beq
 u_t +c(u)_{,x}=\Pi(t,x,u).
 \eeq
 System of LL-equations takes the form
 \beq
 \begin{cases}
 \lambda \cdot 1=K^0,\\
 \lambda c'(u)  = K^1.
 \end{cases}
 \eeq
 Substitute $\lambda$ from the first equation into the second and get
 \beq
 c'(u)K^{0}_{,u}=K^{1}_{,u}.
 \eeq
 Integrating we get
 \[
 K^{1}=\int^{u}c'(u)K^{0}_{,u}du+\tilde{K}^{1}(x,t)=c'(u)K^{0}- \int^{u}c''(u)K^{0}du+\tilde{K}^{1}(x,t).
 \]
 Thus, we have proved the following
 \begin{proposition} Let
 \beq
 u_t +c(u)_{,x}=\Pi(t,x,u)
 \eeq
 be a balance equation for scalar function $u(t,x)$ in (1+1)-dim space-time.  Then an arbitrary supplementary balance law for this equation has the form
 \[
 d_{t} K^{0}+d_{x}K^1 =Q,
 \]
 where
 \begin{enumerate}
 \item Density $K^0 (t,x,u)$ is an arbitrary function of arguments $t,x,u$,
 \item Main field (only one here) is $\lambda =K^{0}$,
 \item  Flux $K^1$ is given by expression
 \[
 K^{1}=c'(u)K^{0}- \int^{u}c''(u)K^{0}du+\tilde{K}^{1}(x,t),
 \]
 where ${\tilde K}(t,x)$ is arbitrary,
 \item Source/production term $Q$ is
 \beq
 Q = K^{0}\cdot \Pi +K^{0}_{,t}+K^{1}_{,x}.
 \eeq
 \end{enumerate}
 \end{proposition}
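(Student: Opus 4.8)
The plan is to specialize the Lagrange--Liu characterization of Theorem 1 (the system \eqref{LL1}) to the present scalar system, for which $m=1$, the single density is $F^0=u$, the single flux is $F^1=c(u)$, and the production is $\Pi_1=\Pi(t,x,u)$. The system $\bigstar$ is here trivially regular, since $d_v F^0=du$ spans the one-dimensional vertical coframe, and it is nondegenerate (a scalar quasilinear first-order equation in normal form is locally solvable by characteristics); hence Theorem 1 applies as a genuine \emph{if and only if}, and the whole argument reduces to reading off the two subsystems of \eqref{LL1} and integrating once in $u$.

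First I would write the first subsystem of \eqref{LL1} for the two values $\mu=0,1$. The $\mu=0$ relation is $\lambda\,F^0_{,u}=K^0_{,u}$, i.e. $\lambda=K^0_{,u}$, so the single main field is forced to be $K^0_{,u}$ and \emph{no} condition is placed on $K^0$ itself: any smooth $K^0(t,x,u)$ is admissible. (Equivalently, the constitutive restrictions that Theorem 3 would impose are vacuous here: for $m=1$ the sole integrability wedge $K^0_{,uu}\,du\wedge d_v c$ is a multiple of $du\wedge du$ and vanishes identically.) The $\mu=1$ relation then becomes $K^1_{,u}=\lambda\,c'(u)=c'(u)\,K^0_{,u}$, a single first-order equation in $u$ for the flux, with $t,x$ entering only as parameters. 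Integrating in $u$ and integrating by parts to move the derivative off $K^0$ produces the stated expression for $K^1$, the arbitrary additive term being precisely a function $\tilde K^1(t,x)$ of the parameters alone.

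It remains to recover the source. For this I would use the second subsystem of \eqref{LL1}: since $F^0=u$ and $F^1=c(u)$ carry no explicit $x^\mu$-dependence we have $F^\mu_{,x^\mu}=0$, so the relation collapses to $\lambda\,\Pi=Q-K^\mu_{,x^\mu}$, giving $Q=\lambda\,\Pi+K^0_{,t}+K^1_{,x}$ with $\lambda=K^0_{,u}$. As a fully self-contained check one can avoid the LL-machinery altogether: expand the total derivatives in $d_tK^0+d_xK^1$, eliminate $u_t$ through the equation $u_t=\Pi-c'(u)u_x$, and demand that the result be free of the undetermined quantity $u_x$; the coefficient of $u_x$ is exactly $K^1_{,u}-c'(u)K^0_{,u}$, which reproduces the flux relation, while the surviving terms assemble into $Q$. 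I do not expect any genuine obstacle: with a single field and a single spatial flux the defining system is underdetermined rather than overdetermined, the lone integrability wedge being automatically zero, so the only real computation is the integration by parts, and the only point requiring care is the bookkeeping between the total derivatives $d_t,d_x$ in the balance law and the partial derivatives $K^0_{,t},K^1_{,x}$ in $Q$ — which the $u_x$-elimination makes transparent.
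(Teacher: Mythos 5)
Your proposal is correct and follows essentially the same route as the paper's own derivation: specialize the Lagrange--Liu system (4.5) to $m=1$ with $F^0=u$, $F^1=c(u)$, read off $\lambda=K^0_{,u}$ from the $\mu=0$ relation, integrate $K^1_{,u}=c'(u)K^0_{,u}$ by parts in $u$ to get item 3, and obtain $Q$ from the second subsystem; your closing $u_x$-elimination check is the same computation done directly. Note that your (correct) conclusions $\lambda=K^0_{,u}$ and $Q=K^0_{,u}\Pi+K^0_{,t}+K^1_{,x}$ expose typos in the proposition as printed --- items 2 and 4 drop the subscript ${}_{,u}$ --- but they agree with the paper's own intermediate equation $c'(u)K^0_{,u}=K^1_{,u}$ and are the only versions consistent with item 3.
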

 \end{example}

\begin{example} As the second example we consider the Cattaneo heat propagation. Original basic fields are temperature $y^0=\vartheta$ and the heat flux $y^A=q^A, A=1,2,3$.  Balance equations
\beq
\begin{cases}
\partial_{t}(\rho \epsilon )+\partial_{x^A}(q^A )=0,\\
\partial_{t}(\tau q^B)+\partial_{x^A}(\delta^{A}_{B}\Lambda (\vartheta))=-q^B.
\end{cases}
\eeq
Here $\tau =\tau(\vartheta)\leqq 0$ is the relaxation time and $\Lambda (\vartheta)$ is the function such that in the expression $\partial_{x^B}(\Lambda (\vartheta)=\lambda(\vartheta)\frac{\partial \vartheta}{\partial x^B}$, $\lambda (\vartheta)$ - heat ... coefficient.  We assume that $\rho-const$.
\par
For simplicity we consider here the case where $\tau-const$, more general case is considered below, in Sec.12. If $\tau-const$, second equation can be written in the form
\[
\partial_{t}( q^B )+\partial_{x^A}(\delta^{B}_{A}\tau^{-1}\Lambda (\vartheta))=-\tau^{-1}q^B.
\]
Introduce variables $w^0 =\tilde \epsilon =\rho \epsilon,\ w^A=q^A$. In these variables, the balance system (7.6) takes the form
\beq
\begin{cases}
\partial_{t}(\tilde \epsilon )+\partial_{x^A}(q^A )=0,\\
\partial_{t}( q^B )+\partial_{x^A}(\delta^{B}_{A}\tilde{\Lambda} (\tilde \epsilon, q^C))=-\tau^{-1}q^B
\end{cases}
\eeq
with the new function ${\tilde \Lambda} ({\tilde \epsilon}, q^C)=\tau^{-1}\Lambda (\vartheta)$.\par
We have: ${\tilde F}^{A}_{0}=q^A,\ {\tilde F}^{A}_{C}=\delta^{A}_{C}{\tilde \Lambda}$.\par
System of equations (6.4) takes in these notations, the form
\begin{multline}
[(q^A_{,w^k}\partial_{w^j}\partial_{\te}-q^A_{,w^j}\partial_{w^k}\partial_{\te} +(\delta^{C}_{A}{\tilde \Lambda})_{,w^k}\partial_{w^j}\partial_{q^C}-(\delta^{C}_{A}{\tilde \Lambda})_{,w^j}\partial_{w^k}\partial_{q^C}]h^0=0,\\ A,C=1,2,3;k\ne j;k,j=0,1,2,3.
\end{multline}
Take, at first, $k=0,j=B>0$ and, correspondingly, $w^k=\te, w^B=q^B$. Then, previous system of equations takes the form of subsystem I.
\beq
I:\ [-\delta^{A}_{B}\partial^{2}_{\te}+{\tilde \Lambda}_{,\te}\partial_{q^B}\partial_{q^A}-{\tilde \Lambda}_{,q^B}\partial_{\te}\partial_{q^A}]h^0 =0,\ \forall A,B=1,2,3.
\eeq
Consider separately two cases
\beq
\begin{cases}
A=B,\ [-\partial^{2}_{\te}+{\tilde \Lambda}_{,\te}\partial^{2}_{q^B}-{\tilde \Lambda}_{,q^B}\partial_{\te}\partial_{q^B}]h^0 =0,\\
A\ne B,\ [{\tilde \Lambda}_{,\te}\partial_{q^B}\partial_{q^A}-{\tilde \Lambda}_{,q^B}\partial_{\te}\partial_{q^A}]h^0 =0.
\end{cases}
\eeq
Take now the rest of cases (we remind that $k\ne j$ and that the system (7.8) is antisymmetric by $(kj)$): $k=D\ne B=j,\ w^k=q^D,w^j=q^B$. For this case, equations (6.14) takes the form of second subsystem of (7.8):

\begin{multline}
II:\ [\delta^{A}_D \partial_{q^B}\partial_{\te}-\delta^{A}_B\partial_{q^D}\partial_{\te} +\delta^{C}_{A}{\tilde \Lambda}_{,q^D}\partial_{q^B}\partial_{q^C}-\delta^{C}_{A}{\tilde \Lambda}_{,q^B}\partial_{q^D}\partial_{q^C}]h^0=0,\\ A,B,C,D=1,2,3;D\ne B.
\end{multline}
Consider two cases:
\beq
\begin{cases}
A\ne C:\  [\delta^{A}_D \partial_{q^B}\partial_{\te}-\delta^{A}_B\partial_{q^D}\partial_{\te}]h^0 =0,\\
A=C:\ [\delta^{A}_D \partial_{q^B}\partial_{\te}-\delta^{A}_B\partial_{q^D}\partial_{\te} +{\tilde \Lambda}_{,q^D}\partial_{q^B}\partial_{q^A}-{\tilde \Lambda}_{,q^B}\partial_{q^D}\partial_{q^A}]h^0=0.
\end{cases}
\eeq
Let indices $A=D,B,C$ are all different. Then the first equation in (7.12) takes the form
\beq
\partial_{q^B}\partial_{\te}h^0=0.
\eeq
For arbitrary $B=1,2,3$ we can choose A,C such that the condition that indices $A=D,B,C$ are all different holds.  Thus, the condition (7.12) is fulfilled \textbf{FOR ALL B}.  Then, the first equation in (7.12) is fulfilled due to this condition.\par
  At the same time, second equation in (7.12) takes the form
\beq
[{\tilde \Lambda}_{,q^D}\partial_{q^B}\partial_{q^A}-{\tilde \Lambda}_{,q^B}\partial_{q^D}\partial_{q^A}]h^0=0,\ B\ne D.
\eeq
Equations (7.10) now take the form
\beq
\begin{cases}
\ [-\partial^{2}_{\te}+{\tilde \Lambda}_{,\te}\partial^{2}_{q^B}]q^0 =0,B=1,2,3,\\
A\ne B,\ {\tilde \Lambda}_{,\te}\partial_{q^B}\partial_{q^A}q^0 =0.
\end{cases}
\eeq
From the condition $\Lambda_{,\vartheta}\ne 0$ it follows that ${\tilde \Lambda}_{,\te}\ne 0$. As a result, second of the equations (7.15) is equivalent to the condition
\beq
\partial_{q^B}\partial_{q^A}q^0 =0,\ A\ne B.
\eeq
In equation(7.14) indices $A,B,D$ can not coincide.  On the other hand, if all three of them are different, then (7.15) is fulfilled due to (7.16).  This left the case $A=B\ne D$ (and symmetrical case).  In this case $A\ne D$ and (7.15) reduces to
\beq
{\tilde \Lambda}_{,q^D}\partial^{2}_{q^B}h^0=0,\ B\ne D.
\eeq
As a result, equations (7.8) reduces to the following system of conditions:
\beq
\begin{cases}
{\tilde \Lambda}_{,q^D}\partial^{2}_{q^B}h^0=0,\ B\ne D,\\
\partial_{q^B}\partial_{q^A}q^0 =0,\ A\ne B,\\
\partial_{q^B}\partial_{\te}h^0=0,\\
[-\partial^{2}_{\te}+{\tilde \Lambda}_{,\te}\partial^{2}_{q^B}]q^0 =0,B=1,2,3.
\end{cases}
\eeq
From the second and third condition and from the invariance of $h^0$ under the spacial rotations it follows that
\beq
h^0=\gamma(\te)+\alpha \Vert \bar{q}\Vert^2 +\beta_{A}q^A,
\eeq
where $\alpha ,\beta_{A}$ do not depend on $\te$ and $\beta_A$ is the spacial covector.\par
In the first condition requirement ${\tilde \Lambda}_{,q^D}=0$ for all $D$ is equivalent to the requirement that $\te_{,q^C}=0$ for all $C$.
Thus, if $\te$ does not depend on the heat flux, then ${\tilde \Lambda}_{,q^D}=0$ and the first condition in (7.18) is fulfilled.  Then the last equation in (7.18) gives, for $h^0$ in (7.19)
\[
\gamma''(\te)=2\alpha {\tilde \Lambda}_{,\te}\leftrightarrow \gamma(\te)=2\alpha \int^{\te}{\tilde \Lambda}d\te +c\te +d.
\]
Thus, in the case where $\te_{,q^C}=0$,
\beq
h^0=\left[2\alpha \int^{\te}{\tilde \Lambda}d\te +\alpha \Vert \bar{q}\Vert^2\right] +c\te +\beta_{A}q^A+d.
\eeq
Now consider the case where ${\tilde \Lambda}_{,q^D}\ne 0$, then $\partial^{2}_{q^B}h^0=0$ for all $B$. Using this in (7.20) we see that in this case $\alpha =0$.  Thus, we can conclude that
\begin{proposition} For the Cattaneo heat propagation model with $\tau_{,\vartheta}=0$,
\begin{enumerate}
\item If $\te_{,q^C}=0,\ \forall C$, then
\[
h^0=\left[2\alpha \int^{\te}{\tilde \Lambda}d\te +\alpha \Vert \bar{q}\Vert^2\right] +c\te +\beta_{A}q^A+d,
\]
\item If $\te_{,q^C}\ne 0$, then
\[
h^0=c\te +\beta_{A}q^A+d.
\]
\end{enumerate}
\end{proposition}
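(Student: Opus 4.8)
The plan is to treat the reduction to the scalar system (7.18) as the substantive work and then read off the two cases by a short separation-of-variables and dichotomy argument. First I would specialize the defining integrability conditions (6.4) to the Cattaneo fluxes $\tilde F^A_0=q^A$, $\tilde F^A_C=\delta^A_C\tilde\Lambda$, obtaining the antisymmetric family (7.8), and then run through the index cases: $k=0$ against $k=D>0$, and within each the subcases $A=B$ versus $A\ne B$ and $A=C$ versus $A\ne C$. The only real friction is this bookkeeping — one must track which Kronecker deltas survive and use $\Lambda_{,\vartheta}\ne 0$, hence $\tilde\Lambda_{,\te}\ne 0$, to clear the coefficient in the $A\ne B$ relation so that the off-diagonal pure-$q$ second derivatives drop out. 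The outcome is the four scalar requirements collected in (7.18).

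Next I would extract the normal form of $h^0$. The two conditions $\partial_{q^B}\partial_{q^A}h^0=0$ ($A\ne B$) and $\partial_{\te}\partial_{q^B}h^0=0$ say that $\partial_{q^B}h^0$ depends on $q^B$ alone, so $h^0$ splits as $\gamma(\te)+\sum_B G_B(q^B)$. Feeding this into the last condition of (7.18), namely $-\gamma''(\te)+\tilde\Lambda_{,\te}G_B''(q^B)=0$, forces $G_B''(q^B)=\gamma''(\te)/\tilde\Lambda_{,\te}$; since the left side depends only on $q^B$ and the right only on $\te$, both must equal a common constant $2\alpha$ (consistent with the rotational invariance of the Cattaneo system). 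Hence $G_B(q^B)=\alpha(q^B)^2+\beta_B q^B+\text{const}$ and $\gamma''(\te)=2\alpha\tilde\Lambda_{,\te}$, which is precisely the normal form (7.19) together with $\gamma(\te)=2\alpha\int^{\te}\tilde\Lambda\,d\te+c\te+d$.

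Finally I would split on the remaining condition $\tilde\Lambda_{,q^D}\partial^2_{q^B}h^0=0$ ($B\ne D$), which in the normal form reads $2\alpha\,\tilde\Lambda_{,q^D}=0$. If $\te_{,q^C}=0$ for all $C$, equivalently $\tilde\Lambda_{,q^D}=0$ for all $D$ (as $\tilde\Lambda$ sees $q$ only through $\vartheta$ and $\Lambda_{,\vartheta}\ne 0$), then this condition is vacuous, $\alpha$ stays free, and substituting $\gamma$ into (7.19) gives the first formula. If instead $\te_{,q^C}\ne 0$, then $\tilde\Lambda_{,q^D}\ne 0$, forcing $\alpha=0$; the last condition then gives $\gamma''=0$, so $\gamma$ is affine and $h^0=c\te+\beta_A q^A+d$, the second formula. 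This exhausts the dichotomy and proves the proposition.
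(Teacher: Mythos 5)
Your reduction of (6.4) to the four scalar conditions (7.18) matches the paper's derivation step for step, and your closing dichotomy on $\tilde\Lambda_{,q^D}$ (with the equivalence $\te_{,q^C}=0\Leftrightarrow\tilde\Lambda_{,q^D}=0$ via $\Lambda_{,\vartheta}\ne 0$) is also exactly the paper's. The one place you genuinely diverge is how you obtain the normal form (7.19): the paper deduces $h^0=\gamma(\te)+\alpha\Vert\bar{q}\Vert^2+\beta_{A}q^A$ from the second and third conditions of (7.18) \emph{plus an appeal to invariance of $h^0$ under spatial rotations}, whereas you try to derive the isotropy of the quadratic part (one constant $\alpha$ common to all $B$) from the fourth equation itself. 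That is in principle an improvement: rotational invariance of $h^0$ is not among the hypotheses of the proposition, so a derivation from the PDE system alone closes a soft spot in the paper's own argument rather than merely rephrasing it.

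However, your separation-of-variables step is wrong as stated, and the error matters precisely in case (2). You write $G_B''(q^B)=\gamma''(\te)/\tilde\Lambda_{,\te}$ and assert that the right-hand side "depends only on $\te$". But $\tilde\Lambda(\te,q)=\tau^{-1}\Lambda(\vartheta(\te,q))$, so $\tilde\Lambda_{,\te}$ is a function of $\te$ alone only when $\vartheta_{,q^C}=0$, which is precisely the hypothesis of case (1); in case (2) the quotient depends on $q$ as well, the one-variable separation collapses, and yet you go on to use the resulting single-$\alpha$ normal form (and the formula $\gamma=2\alpha\int^{\te}\tilde\Lambda\,d\te+c\te+d$, which is likewise meaningless when $\tilde\Lambda$ depends on $q$) to run the case (2) argument. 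The repair is cheap: write the fourth condition of (7.18) for two distinct indices $B\ne C$ and use $\tilde\Lambda_{,\te}\ne 0$ to get $G_B''(q^B)=G_C''(q^C)$; since these are functions of independent variables, both equal a constant $2\alpha$, which yields (7.19) with no assumption on $\tilde\Lambda_{,\te}$. Alternatively, handle case (2) directly without the normal form: where $\tilde\Lambda_{,q^D}\ne 0$, the first condition forces $G_B''=0$ for $B\ne D$, then the fourth condition with such a $B$ gives $\gamma''=0$, and with $B=D$ gives $\tilde\Lambda_{,\te}G_D''=0$, hence $G_D''=0$, so $h^0=c\te+\beta_{A}q^A+d$. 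With either patch your proof is complete and, at the normal-form step, tighter than the paper's.
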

In the expression (7.20), last term corresponds to the secondary balance laws that do not depend on the fields (trivial), previous two terms correspond to the initial balance system (7.6) but expression in brackets gives the density of new balance law.  See below, Sec.12 for more details.
\end{example}
\begin{remark}
Form of $h^0$ obtained here does not allow to determine the fluxes $h^0$ - for this one has to solve system of equations for $h^A$.  In Sec.12 the whole LL-system of equations will be solved and, together with the description of all supplementary balance laws, corresponding constitutive limitations on the form of internal energy $\epsilon =\epsilon(\theta ,q)$ will be obtained.
\end{remark}

\section{\textbf{Example: (2+2)-RET balance systems.}}
\begin{example} Consider a model example of a balance system with one space variable $x$ and two fields
$y^1 ,y^2$.  The balance system has the form

\beq
\partial_{t}y^i +\partial_{x}F^{1}_{i}=\Pi_{i},\ i=1,2.
\eeq

In such a case the system of equations for $h^0$ reduces to one equation with $k=1,j=2$:
\[
\left( F^{1}_{i,y^1}\partial_{y^i}\partial_{y^2}-F^{1}_{i,y^2}\partial_{y^i}\partial_{y^1}\right)h^0 =0,
\]
or \beq \left[ -F^{1}_{i,y^2}\partial^{2}_{y^1
 y^1}+(F^{1}_{1,y^1}-F^{1}_{2,y^2})\partial_{y^1}\partial_{y^2}+
F^{1}_{2,y^1}\partial^{2}_{ y^2  y^2} \right]h^0=0. \eeq
\end{example}
This is, for each point $(t,x)$ the homogeneous linear equation of second order with two independent
variables. Space of solutions of this equations is \emph{infinite-dimensional}.\par
\vskip1cm
Next we study the question  - which of equations 8.2 have solutions $h^0$ with definite (positive or negative) Hessian $Hess(h^0 )=\left( \frac{\partial^2 h^0}{\partial y^i \partial y^j}\right)$.\par
Near a point where type of this equation is locally constant, this equation can be transformed,
by a change of variables, to one of three canonical types - Laplace equation (elliptic), degenerate
(parabolic) and wave (hyperbolic).  The type at a point $y_{0}=(x_{0},y_{0})$ is determined by the
determinant of the matrix

\beq
J(x,u)=\begin{pmatrix} -F^{1}_{1,y^2} & \frac{1}{2}(F^{1}_{1,y^1}-F^{1}_{2,y^2})\\
\frac{1}{2}(F^{1}_{1,y^1}-F^{1}_{2,y^2}) & F^{1}_{2,y^1}
\end{pmatrix}.
\eeq
\begin{enumerate}
\item Elliptic case: $Det(J)>0.$  In such a case, equation (8.2) is locally isomorphic to the standard Laplace
equation on the plane.  Its solutions are harmonic functions of $y^i$. Hessian matrix of a harmonic
function has zero trace - by the Laplace equation itself.  As a result, sum of its eigenvalues - trace of
the matrix is zero and a function $h^0$ constructed in such a way can not be positive or negative definite
in canonical coordinates. Yet, the change of variables that transformed equation (8.2) to the Laplace
equation is not tensorial and additional terms may change the sign of trace.  Yet, let us fix a point $y$
and make a linear change of variables that transform the quadratic form of equation (8.2) \textbf{at the
point $y$} to the canonical (Laplace) form.  In new coordinates $(w^1 ,w^2 )$ \textbf{at the point $y$} we
will have $h^{0}_{w^1 w^1}+h^{0}_{w^1 w^1}=Tr(Hess(h^{0})(w^i )=0.$  Now we use the fact that under a
linear transformations of variables the Hessian of a function transforms tensorially and, as a result, its
trace is invariant under the linear change of variables.  As a result, in original variables
$Tr(Hess(h^{0}))(y^i )=0.$  Thus, eigenvalues of this matrix have (nonzero) opposite signs and the Hessian
can not be positive or negative definite.
\item  Hyperbolic case: $Det(J)<0$.  Equation (8.2) by a local change of variables can be reduced to the
standard wave equation $(\partial^{2}_{v^2}-c^2\partial^{2}_{w^2})f=0.$ General solution of this equation
\[
f=f_{1}(w-cv)+f_{2}(w+cv)
\]
has the Hessian
\[
Hess(f)=\begin{pmatrix} c^2(f_{1}''+f_{2}'') & c(-f_{1}''+f_{2}'')\\
c(-f_{1}''+f_{2}'') & f_{1}''+f_{2}''
\end{pmatrix}.
\]
Determinant of Hessian is equal to $Det(Hess(h^{0})=4c^2 f_{1}'' f_{2}'' $. Thus, if functions
$f_{1},f_{2}$ have the same convexity type, both eigenvalues of Hessian (if they are nonzero) have the
same sign and the Hessian form is \textbf{definite}, if functions $f_{1},f_{2}$ have the opposite
convexity type , eigenvalues of Hessian (if they are nonzero) have the opposite sign and the Hessian form
is indefinite.

\item Degenerate case $Det(J)=0$.  In such a case the equation can be reduced to a parabolic
equation $(\partial_{w}-\partial^{2}_{v^2})h^0=0,$ or degenerate $\partial^{2}_{v^2}h^0=0$. For a
degenerate case any solution has the form $h=h_{1}(w)+h_{2}(w)v$ and its Hessian
\[
Hess(h)=\begin{pmatrix} h_{1}''+vh_{2}'' & h_{2}'\\
h_{2}' & 0
\end{pmatrix}.
\]
Thus, characteristic equation has the form
\[
\lambda^2 -(h_{1}''(w)+vh_{2}''(w))\lambda -h_{2}'(w)^2=0.
\]
Its roots
\[
\lambda_{\pm}=-\frac{1}{2}(h_{1}''(w)+vh_{2}''(w))\pm\sqrt{\frac{1}{4}(h_{1}''(w)+vh_{2}''(w))^2
+h_{2}'(w)^2}
\]
are also of the opposite sign and, therefore, Hessian matrix of $h^0$ can not be positive or negative
definite. Using the linear change of variables as for elliptic case and the fact that under such a change
of variables the homogeneous second order equation (8.2) transforms into the homogeneous second order
equation we get, in new coordinates $w^i$ equation of the same type as (8.2) such that at the chosen point
$y$ it has the form $h^{0}_{w^2 w^2}=0$. It follows from this that at this point $Det(Hess(h^0))(w)\leqq 0
$.  By the same tensorial arguments as for elliptic case this will be true at the point $y$ in
$y^i$-variables as well ($Det(Hess(h^0 ))$ is multiplied by the square of the Jacobian of linear
transformation and its sign is not changing) and the Hessian form $Hess(h^0 )(y)$ can not be positive or
negative definite (but can be positive or negative semi-definite !).
\end{enumerate}
These arguments prove the following
\begin{proposition} Let a balance system with one space variable $x$ and two fields
$y^1 ,y^2$ has the form

\[
\partial_{t}y^i +\partial_{x}F^{1}_{i}=\Pi_{i},\ i=1,2.
\]
If there is a (secondary) balance equation (2.1) with positive or negative definite $Hess(K^{0}(y))$, then
the equation (8.2) is hyperbolic:
\[
Det(J(u))=-F^{1}_{1,y^2 }F^{1}_{2,y^1}-\frac{1}{4}(F^{1}_{1,y^1}-F^{1}_{2,y^2 })^2<0.
\]
\end{proposition}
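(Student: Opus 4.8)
The plan is to argue by contraposition: assuming the second order equation (8.2) is \emph{not} hyperbolic at the chosen fiber point $y$, so that $Det(J)(y)\geq 0$, I will show that no density $K^0$ of a supplementary balance law — which by the reduction of this section must satisfy (8.2) — can have a positive or negative definite Hessian at $y$, contradicting the hypothesis. The whole argument rests on two compatibility facts under a \emph{linear} change of the fiber variables $y^i\to w^i$ evaluated at the single point $y$: first, equation (8.2) is carried into a homogeneous second order equation of the same shape whose symbol matrix $J$ is replaced by a congruent matrix $S^{T}JS$, so the sign of $Det(J)$ is preserved; second, since $K^0$ is a scalar, its Hessian at $y$ transforms as a genuine $(0,2)$-tensor, $Hess(K^0)\mapsto S^{T}Hess(K^0)S$, so both the definiteness of $Hess(K^0)(y)$ and the sign of $Det(Hess(K^0))(y)$ are preserved. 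This is precisely what lets me reduce to the canonical symbol at $y$ without disturbing the conclusion.

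With these invariances in hand, I would fix $y$, assume $Det(J)(y)\geq 0$, and choose $S$ bringing the quadratic form of (8.2) at $y$ to its canonical shape. In the elliptic subcase $Det(J)>0$ the equation at $y$ becomes the Laplace form $(\partial^2_{w^1}+\partial^2_{w^2})K^0=0$, whence $Tr(Hess(K^0))(y)=0$; because the trace is invariant under the linear change, the two eigenvalues of $Hess(K^0)(y)$ sum to zero and hence have opposite signs, so the Hessian is indefinite. In the degenerate subcase $Det(J)=0$ the equation at $y$ reduces to $\partial^2_{w^2}K^0=0$, i.e. the $(2,2)$ entry of $Hess(K^0)(w)$ vanishes at $y$; a $2\times 2$ symmetric matrix with a zero diagonal entry has nonpositive determinant, so $Det(Hess(K^0))(y)\leq 0$, and by the determinant invariance above $Hess(K^0)(y)$ cannot be positive or negative definite. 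In either case the assumed definiteness fails, so $Det(J)(y)<0$.

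It then remains only to record the explicit form of the determinant. Reading off $J$ from (8.3) and expanding gives $Det(J)=(-F^1_{1,y^2})(F^1_{2,y^1})-\frac14(F^1_{1,y^1}-F^1_{2,y^2})^2$, which is the inequality in the statement.

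The main obstacle, and the point that needs the most care, is the legitimacy of reducing to the canonical symbol \emph{at a single point}. Under a general nonlinear change of variables the Hessian of $K^0$ picks up extra terms proportional to the first derivatives $\partial_{w}K^0$, and the lower order terms of (8.2) also interfere, so neither the tensoriality of $Hess(K^0)$ nor the clean congruence law for $J$ would hold. The argument works only because I normalize the principal symbol by a \emph{linear} transformation and evaluate every object at the fixed point $y$: at $y$ the linear change acts tensorially on $Hess(K^0)$, the lower order terms of (8.2) do not enter the statements ``$Tr=0$'' and ``the $(2,2)$ entry vanishes'', and the variation of $J$ over the fiber is irrelevant. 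Making this localization explicit — that only the values at $y$ of the symbol and of the Hessian are used — is the step that must be stated with precision.
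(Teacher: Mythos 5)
Your proposal is correct and follows essentially the same route as the paper's own argument: the contrapositive, with a \emph{linear} change of variables normalizing the principal symbol at a single fixed point $y$, the trace-zero (hence indefinite Hessian) argument in the elliptic subcase $Det(J)>0$, and the vanishing diagonal entry giving $Det(Hess(K^{0}))(y)\leqq 0$ in the degenerate subcase $Det(J)=0$, all tied together by the congruence invariance of definiteness of the Hessian under linear transformations. Even your emphasis on why the transformation must be linear and pointwise (so that $Hess(K^{0})$ transforms tensorially and lower order corrections do not appear) matches the paper's own discussion, so there is nothing to add.
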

Let now the balance system (17.1) is  such that the defining equation (8.2) is hyperbolic, i.e $Det(J(y))<0$. Introduce
notations $a=-F^{1}_{1,y^2},\ 2b=(F^{1}_{1,y^1}-F^{1}_{2,y^2}),\ c=F^{1}_{2,y^1}$.  Then the equation
(17.2) for $h^0$ takes the form
\[
(a\partial^{2}_{y^2\ 2}+2b\partial^{2}_{y^1 y^2}+c\partial^{2}_{y^1\ 2})h^0 =0.
\]
Introduce the characteristic functions $\lambda_{1,2}$ of this equation as solutions of the quadratic
equation
\[
a\lambda^2 +2b\lambda +c=0.
\]
Let functions $\phi(y^1,y^2)$ and $\psi(y^1 ,y^2 )$ are Riemann invariants (see \cite{CG}), i,e, solutions
of the equations
\[
\begin{cases} \phi_{,y^1}-\lambda_{1}\phi_{,y^2}&=0,\\
\psi_{,y^1}-\lambda_{2}\psi_{,y^2}&=0.
\end{cases}
\]
Change of variables
\[
\begin{cases}
\xi &=\phi(y^1 ,y^2 ),\\
\eta &=\psi (y^1 ,y^2)
\end{cases}
\]
reduces the equation for $h^0$ to the canonical form
\[
h^{0}_{,\xi \eta}=0,
\]
having general solution of the form
\[
h^{0}=h_{1}(\xi )+h_{2}(\eta ).
\]
In variables $y^i$ (locally) general solution of last equation has the form
\[
h^0 (y^i)=h_{1}(\phi(y^1 ,y^2 ))+h_{2}(\psi(y^1 ,y^2 )).
\]
Calculating Hessian of this function we get
\begin{multline}
Hess (h^0 )=h_{1}^{'}(\phi(y^1 ,y^2 ))\begin{pmatrix} \phi_{y^1 y^1 }& \phi_{y^1 y^2}\\
\phi_{y^1 y^2}& \phi_{y^2 y^2 }\end{pmatrix}+h_{2}^{'}(\psi(y^1 ,y^2 ))\begin{pmatrix} \psi_{y^1 y^1 }& \psi_{y^1 y^2}\\
\psi_{y^1 y^2}& \psi_{y^2 y^2 }\end{pmatrix}+\\ +h_{1}^{''}(\phi(y^1 ,y^2 ))
\begin{pmatrix} \phi^{\ 2}_{y^1 }& \phi_{y^1 }\phi_{y^2}\\
\phi_{y^1 }\phi_{y^2}& \phi^{\ 2}_{y^2}\end{pmatrix}+ h_{2}^{''}(\psi(y^1 ,y^2 ))
\begin{pmatrix} \psi^{\ 2}_{y^1 }& \psi_{y^1 }\psi_{y^2}\\
\psi_{y^1 }\psi_{y^2}& \psi^{\ 2}_{y^2}\end{pmatrix}.
\end{multline}
In this expression third and forth matrices are positive semi-definite -one eigenvector of each of them is
zero, the other ($\lambda^{2}_{i}+1$) is positive. If the graphs of Riemann invariants $\phi ,\psi $ are
convex, then one can construct examples of $h^0$ with positive or negative definite Hessian by the choice
of $h_{1},h_{2}$ having first and second derivatives of the same sign. For instance, if the Hessian of
$\phi$ is positive definite, one can take $h_{2}=0$ and $h_{1}$ to be monotonically decreasing and convex
function.

Algorithmically this approach allows to choose functions $h_{i}$ for which the solution $h^0 $ has the
negative definite Hessian.

\begin{example} Let $\phi $ be any harmonic function of its variables.  Let $\chi $ be the harmonically
conjugate to $\phi $ (i.e. function $(\phi +i \chi )(y^1 +iy^2 )$ is analytic.  Then, due to the
Cauchy-Riemann equations
\[
Hess(\phi )=\begin{pmatrix} \chi_{y^1 y^2} & \phi_{y^1 y^2}\\
\phi_{y^1 y^2} & -\chi_{y^1 y^2}
\end{pmatrix},
\]
and $Det (Hess (\phi ))=-\chi^{2}_{y^1 y^2}-\phi^{2}_{y^1 y^2}\leqq 0.$  Choosing $h_{1}$ appropriately,
one can guarantee the existence of a function $h^0$ with the nonpositive Hessian.
\end{example}

\section{\textbf{Integrability condition and the D-module $\mathcal{M_C}$}.}
Integrability condition (6.4) represents an \textbf{overdetermined system of linear equations of second order} for the
function $h^0$. Therefore, except of the obvious linear by $w^i$ solutions might exist only under the certain
conditions on the current part of the constitutive relations.\par
  Consider a special case where densities $F^{0}_{i}(x,y)$ coincide with the basic fields $y^i$.
  In this case $w^{i}=y^{i}$ and condition (6.4)
\[\frac{\partial^2 h^0}{\partial y^i \partial y^j}dy^j\wedge d_{v}F^{A}_{i}(y)=0,\ A=1,\ldots ,n \]

takes the form
 \beq
F^{A}_{i,y^k}h^{0}_{,y^i y^j }dy^j \wedge dy^k =0,\ A=1,2,\ldots ,n,
 \eeq
or \beq F^{A}_{i,y^{[k}}\frac{\partial^2}{\partial y^{\vert i\vert} \partial y^{j]}}h^{0}=0,\ \text{for\
all} \ k\ne j,A=1,2,\ldots ,n. \eeq

 This system has all the functions $h^{0}=\sum_{k}c_{k}(x)y^k$ linear by $y^i$ as
trivial solutions, corresponding to the linear combination of balance equations of the system
($\bigstar$). Generically there are no other solutions.  This has its reflection in the known fact that
the entropy principle - existence of a nontrivial solution of this system with the negative definite
Hessian place strong restrictions to the form of the constitutive relations $\mathcal{C}$.\par

Introduce the vertical vector fields $\eta^{A}_{k}=F^{A}_{i,y^k}(x,y)\partial_{y^i}$, then the system
(9.2) takes the form

\beq (\eta^{A}_{k}\partial_{y^j}-\eta^{A}_{j}\partial_{y^k})h^0 =0,\ k\ne j,A=1,2,\ldots ,n. \eeq

This system determines the cyclic $\mathcal{D_{E}}$-module $\mathcal{M}_{C}$ over the ring $\mathcal{E}$
of smooth functions of $y^i$ containing the coefficients $F^{A}_{i}$ (\cite{B}).  Let now the components
$F^{\mu}_{i}$ of the current are analytic functions. Consider their extension to the complexification
$Y^{c}$ of the manifold $Y$  (see \cite{B,Ka}).  This allows us to extend $\mathcal{D_{E}}$ to the
$\mathcal{D}_{\mathcal{O}}$-module where $\mathcal{O}(Y^c )$ is the sheaf of holomorphic functions on
$Y^c$. Choosing the admissible filtration of the algebra $\mathcal{D_{E}}$ we get the graded module
$gr(\mathcal{M}_{\mathcal{C}})=\mathcal{E[\zeta_j ]}/I$ where the ideal $I$ is generated by the quadratic
polynomials
\beq
P^{A}_{ij}(\zeta) =\eta^{A}_{i}(\zeta)\zeta_{j}-\eta^{A}_{j}(\zeta)\zeta_{i}=F^{A}_{k,y^i}(x,y)\zeta_k \zeta_j -F^{A}_{k,y^j}(x,y)\zeta_k \zeta_i.
\eeq
The characteristic variety $Char(\mathcal{M_{C}})$ of $D$-module $\mathcal{M_{C}}$ (more exactly, the
relative characteristic variety with respect to the projection $Y\rightarrow X$ (see \cite{B}, Sec.1.6) is
defined as the support of the $\mathcal{E[\zeta_j ]}$-module $gr(\mathcal{M_{C}})$ - i.e. as the
coisotropic subset of the (vertical) cotangent bundle $V^{*}(Y)\rightarrow Y$ where
\[
P^{A}_{ij}(\zeta)=0, \text{for\ all}\  i,j,A.
\]

 The set of expressions  in the right side of (9.4) can be considered as the collection of coefficients of the
2-form
\[
(\eta^{A}_{i}(\zeta)dq^i)\wedge (\zeta_{j}dq^j )
\]
with the coefficients being functions in $Y$. This wedge product is zero if and only if the 1-forms are
proportional, i.e. if
\[
\eta^{A}_{i}(\zeta)=\mu^A (x,y)\zeta_{i}, A=1,\ldots,n;i=1,\ldots,m
\]
with some factors $\mu^{A}(x,y)$.  Left side of this equation can be considered as the result of
application of linear operator $L^A$ with the matrix $L^{A\ i}_{j}= F^{A}_{j,y^i}(x,y)$ to the (co)vector
 with components $\zeta_{j}$.  Then the last equality takes the form
\[
L^{A}{\vec \zeta}=\mu^{A}(y){\vec \zeta}.
\]
Thus, a (vertical) covector $(y,\zeta )$ belongs to the support of the module $gr(\mathcal{M_{C}})$ if and
only if $\zeta$ is the common (real) eigenvector of operators $L^{A}(y)$ for all $A=1,\ldots ,n.$
\emph{Generically} there are no nonzero vector with this property and $Char(\mathcal{M_{C}})(x)=U_{x},\ x\in X$ coincide with the zero section of the (vertical) cotangent bundle.
\par
For each $A=1,\ldots, n$ let
\[P^{A}(x,y)(\eta)=\prod_{\mu^j (x,y)\in \sigma(L^A )} (\eta -\mu^j (x,y))\]
to be the reduced characteristic polynomial of endomorphism $L^A$. Product here is taken over all
different eigenvalues $\mu^j$ of the endomorphism $L^{A}(x,y)$. It is easy to see that the operator
$P^{A}(x,y)(L^A )$ annulate all the eigenvectors of $L^{A}(x,y)$ and no other vectors. To see it let
$E(\mu)$ be a root subspace corresponding to the eigenvalue $\mu \in \sigma(L^A )$. Then any operator $L^A
-\mu^j (x,y)I$ in the product above with $\mu^j\in \sigma(L^A ), \mu^j \ne \mu$ leaves the subspace
$E(\mu)$ invariant and is invertible on this subspace. Since factors in this product commute between
themselves the kernel of $P^{A}(x,y)(L^A )$ restricted on $E(\mu)$ is the subspace of eigenvectors with
eigenvalue $\mu$. Therefore, condition for a vector $\zeta \in T^{*}_{x,y}(U^{c}_{x,y})$ to belong to the
characteristic variety of the $\mathcal{D_{O}}$-module $\mathcal{M_{C}}$ is equivalent to the fulfillment
of the system of linear relations

\beq P^{A}(x,y)(L^A )\zeta=0,\ A=1,\ldots, n. \eeq
As a result we have proved the following
\begin{theorem}
\begin{enumerate}
\item Let $\mathcal{C}$ be a a constitutive relation of the RET type with $F^{0}_{i}=y^i$. Then a function $h^{0}(y^i)$
generates a supplementary balance law with the functionally independent Lagrange multipliers (main fields) $\lambda^i$
in a neighborhood of a point $(x_{0},y_{0})$ if and only if it is a solution of the cyclic D-module
$\mathcal{M_{C}}$
\[
(\eta^{A}_{k}\partial_{y^j}-\eta^{A}_{j}\partial_{y^k})h^0 =0,\ k\ne j,A=1,2,3,
\]
where $\eta^{A}_{k}=F^{A}_{i,y^k}(x,y)\partial_{y^i}$ and such that the (vertical) Hessian
\[ Hess(h^{0})(x_{0},y_{0})=\left( \frac{\partial^2 h^0}{\partial y^{i}\partial
y^{j}}\right)\vert_{x=x_{0},y=y_{0}}\] is nondegenerate in a neighborhood of a point $(x_{0},y_{0})$.
\item
Characteristic variety of the cyclic $\mathcal{D_{E}}$-module $\mathcal{M_{C}}$ is the union of the zero
section of the vertical cotangent bundle $V^{*}(Y)\rightarrow Y$ and the set of lines generated by the
common eigenvectors of the linear operators $L^{A}(y):V^{*}_{y}(Y)\rightarrow V^{*}_{y}(Y),\  A=1,2,3$ defined by
\[
(L^{A}(y)\zeta )_j =F^{A}_{k,\ y^j }\zeta_k.\]
\end{enumerate}
\end{theorem}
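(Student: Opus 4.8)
The plan is to obtain both assertions as consequences of material already assembled above: statement (1) is the specialization of Theorem 4 to the case $F^{0}_{i}=y^i$, while statement (2) is the computation of the support of $gr(\mathcal{M}_{\mathcal{C}})$ carried out in the paragraphs preceding the theorem. Accordingly I would present the argument as a careful assembly rather than as a fresh computation.

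For (1), I would begin from the integrability condition (6.4) of Theorem 4. Putting $F^{0}_{i}=y^i$ forces $w^i=y^i$, so the vertical Hessian in the $w$-variables is the ordinary Hessian $\left(\partial^2 h^0/\partial y^i\partial y^j\right)$. Expanding $d_{v}F^{A}_{i}=F^{A}_{i,y^k}dy^k$ in (6.4) yields (9.1), and antisymmetrizing the coefficient in the pair $(j,k)$ turns this into the PDE system (9.3), which is precisely membership of $h^0$ in the cyclic module $\mathcal{M}_{\mathcal{C}}$ with $\eta^{A}_{k}=F^{A}_{i,y^k}\partial_{y^i}$. The only point still needing justification is the equivalence between functional independence of the main fields $\lambda^i=\partial h^0/\partial y^i$ and nondegeneracy of $Hess(h^0)$: differentiation gives $\partial\lambda^i/\partial y^j=\partial^2 h^0/\partial y^i\partial y^j$, so the Jacobian of the gradient map $y\mapsto\lambda$ coincides with $Hess(h^0)$, and by Lemma 3 its nondegeneracy is equivalent to local invertibility of that map, hence to functional independence of the $\lambda^i$. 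This closes (1).

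For (2), I would recall that $Char(\mathcal{M}_{\mathcal{C}})$ is by definition the support of $gr(\mathcal{M}_{\mathcal{C}})=\mathcal{E}[\zeta]/I$, that is, the common zero locus in $V^{*}(Y)$ of the quadratic generators (9.4), $P^{A}_{ij}(\zeta)=\eta^{A}_{i}(\zeta)\zeta_{j}-\eta^{A}_{j}(\zeta)\zeta_{i}$. The decisive algebraic step is to read these scalars as the components of the $2$-form $(\eta^{A}_{i}(\zeta)dq^i)\wedge(\zeta_{j}dq^j)$, which vanishes exactly when its two factor $1$-forms are proportional. For $\zeta\ne 0$ this means $\eta^{A}_{i}(\zeta)=\mu^{A}(x,y)\zeta_{i}$ for some scalar $\mu^A$, and since $\eta^{A}_{i}(\zeta)=F^{A}_{k,y^i}\zeta_k=(L^{A}\zeta)_i$, the condition reads $L^{A}\zeta=\mu^{A}\zeta$. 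Imposing it for every $A$ identifies the nonzero points of the characteristic variety with the common eigenvectors of the operators $L^A$; adjoining the zero section, on which every $P^{A}_{ij}$ vanishes trivially, gives the asserted union of the zero section and the lines spanned by common eigenvectors.

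The main obstacle, such as it is, lies entirely in (2). One must verify carefully that simultaneous vanishing of all components $P^{A}_{ij}$ is genuinely equivalent to proportionality of the two $1$-forms, treating the degenerate locus $\zeta=0$ separately, and one should invoke the reduced-characteristic-polynomial argument---that $P^{A}(x,y)(L^A)$ annihilates exactly the eigenvectors of $L^A$, via the decomposition into root subspaces---to confirm that the common-eigenvector set is cut out precisely by the linear relations (9.5), and not by a strictly larger variety. Everything else reduces to direct substitution into Theorem 4 and Lemma 3.
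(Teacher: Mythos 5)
Your proposal is correct and follows essentially the same route as the paper: part (1) by specializing the integrability condition (6.4) and Theorems 3--4 to the case $w^i=F^{0}_{i}=y^i$ (with the equivalence between functional independence of the $\lambda^i$ and nondegeneracy of $Hess(h^0)$ supplied by the gradient-mapping Lemma 3), and part (2) by computing the support of $gr(\mathcal{M}_{\mathcal{C}})$ via the observation that vanishing of all $P^{A}_{ij}(\zeta)$ is the vanishing of the $2$-form $(\eta^{A}_{i}(\zeta)dq^i)\wedge(\zeta_j dq^j)$, hence proportionality $L^{A}\zeta=\mu^{A}\zeta$, i.e.\ the common-eigenvector condition. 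The paper's surrounding text does exactly this, including the reduced-characteristic-polynomial step you flag as the delicate point, so no gap remains.
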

Deeper study of the D-module  $\mathcal{M_{C}}$ and the corresponding secondary balance laws for a RET
system with a constitutive relation $C$ will be presented elsewhere.\par

\section{\textbf{Defining system for $h^0$.}}
In applications it rarely happens that $F^{0}_{i}=y^i$.  In such a case one has to use solvability conditions (6.4) written in terms of variables $w^i$:
\beq
\frac{\partial^2 h^0}{\partial w^i \partial w^j}dw^j \wedge {\tilde F}^{A}_{i,w^k}dw^k=0 \Leftrightarrow ({\tilde F}^{A}_{i,w^j}\partial_{w^k}\partial_{w^i}-{\tilde F}^{A}_{i,w^k}\partial_{w^j}\partial_{w^i})h^0 =0,\ j\ne k;\ A=1,2,3.
\eeq
This requires the reversal of relations $w^i=F^{0}_{i}(y^j )$ to get $y^j= G^{j}(w^k )$ and use them to determine the flux components $\tilde F^{A}_{i}(w)$ as functions of $w^i$.\par
It is easy to see that the statements of Theorem 5 are valid for this case. In particular, characteristic variety of the overdetermined system 10.1 is formed by the union of zero section of the dual vertical bundle $V^{*}(\pi)\rightarrow Y$ and the common eigenvectors of three linear operators $L^{A}(w):V^{*}_{w}(Y)\rightarrow V^{*}_{w}(Y),\  A=1,2,3$ defined by
\beq
(L^{A}(w)\zeta )_j =\tilde{F}^{A}_{k,\ w^j }\zeta_k.\eeq
If three matrices $L^{A}_{kj}=\tilde{F}^{A}_{k,\ w^j }$ \emph{are in generic position}, they do not have common eigenvectors and characteristic variety of system (10.1) coincide with the zero section of the vertical cotangent bundle. This proves the first and, therefore, second statements of the following
\begin{proposition}\begin{enumerate} \item Generically, characteristic variety of the defining system (10.1) coincide with the zero section of the vertical cotangent bundle $V^{*}\rightarrow Y$.
\item Generically, the overdetermined system (10.1) is \textbf{holonomic}.
\item Generically, in the case where constitutive relations $C$ are analytical, the space of solutions of (10.1) is finite-dimensional.
\end{enumerate}
\end{proposition}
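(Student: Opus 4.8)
The plan is to derive all three assertions from the description of the characteristic variety of $\mathcal{M}_{\mathcal{C}}$ already recorded above (the analog of Theorem 5 valid for the system (10.1)), according to which, at each point $(x,w)\in Y$, the fiber of $Char(\mathcal{M}_{\mathcal{C}})$ is the union of the origin of $V^{*}_{w}$ and the common eigenlines of the three operators $L^{A}(w)$ with matrices $L^{A\,i}_{j}=\tilde{F}^{A}_{i,w^{j}}$. Granting this, only the genericity claim in part (1) requires genuine work; parts (2) and (3) will then be formal consequences of standard facts about coherent $\mathcal{D}$-modules. The first thing I would do is fix the meaning of \emph{generic}: the only data entering (10.1) are the fields $L^{A}(w)$, and "generic position" is to mean that at a generic point the triple $(L^{1},L^{2},L^{3})$ lies outside the locus of triples admitting a nonzero common eigenvector.

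To establish (1) I would show that this common-eigenvector locus is a proper (closed, nowhere dense) subvariety of the space $M^{3}$ of triples of $m\times m$ matrices by an incidence-variety count. Inside $\mathbb{P}^{m-1}\times M^{3}$ consider the set $I$ of those $([\zeta],L^{1},L^{2},L^{3})$ with $L^{A}\zeta\wedge\zeta=0$ for every $A$. For fixed $[\zeta]$ each condition $L^{A}\zeta\parallel\zeta$ is linear in $L^{A}$ of codimension $m-1$, so the fiber has codimension $3(m-1)$ and $\dim I=(m-1)+3m^{2}-3(m-1)=3m^{2}-2(m-1)$. Projecting $I$ to $M^{3}$, the image has codimension at least $2(m-1)$, which is positive once $m\geq 2$. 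Hence for $m\geq 2$ a generic triple admits no common eigenvector, the set of common eigenlines is empty, and by the cited description $Char(\mathcal{M}_{\mathcal{C}})$ reduces to the zero section of $V^{*}(\pi)\to Y$. (Equivalently: a single matrix with simple spectrum has only the finitely many eigenlines $[\zeta_{1}],\dots,[\zeta_{m}]\in\mathbb{P}^{m-1}$, and finitely many fixed lines generically fail to be eigenlines of a second generic matrix.)

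For (2) I would invoke Bernstein's inequality: for any nonzero coherent $\mathcal{D}$-module on an $m$-dimensional base the characteristic variety has dimension at least $m$, holonomicity meaning that this dimension equals $m$. Since by (1) the characteristic variety is the zero section, a copy of the base of dimension $m$ (indeed a Lagrangian in $V^{*}(\pi)$), $\mathcal{M}_{\mathcal{C}}$ attains the minimal dimension and is therefore holonomic. In fact the inclusion $Char(\mathcal{M}_{\mathcal{C}})\subset\{\zeta=0\}$ shows more, namely that $\mathcal{M}_{\mathcal{C}}$ is $\mathcal{O}$-coherent, i.e. an integrable connection of finite rank $r$.

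For (3), in the analytic setting the step from holonomicity to finiteness is standard: by Kashiwara's constructibility theorem the holomorphic solution complex of a holonomic system has finite-dimensional cohomology, so in particular the local solution space is finite-dimensional; using the sharper remark of the previous paragraph, an integrable connection of rank $r$ has local solution space of dimension exactly $r$ by the holomorphic Frobenius (Cauchy--Kovalevskaya) theorem. I expect the real obstacle to sit entirely in part (1): pinning down the precise sense of the "generic" hypothesis and verifying that the common-eigenvector locus is a proper subvariety of the \emph{constitutive} data, since the dimension count above must be supplemented by the observation that the fields $L^{A}(w)$ actually sweep out an open subset of $M^{3}$ as $(x,w)$ and the fluxes $\tilde{F}^{A}_{i}$ range; the implications (2) and (3) are then immediate citations from $\mathcal{D}$-module theory.
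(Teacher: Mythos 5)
Your proposal follows essentially the same route as the paper: both reduce everything to the eigenvector description of the characteristic variety (the analog of Theorem 5), observe that generic triples of matrices $L^{A}$ admit no common eigenvector so that $Char(\mathcal{M}_{\mathcal{C}})$ collapses to the zero section, deduce holonomicity from the minimality of the dimension of the zero section, and then cite the finiteness of solution spaces of holonomic systems (Kashiwara/Bj\"ork) for part (3). The only difference is one of rigor, in your favor: the incidence-variety dimension count and the appeal to Bernstein's inequality supply proofs of the two steps that the paper merely asserts (``if the three matrices are in generic position, they do not have common eigenvectors'' and ``this proves the first and, therefore, second statements'').
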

Third statement follows from the second one and the finiteness of the space of solutions of a holonomic system (see \cite{Ka} or \cite{B}).

Rewrite condition (6.2) in the form
\[
d_{v}\left( \frac{\partial h^0}{\partial w^i}\right)\wedge d_{v}(F^{A}_{i})=0,\ A=1,2,3,
\]
or, in the form
\beq
d_{v}\left[ \left( \frac{\partial h^0}{\partial w^i}\right) d_{v}(F^{A}_{i}) \right]=0 \Leftrightarrow
d_{v}\left[ \left( \left( F^{A}_{i,j}\frac{\partial}{\partial w^i}\right) h^0\right) dw^j\right]=0,\ A=1,2,3.
\eeq

Last equation can be written in the form
\[
d_{v}\left[ (\eta^{A}_{j}\cdot h^0)dw^j\right]=0,
\]
using the vector fields $\eta^{A}_{j}=F^{A}_{i,j}\partial_{i}$ and the corresponded "twisted differential" $d^{FA}$ of the functions
\beq
d^{FA} \cdot f =(\eta^{A}_{j}\cdot f) dw^j.
\eeq
It can also be presented in the form
\beq
d_{v}(J^A d_{v}h^0)=0,
\eeq
where $J^A :V^{*}(\pi) \rightarrow V^{*}(\pi)$ is the endomorphism of the vertical cotangent bundle defined by $F^{A}_{i,j}$:
\[
J^A:\mu_{k}dw^k \rightarrow F^{A}_{i,j}\mu_i dw^j.
\]
\begin{remark} As it follows from the first form (10.3) of defining system, functions $h^0 =a_{i}w^i+a,a_{i},a \in \mathbb{R}$ are solutions of defining system for all balance systems.
\end{remark}
\begin{lemma} Twisted differential $d^{FA} :C^{\infty}(U_x)\rightarrow \Omega^{1}(U_{x})$ is elliptic iff $Ker(F^{A}_{\cdot ,\cdot})= 0$.
\end{lemma}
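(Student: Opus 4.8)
The plan is to read off the principal symbol of the first-order operator $d^{FA}$ and to recognize ellipticity as injectivity of that symbol, since $d^{FA}$ is overdetermined: it maps the scalar line bundle of functions into the rank-$m$ bundle of vertical $1$-forms. Unwinding the definition (10.4), I would write $d^{FA}f=(\eta^{A}_{j}f)\,dw^j=\left(F^{A}_{i,j}\,\partial_{w^i}f\right)dw^j$, which is a first-order operator with vanishing zeroth-order part. Its principal symbol at a nonzero real covector $\zeta=\zeta_i\,dw^i$ is obtained by the substitution $\partial_{w^i}\rightsquigarrow\zeta_i$, giving the linear map $\sigma_\zeta:\mathbb{R}\to\mathbb{R}^{m}$, $c\mapsto c\left(F^{A}_{i,j}\zeta_i\right)dw^j=c\,(L^{A}\zeta)$, where $L^{A}$ is exactly the endomorphism of $V^{*}(\pi)$ introduced in (10.2), namely $(L^{A}\zeta)_j=F^{A}_{i,j}\zeta_i$.

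Next I would invoke the definition of ellipticity appropriate to this setting: for an operator from a line bundle into a bundle of higher rank, ellipticity means injectivity of $\sigma_\zeta$ for every real $\zeta\neq 0$ (one cannot ask for invertibility). Because the source is one-dimensional, $\sigma_\zeta(c)=c\,(L^{A}\zeta)$ is injective precisely when $L^{A}\zeta\neq 0$. Hence $d^{FA}$ is elliptic if and only if $L^{A}\zeta\neq 0$ for all real $\zeta\neq 0$, i.e. if and only if $\ker L^{A}=0$.

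Finally I would relate $\ker L^{A}$ to the stated hypothesis. The matrix of $L^{A}$ is the transpose of the Jacobian matrix $F^{A}_{\cdot,\cdot}=(F^{A}_{i,j})$; since both the flux index $i$ and the field index $j$ range over $1,\ldots,m$, this matrix is square. For a square matrix the kernel of the matrix and the kernel of its transpose vanish simultaneously, each being equivalent to $\det\neq 0$. Therefore $\ker L^{A}=0\Leftrightarrow\ker(F^{A}_{\cdot,\cdot})=0$, which is the assertion of the lemma.

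The only genuinely delicate point I expect is the bookkeeping around ellipticity in the overdetermined regime: one must use injectivity of the symbol rather than its invertibility, and keep careful track of the transpose so that the kernel condition is stated for the right operator (it is the squareness of $F^{A}_{\cdot,\cdot}$, forced by writing one balance equation per basic field, that lets $\ker L^{A}=0$ and $\ker(F^{A}_{\cdot,\cdot})=0$ coincide). The symbol computation and the square-matrix step are otherwise routine.
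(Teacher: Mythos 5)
Your proof is correct and follows essentially the same route as the paper: compute the principal symbol of the first-order operator $d^{FA}$, observe that a covector $\zeta$ is characteristic iff $F^{A}_{i,j}\zeta_i=0$ for all $j$, and identify ellipticity with the absence of nonzero characteristic covectors. Your extra care with the two points the paper glosses over---that ellipticity of an overdetermined operator means \emph{injectivity} of the symbol, and that the characteristic condition involves the transpose of $(F^{A}_{i,j})$, which for a square matrix has trivial kernel exactly when $(F^{A}_{i,j})$ does---is a welcome tightening, not a different argument.
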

\begin{proof} It is easy to see that the symbol $\sigma(d^{A}F):U\otimes T^{*}(U)\rightarrow \Omega^{1}(U)$ has the form:
\[
\lambda =\{ \lambda^{i}\}\rightarrow F^{A}_{i,j}\lambda^i dw^j.
\]
Thus, a (constant) element $\lambda =\{ \lambda^{i}\}$ is characteristic for $d^{FA}$, at a point $w$ iff $F^{A}_{i,j}\lambda^i=0$ for all $j$.
\end{proof}

\begin{corollary} System (10.1) is elliptic if and only if intersection \[\bigcap_{A=1}^{3}Ker(F^{A}_{\cdot ,\cdot})=0.\]
\end{corollary}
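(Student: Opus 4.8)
The plan is to deduce the corollary directly from the preceding Lemma on the twisted differential $d^{FA}$, using the elementary principle that, for an overdetermined system assembled from several differential operators, a covector is characteristic for the whole system precisely when it is characteristic for \emph{every} constituent operator. Consequently the characteristic set of the assembled system is the intersection of the individual characteristic sets, and the system is elliptic if and only if that intersection contains no nonzero real covector.

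First I would fix a point $w$ and recall that ellipticity means injectivity of the principal symbol on every nonzero real covector $\zeta\in V^{*}_{w}(\pi)$. I would then write the defining system in the factored form (10.5), $d_{v}(J^{A}d_{v}h^{0})=0$ for $A=1,2,3$, so that it is the family of operators $P^{A}=d_{v}\circ d^{FA}$ built from the twisted differentials $d^{FA}$ of (10.4). By the preceding Lemma the symbol of $d^{FA}$ at $\zeta$ is the line map $c\mapsto c\,L^{A}\zeta$, with $(L^{A}\zeta)_{j}=F^{A}_{k,w^{j}}\zeta_{k}$, so that $\zeta$ is characteristic for $d^{FA}$ exactly when $L^{A}\zeta=0$, i.e. when $\zeta\in Ker(F^{A}_{\cdot,\cdot})$. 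Taking the three operators together, the common characteristic covectors form $\bigcap_{A=1}^{3}Ker(F^{A}_{\cdot,\cdot})$, and the assembled system is elliptic if and only if this intersection reduces to $\{0\}$, which is the asserted equivalence.

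The main obstacle is reconciling this first-order reading with the genuine second-order nature of (10.1). Computing the symbol of $P^{A}=d_{v}\circ d^{FA}$ directly, using that the symbol of the outer $d_{v}$ on $1$-forms is $\omega\mapsto\zeta\wedge\omega$, gives $c\mapsto c\,(\zeta\wedge L^{A}\zeta)$; its vanishing for all $A$ signals that $\zeta$ is a common real \emph{eigenvector} of the operators $L^{A}$ (the characteristic variety already identified in Section 9 and in Proposition 1), not merely a common element of their kernels. The kernel condition and the eigenvector condition coincide only after the spurious factor $\zeta$ contributed by the outer $d_{v}$ is divided out, i.e. once ellipticity is tested on the first-order part $d^{FA}$ as in the preceding Lemma rather than on the full second-order symbol $\zeta\wedge L^{A}\zeta$. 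The delicate step is therefore to justify that the relevant notion of ellipticity for the overdetermined defining system is the one carried by the twisted differentials $d^{FA}$, so that the vanishing of $\bigcap_{A}Ker(F^{A}_{\cdot,\cdot})$ — and not the larger requirement that the $L^{A}$ have no common eigenvector — is what is being asserted; with that identification in place the corollary is immediate from the Lemma applied to each $A$ and intersected over $A$.
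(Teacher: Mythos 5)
Your first two paragraphs are precisely the paper's (implicit) proof: the corollary is stated with no separate argument, being read off from the preceding Lemma — a nonzero covector $\zeta$ is characteristic for the assembled system iff it is characteristic for every twisted differential $d^{FA}$, i.e. iff $\zeta\in\bigcap_{A=1}^{3}Ker(F^{A}_{\cdot,\cdot})$, so ellipticity is equivalent to that intersection being $0$. The worry in your last paragraph is resolved exactly the way you suspect: the paper uses \emph{elliptic} in the sense carried by the first-order operators $d^{FA}$ of the Lemma, and reserves the stronger common-eigenvector condition (vanishing of the full second-order characteristic variety $\zeta\wedge L^{A}\zeta=0$, $A=1,2,3$) for \emph{holonomicity} — compare Proposition 1 of Section 10 and the Cattaneo example there, where the two conditions are verified separately, the kernel condition yielding ellipticity and the absence of common eigenvectors yielding holonomicity.
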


System (10.1) has the $m+1$-dim space of "trivial" solutions - $h^{0}=a_{i}w^i +a$ where $a_{i},a$ are functions of $x^\mu$ only.

\begin{example} Systems of the form (10.1) may put a very diverse restrictions to the function $h^0$ and to the constitutive relations in terms of $F^{\mu}_{i}$.  To illustrate this diversity we consider the simplest case when $m=2$ so that there are only two variables $w^1,w^2$ and for all $A$, matrices $F^{A}_{i,j}$ \textbf{are constant}.\par
More then this, consider just one equation of the form (10.1) with the constant matrix $F_{ij}=F_{i,j}$  Thus, the equation for $h^{0}(w^1,w^2)$ has the form
\beq
d_{v}(F^{j}_{i} h^{0}_{,w^k}dw^k)=0.
\eeq
In the next Table we collect information about the from and type of solutions of equation (10.6) for different constant matrices $F_{ij}:$

\begin{table}[h]
\begin{center}
\begin{tabular}{|l|r|r|}
  \hline
  Matrix $F_{ij}$ &   Equation & Solutions ,\\ \hline
  $F_{ij}=\begin{pmatrix}a & 0\\0 & b \end{pmatrix},a\ne b$ & $h^{0}_{y^1 y^2}=0$                 & $h^{0}=f(y^1)+g(y^2)$, \\ \hline
  $F_{ij}=\begin{pmatrix}0 & a\\0 & 0 \end{pmatrix},a\ne 0$ & $h^{0}_{y^2y^2}=0$                  & $h^0=f(y^1) y^2+g(y^1)$, \\ \hline
  $F_{ij}=\begin{pmatrix}0 &1\\-1 & 0 \end{pmatrix}$        & $h^{0}_{,y^1y^1}+h^{0}_{y^2y^2}=0$  & Harmonic\  functions on the plane,   \\ \hline
  $F_{ij}=\begin{pmatrix}0 &1\\1  & 0 \end{pmatrix}$        & $h^{0}_{,y^1y^1}-h^{0}_{y^2y^2}=0$  & $h^0 =f(y^1 +y^2 )+g(y^1 -y^2)$.\\ \hline
$F_{ij}=\begin{pmatrix}a &b\\-b & -a \end{pmatrix}$        & $h^{0}_{,11}+2\frac{b}{a}h^{0}_{12}+h^{0}_{22}=0$  & Elliptic if $\frac{b^2}{a^2}<0$, hyperbolic if  $\frac{b^2}{a^2}>0$   \\ \hline
\end{tabular}
\caption{Type of solutions of equation 10.6. }\label{Ta}
\end{center}
\end{table}

From this table we see obvious relation between the type of equation and the action of 1-parameter subgroup of $SL(2,R)$ corresponding to matrices $M$.
\end{example}

\begin{example} Let $m$ be arbitrary and matrix $M=F^{A}_{i,j}$ - constant and diagonal: $M=diag(a_{i})$. Arrange diagonal terms so that $M$ would take block-diagonal form such  that  each block $M_l$ of size  $m_l\times m_l$ corresponds to the subset of variables $y^{j},\ j\in I_l$, has the form $a_{l}I_{m_l}$ and $a_{l}\ne a_{s},s\ ne l$.  Here $[1,m]=I_{1}\cup\ldots I_{k}$ is the splitting of the set of indices $[1,m]$ into non-intersecting subsets.   Then it is easy to see that each subsystem of the system (10.1), corresponding to the fixed value of $A$,  splits into the equations
\[a_i \delta^{i}_{j}h^{0}_{,y^j y^k}-a_i \delta^{i}_{k}h^{0}_{,y^j y^j}=0\Leftrightarrow (a_j-a_k)h^{0}_{,jk}=0.\]
General solution of this system has the form
\beq
h^0 =\sum_{l}h^{l}(y^j,j\in I_{l})
\eeq
where all the functions $h^{l}(y^j,j\in I_{l})$ are arbitrary functions of their variables.
\end{example}
\begin{example} Let $M=diag(f_{i}(y^1))$ - diagonal matrix with coefficients depending only on $y^1$. It is easy to see that in the case where diagonal elements $f_i$ of $M$ may depend on all the vertical variables, equations 10.1 have the form
\[ d(f_{i}h^{0}_{,i}dx^i)=0 \Leftrightarrow (f_{i,j}h^{0}_{,i}+f_{i}h^{0}_{,ij})dx^j\wedge dx^i =0.\]
This can be rewritten in the form
\beq
f_{i,j}h^{0}_{,i}-f_{j,i}h^{0}_{,j}=(f_{j}-f_{i})h^{0}_{,ij}, \ i\ne j.
\eeq
In the case, where coefficients $f_i$ depend on $y^1$ only, this system splits:
\beq
\begin{cases}
f_{i,1}h^{0}_{,i}=(f_{1}-f_{i})h^{0}_{,i1},\ i>1,j=1,\\
(f_{j}-f_{i})h_{,ij}=0,\ i,j>1.
\end{cases}
\eeq

First equation can be written in the form
\[
\frac{f_{i,1}}{f_{1}-f_{i}}=\frac{h^{0}_{,i1}}{h^{0}_{,i}}=(ln(\vert h^{0}_{,i} \vert))_{,1}.
\]
Integrate by $y^1$ and denote a antiderivative of $\frac{f_{i,1}}{f_{1}-f_{i}}$ by $F^{i}(y^1)$.  We get
\beq
ln(\vert h^{0}_{,i} \vert)=F^{i}(y^1)+G^{i}(\hat y^i
).\eeq
Function $G^{i}$ may depend on all the variables $y^i$ except $y^1$.\par
Last relation  can be rewritten in the form
\beq
h^{0}_{,i}=\pm e^{F^{i}(y^1)}e^{G^{i}(\hat y^i
)}.
\eeq
notice that $h^0=0$ is also solution of the first equation above.  As a result, we get
\beq
h^{0}_{,y^i}=\phi^{i}(y^1)e^{G^{i}(\hat y^i)},
\eeq
where $\phi^{i}(y^1)$ are arbitrary functions of one variable $y^1$.\par
Calculating now derivative by $y^j,j\ne i,1$ and substituting obtained relation into the second equation in 10.9 we see that
\[
f_{i}\ne f_{j}\Rightarrow G^{i}_{,ij}=0.
\]
Restrict now to the case where $f_{i}\ne f_{j}$. Then for all $i,j>1$ such that $i\ne j$, $G^{i}_{,ij}=0$.  This, finally gives us
\beq
h^{0}=\sum_{i>1}\phi^{i}(y^1)e^{G^{i}(y^i)},
\eeq
for arbitrary functions $\phi^i(y^1),G^{i}(y^i)$.\par
This Example illustrates the fact of the following pattern; if entries of constitutive relations depend on some variable in non-symmetrical way (temperature and density are obvious examples), entries of supplementary balance laws will depend on the same variable in non-symmetrical way. In the case of Cattaneo heat propagation (see below, Sec.12) such variable is temperature.  As a result, in the expression of the internal energy and in the entropy balance there appears the new constitutive function $\hat{\lambda}^{0}(\va).$
\end{example}

\vskip1cm

In practical situations it is more convenient to use variables $y^i$ because using $w^i$ one has to express $F^{A}_{i}$ as functions of $w^i$ and that leads to the complex expressions.  That is why we will rewrite the compatibility equation - defining system (61.3) in terms of variables $y^i$. We will use the relations \[ dw^i=F^{0}_{i,y^j}dy^j,\ \frac{\partial h^0}{\partial w^i}=\frac{\partial h^0}{\partial y^k}\frac{\partial y^k}{\partial w^i}.\]
Introduce the matrix $W^{i}_{j}=\frac{\partial F^{0}_{i}}{\partial y^j}$ and \emph{assume that this matrix is non-degenerate} (regular case).  Let $W^{-1}$ be the inverse matrix.  Then we have in (10.1):
\[
F^{A}_{i,j}\frac{\partial h^0}{\partial w^i}dw^j= F^{A}_{i,j}\frac{\partial h^0}{\partial y^k}\frac{\partial y^k}{\partial w^i}F^{0}_{j,y^l}dy^l = F^{A}_{i,j}\frac{\partial h^0}{\partial y^k}W^{-1\ k}_{i}W^{j}_{l}dy^l.
\]
Exterior differential of a function or forms is invariant under the change of variables.  Thus, condition (1o.1) can be written in the form
\beq
d_{v}\left[\left(W^{-1\ k}_{i} F^{A}_{i,j}W^{j}_{l} \right)\frac{\partial h^0}{\partial y^k}dy^l \right]=d_{v}\left[\Psi^{A\ k}_{l}\frac{\partial h^0}{\partial y^k}dy^l \right]=d_{v}(\Psi^{A}_{\mathcal{C}}\cdot d_{v}h^0 ) =0.
\eeq
Here $\Psi^{A}_{\mathcal{C}}$ is the vector of (1,1)-tensor fields in the vertical space.  for each $A=1,2,3$ defines the endomorphism of the vertical cotangent bundle $V^{*}\rightarrow Y$. $\Psi^{A}_{\mathcal{C}}$ is the "constitutive" quantity generalizing tensor of characteristic velocities in the theory of hyperbolic waves. It is defined by the constitutive relation $\mathcal{C}$.

\begin{remark} In these variables linear solutions have the form $h^0 =a_{i}F^{0}_{i}(y)+a,\ a_{i},a \in \mathbb{R}$.
\end{remark}

\begin{example} Consider the system (10.1) for Cattaneo heat propagation balance system (7.6).  We have here $y^0=\va; y^A=q^A,A=1,2,3$. Therefore,
\beq
W=F^{0}_{i,j}=\begin{pmatrix}\e_{,\va} & \e_{,q^1} &\e_{,q^2}& \e_{,q^3}\\
\tau_{,\va}q^1 & \tau &0&0\\
\tau_{,\va}q^2 &0 &\tau & 0\\
\tau_{,\va}q^3 &0&0&\tau
\end{pmatrix}.
\eeq
We have $Det(F^{0}_{i,j})=\tau^2 (\tau \e_{,\va}-\tau_{,\va}q^A \e_{,q^A})$.  Thus, \textbf{the regularity condition for constitutive relation of Cattaneo balance system} is
\beq
\tau^2 (\tau \e_{,\va}-\tau_{,\va}q^A \e_{,q^A})\ne 0.
\eeq
 Calculating the matrices $F^{A}_{i,j}$ we get
\beq
F^{1}_{i,j}=\begin{pmatrix}0 & 1 & 0 & 0\\
\Lambda(\va )_{,\va} & 0 & 0 & 0\\
0 & 0 & 0 & 0\\
0 & 0 & 0 & 0
\end{pmatrix},\ F^{2}_{i,j}=\begin{pmatrix}0 & 0 & 1 & 0\\
0 & 0 & 0 & 0\\
\Lambda(\va )_{,\va} & 0 & 0 & 0\\
0 & 0 & 0 & 0
\end{pmatrix},\ F^{3}_{i,j}=\begin{pmatrix}0 & 0 & 0 & 1\\
0 & 0 & 0 & 0\\
0 & 0 & 0 & 0\\
\Lambda(\va )_{,\va} & 0 & 0 & 0
\end{pmatrix}.
\eeq
It is easy to see that rank of each of these matrices is two, \emph{their kernels do not intersect}: $\bigcap_{A}Ker (F^{A}_{i,j})=\{ 0\}$. As a result, \textbf{defining system for Cattaneo balance system is elliptic}.\par
 More then this, square $(F^{A}_{i,j})^2$ of each of this matrix is equal to $\Lambda_{,\va}$ times the projector to the two-dim plane $<\va ,q^A>$ in the vector space $U=<\va ,q^A,A=1,2,3>$. Derivative $\Lambda_{,\va}$ is the \emph{heat conductivity} coefficient that is nonnegative due to the residual entropy inequality (Muller, TD, Sec.1.3.2).\par
In the basis $e_0 =\va ;e_A =q^A$ eigenvalues of $F^{1}_{i,j}$ are $e_{0}\pm \sqrt{\Lambda_{,\va}}e_A, e_{A+1},e_{A+2}$ (cyclically counted) and as a result, these matrices have no common eigenvectors.  As a result, \textbf{defining system (10.1) for Cattaneo system is holonomic}.
\end{example}

\section{\textbf{Defining system of the 1st order.}}
In this section we rewrite defining system (10.1) as the system of equations of the first order for the generating function $h^0$ and 3 additional functions $h^A,A=1,2,3.$ \par

   Defining system (10.1)
\[
F^{A}_{i,w^j}h^{0}_{,w^i w^k}=F^{A}_{i,w^k}h^{0}_{,w^i w^j}
\]
can be written in the form
\[
\partial_{w^k}(F^{A}_{i,w^j}h^{0}_{,w^i})-F^{A}_{i,w^j w^k}h^{0}_{,w^i}=\partial_{w^j}(F^{A}_{i,w^k}h^{0}_{,w^i})-F^{A}_{i,w^j w^k}h^{0}_{,w^i},
\]
or
\[
\partial_{w^k}(F^{A}_{i,w^j}h^{0}_{,w^i})=\partial_{w^j}(F^{A}_{i,w^k}h^{0}_{,w^i}),\ \forall\ k,j.
\]
This condition is locally equivalent to the original LL-system (6.1) - condition of existing of the potentials $h^A$ such that
\beq
F^{A}_{i,w^j}h^{0}_{,w^i}=\partial_{w^j} h^A,\ A=1,2,3; j=1,\ldots, m.
\eeq
Last condition can be understood as the relation between the vertical differential of the function $h^0$ transformed by the endomorphism of the vertical cotangent bundle $\pi_{V}:V(\pi)^{*}\rightarrow Y$ and the vertical differential of the potential $h^A$:
\beq
\mathcal{F}^{A}d_{v}h^0 =d_{v}h^A,
\eeq
where $\mathcal{F}^{A}$ is the endomorphism of the vertical cotangent bundle defined by
\[\mathcal{F}^{A}(dw^i )=F^{A}_{i,w^k}dw^k=d_{v}F^{A}_{i}.\]
Multiply the $A$-th equality by the real parameter $\epsilon_A$ and add them.  Then, introducing the endomorphism  $\mathcal{F}(\epsilon)=\epsilon_{A}\mathcal{F}^A$ of the bundle $V(\pi)^{*}\rightarrow Y$ and the function $h(x,w,\epsilon)=\epsilon_{A}h^{A}$,
we can write the last equality in the form
\beq
\mathcal{F}(\bar{\epsilon} )d_{v}h^0 =d_{v}h(\epsilon,x,w)\Leftrightarrow \mathcal{F}({\epsilon})_{i,w^j}h^{0}_{,w^i}=h(\epsilon)_{,w^j}.
\eeq
Here $\mathcal{F}({\epsilon})_{i}=\epsilon_{A}F^{A}_{i}.$  Vector $\bar{\epsilon} \in \mathbb{R}^3$ is the parameter. Last relation should be fulfilled for all values of $\bar{\epsilon}$.
\begin{definition}
Balance system ($\bigstar$) will be called C-regular if for some vector $\bar{\epsilon}^* \in R^3$, the matrix $\mathcal{F}({\bar{\epsilon}})_{i,w^j}=\epsilon_{A}F^{A}_{i,w^j}$ is nondegenerate: $det(\mathcal{F}({\bar{\epsilon}^*})_{i,w^j})\ne 0$ (and the corresponding endomorphism is reversible).
\end{definition}
If the condition of this definition is fulfilled for some $\bar{\epsilon}^*$, denote by $\mathcal{F}^{-1}(\bar{\epsilon}^*)$ - the inverse endomorphism of vertical cotangent bundle.  Applying this endomorphism to (11.3) we get the formula for the differential of the density $h^0 (x,w)$.
\beq
d_{v}h^0 =\mathcal{F}^{-1}(\bar{\epsilon}^*)d_{v}h(\bar{\epsilon}^*).
\eeq

Let now we assume that the balance system is C-regular and try to reverse the arguments leading to the formula (11.4).  Namely, let $h^A (w)$ be some functions and let the $m\times m$-matrix $\mathcal{F}(\bar{\epsilon})$ be invertible for $\bar \epsilon$ in an open set $\mathcal{E}\subset R^3$.\par
Then, there are two questions:
\begin{enumerate}
\item Is the expression $\mathcal{F}^{-1}(\bar{\epsilon}^*)d_{v}h({\bar \epsilon}^*)$ the vertical differential of a function $h^0$?
\item Under which condition the function $h^0$ does not depend on the parameter $\bar \epsilon$?
\end{enumerate}
The answer to the first question obviously is: there exists a function $h^0$ such that for ${\bar \epsilon}^*$ equality (11.4) is fulfilled iff
\beq
d_{v}\left(\mathcal{F}^{-1}(\bar{\epsilon}^*) d_{v}h(\bar{\epsilon}^*)\right)=0.
\eeq
In this equality the matrix of 1-forms acts on the 1-form.\par
Assume now that (11.4) is fulfilled for some $\epsilon^*\in E.$  For correctness, the last result should not depend on the choice of $\bar{\epsilon}^*$ and, therefore, equality (11.4) has to be fulfilled in the whole connected component of the set $\mathcal{E}$, containing $\epsilon^*$.\par
This condition is fulfilled at least at the named connected component iff the derivatives by $\epsilon_{A}$ of the right side in (11.4) are identically zero in the whole connected component. This leads to the system of equations
\beq
\frac{\partial}{\partial \epsilon^A}\left(\mathcal{F}^{-1}(\bar{\epsilon})d_{v}h(\bar{\epsilon})\right)=0,
\eeq
for $h^A$ whose solvability conditions delivers the constitutive restrictions on the balance system ($\bigstar$). Calculating derivatives here and using the fact that for a matrix function $F(\epsilon)$,  $\partial_{\epsilon}F^{-1}=-F^{-1}\partial_{\epsilon}F F^{-1},$ we see that the condition

\beq
-\mathcal{F}^{-1}(\bar{\epsilon})[\frac{\partial}{\partial \epsilon^A}\mathcal{F}(\bar{\epsilon} )] \mathcal{F}^{-1}(\bar{\epsilon})d_{v}h(\bar{\epsilon})+\mathcal{F}^{-1}(\bar{\epsilon})\frac{\partial}{\partial \epsilon^A}d_{v}h(\bar{\epsilon})=0,\ \bar \epsilon \in E
\eeq
or, applying $\mathcal{F}(\bar{\epsilon} )$ to both parts and calculating derivative in the last term,
\beq
\mathcal{F}^{A} \mathcal{F}^{-1}(\bar{\epsilon})d_{v}h(\bar{\epsilon})=d_{v}V^A,\ A=1,2,3.
\eeq
has to be fulfilled. More specifically, last condition has the form
\beq
\mathcal{F}^{A} \mathcal{F}^{-1}(\bar{\epsilon})\epsilon_B d_{v}h^B-d_{v}h^A=0,\Leftrightarrow (\mathcal{F}^{A} \mathcal{F}^{-1}(\bar{\epsilon})\epsilon_B-\delta^{A}_{B}\otimes I_{m})d_{v}h^B=0\ A=1,2,3,
\eeq
or, the form
\[
(\mathcal{F}^{A} \epsilon_B-\delta^{A}_{B}\otimes \mathcal{F}(\bar{\epsilon}))\mathcal{F}^{-1}(\bar{\epsilon})d_{v}h^B=0,\ A=1,2,3.
\]
These considerations can be resumed in the form of
\begin{theorem}Let $\bigstar $ be a RET balance system of the RET type and let (3.1) be a supplementary balance law such that corresponding main fields $\lambda^i$ are functionally independent.
\begin{enumerate}
\item Then the density $h^0$ of this SBL satisfies to the system
\beq
F^{A}_{i,w^j}h^{0}_{,w^i}=\partial_{w^j} h^A,\ A=1,2,3; j=1,\ldots, m
\eeq
for some functions $h^A (w^i)$.\par
\item If for some real parameters $\bar{\epsilon} =\{ \epsilon_A,A=1,2,3\}$ the matrix $\mathcal{F}({\bar{\epsilon}})_{i,w^j}=\epsilon_{A}F^{A}_{i,w^j}$ is non-degenerate, then
\beq
d_{v}h^0 =\mathcal{F}^{-1}(\bar{\epsilon}^*)d_{v}h(\bar{\epsilon}^*),
\eeq
where $h(\bar{\epsilon}^*)=\epsilon_{A}h^A$.
\item Functions $h^A(w^i)$ generate the SBL for the balance system ($\bigstar$) if and only if they satisfy to the compatibility system
\beq
\begin{cases}
d_{v}\left(\mathcal{F}^{-1}(\bar{\epsilon}^*) d_{v}h(\bar{\epsilon}^*)\right)=0,\\
(\mathcal{F}^{A} \epsilon_B-\delta^{A}_{B}\otimes \mathcal{F}(\bar{\epsilon}))\mathcal{F}^{-1}(\bar{\epsilon})d_{v}h^B=0,\ A=1,2,3.
\end{cases}
\eeq
\end{enumerate}
\end{theorem}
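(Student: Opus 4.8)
The plan is to recognize that all three parts are a repackaging of the first-order reduction of the second-order defining system carried out earlier in this section, combined with the bijection of Theorem 4. I would treat the three claims in turn, keeping the logical dependencies explicit, since parts (2) and (3) build on the relations established in part (1).

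For part (1) I would start from the hypothesis that (3.1) is an SBL whose main fields $\lambda^i$ are functionally independent. By Theorem 4 such an SBL corresponds to a density $h^0(w)$ with $\lambda^i=h^0_{,w^i}$ and fluxes determined by $d_v K^A=\sum_i\lambda^i d_v F^A_i$. Setting $h^A:=K^A$ and writing this in components gives $\partial_{w^j}h^A=\lambda^i F^A_{i,w^j}=h^0_{,w^i}F^A_{i,w^j}$, which is exactly the asserted first-order system. Intrinsically the same conclusion is reached by noting, as in the derivation preceding the theorem, that the second-order defining system (10.1) is the $d_v$-closedness of the $1$-forms $F^A_{i,w^j}h^0_{,w^i}\,dw^j$, so the Poincar\'e lemma on the star-shaped fibers produces potentials $h^A$ with $\partial_{w^j}h^A=F^A_{i,w^j}h^0_{,w^i}$.

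For part (2) the step is purely algebraic: I would multiply the $A$-th equation of part (1) by $\epsilon_A$ and sum over $A$, obtaining $\mathcal{F}(\bar\epsilon)_{i,w^j}h^0_{,w^i}=\partial_{w^j}h(\bar\epsilon)$ with $h(\bar\epsilon)=\epsilon_A h^A$, i.e. $\mathcal{F}(\bar\epsilon)d_v h^0=d_v h(\bar\epsilon)$ in endomorphism form. Under C-regularity the endomorphism $\mathcal{F}(\bar\epsilon^*)$ is invertible, and applying $\mathcal{F}^{-1}(\bar\epsilon^*)$ yields the stated formula. For part (3), the genuine content, I would prove the two implications separately. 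For necessity, any SBL produces $h^A$ as in part (1) and an $h^0$ as in part (2); the first compatibility condition $d_v\bigl(\mathcal{F}^{-1}(\bar\epsilon^*)d_v h(\bar\epsilon^*)\bigr)=0$ is then forced because its argument equals the closed form $d_v h^0$, while the second condition follows by differentiating the $\bar\epsilon$-independent object $d_v h^0=\mathcal{F}^{-1}(\bar\epsilon)d_v h(\bar\epsilon)$ in $\epsilon_A$, using $\partial_\epsilon\mathcal{F}^{-1}=-\mathcal{F}^{-1}(\partial_\epsilon\mathcal{F})\mathcal{F}^{-1}$ and clearing $\mathcal{F}(\bar\epsilon)$, exactly as in the lines before the theorem. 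For sufficiency I would run the construction backwards: the first condition gives, by the Poincar\'e lemma, a function $h^0$ with $d_v h^0=\mathcal{F}^{-1}(\bar\epsilon^*)d_v h(\bar\epsilon^*)$; the second condition makes the right-hand side independent of $\bar\epsilon$ on the connected component of the C-regular set, so $h^0$ is well defined; then reversing the algebra of parts (1)--(2) recovers $\partial_{w^j}h^A=F^A_{i,w^j}h^0_{,w^i}$ for every $A$, whence $(h^0,h^A)$ solves the LL-system (6.1) and, by Theorem 4, generates the SBL.

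The step I expect to be the main obstacle is the $\bar\epsilon$-independence in the sufficiency direction of part (3): one must verify that the vanishing of the $\epsilon_A$-derivatives on a connected component, together with the value at a single $\bar\epsilon^*$, truly pins down one $h^0$ independent of the parameter, and that the \emph{individual} relations $\partial_{w^j}h^A=F^A_{i,w^j}h^0_{,w^i}$ (not merely their $\epsilon$-contraction) are recovered for each $A$. This is precisely where the full force of the second compatibility condition is used, and where care with the domain of C-regularity and with the interchange of $d_v$ and $\partial_{\epsilon_A}$ is required.
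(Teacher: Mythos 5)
Your proposal is correct and follows essentially the same route as the paper: potentials $h^A$ obtained from the LL-system (6.1) (equivalently, from the $d_{v}$-closedness of the forms $F^{A}_{i,w^j}h^{0}_{,w^i}dw^j$ encoded in the defining system (10.1)), contraction with $\epsilon_{A}$ and inversion under C-regularity, and then the two compatibility conditions, the second derived via $\partial_{\epsilon}\mathcal{F}^{-1}=-\mathcal{F}^{-1}(\partial_{\epsilon}\mathcal{F})\mathcal{F}^{-1}$. If anything, your handling of the sufficiency direction of part (3) --- recovering the individual relations $d_{v}h^{A}=\mathcal{F}^{A}d_{v}h^{0}$ by evaluating the second compatibility condition at $\bar{\epsilon}^{*}$ and feeding in $d_{v}h^{0}=\mathcal{F}^{-1}(\bar{\epsilon}^{*})d_{v}h(\bar{\epsilon}^{*})$ --- is more explicit than the paper's, which simply states that the preceding considerations can be ``resumed'' in the theorem.
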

\begin{remark}
Compatibility conditions of the system of equations for $V^A$ delivers the constitutive conditions on the balance system $\bigstar$.
\end{remark}
\begin{remark}
Equation
\[
d_{v}h^0 -\mathcal{F}^{-1}(\bar{\epsilon}^*)d_{v}h(\bar{\epsilon}^*)=0
\]
can be considered as an abstract infinitesimal version of the Gibbs relation of the locally equilibrium thermodynamics, \cite{GP}.
\end{remark}

\begin{remark}
It is interesting to see if the set $\mathcal{E}$ is connected and if not, do we have the same or different SBL from the same $h^A$?
\end{remark}

\vfill \eject

\vfill \eject
\section{\textbf{Cattaneo heat propagation system.}}
In this section we determine the form of all supplementary balance laws for the Cattaneo heat propagation system (7.6). Instead of solving the LL-equations (4.5) directly, we will use the exterior systems generated by vertical differentials of the flux components $F^{A}_{i}.$
\par
Consider the heat propagation model containing the temperature $\vartheta$ and heat flux $q$ as the independent dynamical fields. $y^0 =\vartheta, y^A=q^A,\ A=1,2,3$. \par

Balance equations of this model have the form

\beq
\begin{cases}
\partial_{t}(\rho \epsilon)+div(q)=0,\\
\partial_{t}(\tau q)+ \nabla \Lambda(\vartheta) =-q.
\end{cases}
\eeq
Second equation can be rewritten in the conventional form
\[\partial_{t}(\tau q)+ \lambda \cdot\nabla \vartheta =-q,\]
where $\lambda =\frac{\partial \Lambda }{\partial \vartheta}.$  If coefficient $\lambda$ may depend on the density $\rho$, equation is more complex. \par
Since $\rho$ is not considered here as a dynamical variable, we merge it with the field $\epsilon$ and from now on and till the end it will be omitted. On the other hand, in this model the the energy $\epsilon$ depends on temperature $\vartheta$ \emph{and on the heat flux} $q$ (see \cite{JCL},Sec.2.1.2) or, by change of variables, temperature $\vartheta=\vartheta(\epsilon, q)$ will be considered as the function of dynamical variables.\par
  All variables may depend on $x^\mu$ directly or though $\rho$.\par
Cattaneo equation $q+\tau \partial_{t}(q)=-\lambda\cdot \nabla \vartheta$ has the form of the vectorial balance law and, as a result there is no need for the constitutive relations to depend on the derivatives of the basic fields.  No derivatives appears in the constitutive relation, therefore, this is the RET model. In the second equation there is a nonzero production $\Pi^A=-q^A$. Model is homogeneous, there is no explicit dependence of any functions on $t,x^A$.\par

Constitutive relation specify dependence of the internal energy $\epsilon$ on $\vartheta ,q$ and possible dependence of coefficients $\tau, \lambda$ on the temperature.  Simplest case is the linear relation $\epsilon =k\vartheta$, but for our purposes it is too restrictive, see \cite{JCL}, Sec.2.1. \par

\textbf{Vertical variables} are here $\vartheta, q^A$.\par
For the components of constitutive relation we have
\beq
F^{0}_{0}=\epsilon (\vartheta,q^A),\ F^{0}_{A}=\tau(\vartheta) q^A,\ F^{B}_{0}=q^B,\ F^{B}_{C}=\delta^{B}_{C}\Lambda (\vartheta),\ A,B,C=1,2,3.
\eeq
\par
We start with the $i\times \mu$ matrix

\[
F^{\mu}_{i}=\begin{pmatrix} \epsilon & \tau q^1 & \tau q^2 & \tau q^3\\
 q^1 & \Lambda(\vartheta) & 0 & 0\\
 q^2 &0 & \Lambda(\vartheta) & 0\\
 q^3 & 0 & 0 & \Lambda(\vartheta)
\end{pmatrix}.
\]
Assuming that coefficients $\tau$ and the function $\Lambda$ are independent on the vertical variables $ q^A$ we get the vertical differentials
\[
d_v F^{\mu}_{i}=\begin{pmatrix} \e_{\va}d\va +\e_{q^A}dq^A  & \tau_{\va}q^1 d\va+ \tau dq^1 &\tau_{\va}q^2 d\va+\tau dq^2 &\tau_{\va}q^2 d\va+\tau dq^3\\
 dq^1 & \Lambda_{\va} d\vartheta & 0 & 0\\
 dq^2 &0 & \Lambda_{\va} d\vartheta & 0\\
 dq^3 & 0 & 0 &\Lambda_{\va} d\vartheta
\end{pmatrix}.
\]
\par
Left side of LL-subsystem for $\mu =0$ is
\begin{multline}
\mathcal{H}^0 (\bar{\lambda })=(\lambda^0,\ \lambda^1,\ \lambda^2,\ \lambda^3)\cdot \begin{pmatrix} \e_{\va} & \e_{q^1} & \e_{q^2} & \e_{q^3}\\
\tau_{\va}q^1 & \tau & 0 & 0\\
\tau_{\va}q^2 & 0 & \tau & 0 \\
\tau_{\va}q^3 & 0 & 0 & \tau
\end{pmatrix} =\begin{pmatrix} \partial_{\alpha^0}K^0 \\ \partial_{\alpha^1}K^0 \\ \partial_{\alpha^2}K^0 \\ \partial_{\alpha^3}K^0 \end{pmatrix}.
\end{multline}
As a result, LL subsystem for $\mu =0$ takes the form

\beq
\begin{cases} \lambda^0 \e_{\va}+\tau_{\va}\lambda^A q^A  =K^0_{,\va},\\ \ \lambda^0 \e_{q^A}+\lambda^A \tau  =K^{0}_{q^A}\end{cases},\ A=1,2,3 .
\eeq

For $\mu=1$ we have in the same way

\beq
(\lambda^0,\ \lambda^1,\ \lambda^2,\ \lambda^3)\cdot \begin{pmatrix} 0 & 1 & 0 & 0\\
\Lambda_{\va} & 0 & 0 & 0\\
0 & 0 & 0& 0 \\
0 & 0 & 0 & 0
\end{pmatrix} =\begin{pmatrix} K^{1}_{,\va} \\ K^{1}_{,q^2} \\ K^{1}_{,q^1} \\ K^{1}_{,q^3} \end{pmatrix} \Leftrightarrow \ \begin{cases} \Lambda_{\va} \lambda^1  =K^1_{,\va},\\ \ \lambda^0  =K^{1}_{q^1},\\ 0=K^{1}_{q^2},\\ 0=K^{1}_{q^3} .\end{cases}
\eeq

Repeating this for $A=2,3$ we get the subsystems of LL-system for $A=1,2,3$  in the form
\beq
\begin{cases} \Lambda_{\va} \lambda^A  =K^A_{,\va},\\ \ \lambda^0  =K^{A}_{q^A},\\ 0=K^{A}_{q^{A+1}},\\ 0=K^{A}_{q^{A+2}},\ A=1,2,3 .\end{cases}
\eeq

Looking at systems (12.4-6) we see that if we make the change of variables: $\tva =\Lambda (\va )$ then the systems equations we get takes the form ( wherever is the derivative by $\va$ we multiply this equation by $\Lambda_{,\va}$)

\beq
\begin{cases} \lambda^0 \e_{\tva}+\tau_{\tva}\lambda^A q^A  =K^0_{,\tva},\\ \ \lambda^0 \e_{q^A}+\lambda^A \tau  =K^{0}_{q^A}\end{cases};\
\begin{cases}    K^A_{,\tva}=\lambda^A,\\ \ K^{A}_{q^B}=\lambda^0 \delta^{A}_{B}\end{cases},\ A,B=1,2,3 .
\eeq
Second subsystem is equivalent to the relation
\[
d_{v}K^A=\lambda^A d\vartheta +\lambda^0 dq^A.
\]

Integrability conditions of these form imply $K^A =K^A (\tva ,q^A)$ and
\[
K^{A}_{q^A}=\lambda^{0},\ A=1,2,3\ \Rightarrow \lambda^0 =\lambda^0 (\tva).
\]
Integrating equation $K^{A}_{q^A}=\lambda^{0}(\tva)$ by $q^Q$ we get
\beq
K^{A}=\lambda^{0}(\tva)q^A +\tilde{K}^{A}(\tva).
\eeq

First equation of each system now takes the form
\beq
\lambda^A =K^{A}_{\tva}=\lambda^{0}_{\tva}q^A +\tilde{K}^{A}_{,\tva}(\tva).
\eeq
Substituting these expressions for $\lambda^A$ into the 0-th system
\[
 \begin{cases} \lambda^0 \e_{\tva}+\tau_{\tva}\lambda^A q^A  =K^0_{,\tva},\\ \ \lambda^0 \e_{q^A}+\lambda^A \tau  =K^{0}_{q^A}\end{cases},\ A=1,2,3,
\]
we get
\beq
 \begin{cases} K^0_{,\tva}=\lambda^0 \e_{\tva}+\tau_{\tva}(\lambda^{0}_{\tva}\Vert q\Vert^2 +\tilde{K}^{A}_{,\tva}(\tva)q^A)  ,\\ \ K^{0}_{q^A}=\lambda^0 \e_{q^A}+\tau (\lambda^{0}_{\tva}q^A +\tilde{K}^{A}_{,\tva}(\tva)) \end{cases},\ A=1,2,3,
\eeq
where $\Vert q\Vert^2 =\sum_{A}q^{A\ 2}$.\par

Integrating $A$-th equation by $q^A$ and comparing results for different $A$ we obtain the following representation
\beq
K^0 =\lambda^0 \e +\tau(\tva )[\frac{1}{2}\lambda^{0}_{\tva}\Vert q\Vert^2+\tilde{K}^{A}_{,\tva}(\tva)q^A]+f(\tva)
\eeq
for some function $f(\tva )$.\par

Calculate derivative by $\tva$ in the last formula for $K^0$ and subtract the first formula of the previous system.  We get
\beq
0=\lambda^{0}_{,\tva} \e +\tau(\tva )[(\frac{1}{2}\lambda^{0}_{\tva}\Vert q\Vert^2+\tilde{K}^{A}_{,\tva}(\tva)q^A)]_{,\tva}-\frac{1}{2}\tau_{,\tva}\lambda^{0}_{,\tva}\Vert q\Vert^2  +f_{,\tva}(\tva).
\eeq
This is the compatibility condition for the system (12.11) for $K^0$.  As such, it is realization of the general compatibility system (6.2).\par

Take $q^A=0$ in the last equation, i.e. consider the case where \emph{there are no heat flux}.  Then the internal energy reduces to its \emph{equilibrium value } $\e^{eq}(\tva)$ and we get $f_{,\tva}(\tva)=-\lambda^{0}_{,\tva} \e^{eq}$.  Integrating here we find
\beq
f(\tva )=f_{0}-\int^{\tva}\lambda^{0}_{,\tva}(s) \e^{eq}(s) ds.
\eeq
Substituting this value for $f$ into the previous formula and we get expressions for $K^\mu$:

\beq
\begin{cases}
K^0 =\lambda^0 \e -\int^{\tva}\lambda^{0}_{,\tva} \e^{eq} ds +\tau(\tva )[\frac{1}{2}\lambda^{0}_{\tva}\Vert q\Vert^2+\tilde{K}^{A}_{,\tva}(\tva)q^A]+f_{0},\\
K^{A}=\lambda^{0}(\tva)q^A +\tilde{K}^{A}(\tva),\ A=1,2,3.
\end{cases}
\eeq
In addition to this, from (12.12) and obtained expression for $f(\tva)$, we get the \emph{expression for internal energy}
\beq
\e=\e^{eq}(\tva)+\frac{1}{2}\tau_{,\tva }\Vert q\Vert^2-\frac{\tau (\tva)}{\lambda^{0}_{\tva}(\tva)}\left[ \frac{1}{2}\lambda^{0}_{,\tva \tva}\Vert q\Vert^2+\tilde{K}^{A}_{,\tva\tva}(\tva)q^A\right].
\eeq
\textbf{This expression present the restriction to the constitutive relations in Cattaneo model placed on it by the entropy principle.}\par

Zero-th main field $\lambda^0$ is an arbitrary function of $\tva $ while $\lambda^A$ is given by (12.10):
\beq
\lambda^A = (\lambda^{0}_{\tva}q^A +\tilde{K}^{A}_{,\tva}(\tva)).
\eeq
Using this we find the source term
\beq
Q=\lambda^A \Pi_{A}=-\lambda^A q^A =-(\lambda^{0}_{\tva}\Vert q\Vert^2 +\tilde{K}^{A}_{,\tva}(\tva)q^A).
\eeq

Now we combine obtained expressions for components of a secondary balance law.  We have to take into account that (see (12.7)) that the LL-system of relations defines $K^\mu $ only $  mod \ C^{\infty}(X)$ (RET case!).  This means first of all that all the functions may depend explicitly on $x^\mu$. For energy $\e$, field $\Lambda(\va )$ and the coefficient $\tau$ this dependence is determined by constitutive relations and is, therefore, fixed.
Looking at (12.15) we see that the coefficients f linear and quadratic by $q^A$ are also defined by the constitutive relation, i.e. in the representation
\beq
\e=\e^{eq}(\tva)+\mu(\tva )\Vert q\Vert^2+ M_{A}(\tva )q^A = \e^{eq}(\tva)+\frac{1}{2}\tau_{,\tva }\Vert q\Vert^2-\frac{\tau (\tva)}{\lambda^{0}_{\tva}(\tva)}\left[ \frac{1}{2}\lambda^{0}_{,\tva \tva}\Vert q\Vert^2+\tilde{K}^{A}_{,\tva\tva}(\tva)q^A\right],
\eeq
coefficients
\beq
\mu(\tva ,x)=\frac{1}{2}\tau_{,\tva }-\frac{1}{2}\frac{\tau (\tva)}{\lambda^{0}_{\tva}(\tva)}\lambda^{0}_{\tva\tva},\ M_{A}=-\frac{\tau (\tva)}{\lambda^{0}_{\tva}(\tva)}\tilde{K}^{A}_{,\tva\tva}(\tva)
\eeq
are defined by the CR - by expression of internal energy as the quadratic function of the heat flux.\par
Rewriting the first relation we get
\begin{multline}
\left(ln(\lambda^{0}_{\tva})\right)_{,\tva}=ln(\tau )_{,\tva} -2\frac{\mu(\tva )}{\tau (\tva)}  \Rightarrow ln(\lambda^{0}_{\tva})=ln(\tau)+b^0-2\int^{\tva}\frac{\mu}{\tau}(s)ds \Rightarrow \\ \Rightarrow \lambda^{0}_{\tva}=\alpha \tau e^{-2\int^{\tva}\frac{\mu}{\tau}(s)ds},\ \alpha=e^{b^0}>0.
\end{multline}
From this relation we find
\beq
 \lambda^{0}(\tva,x)=a^0+\alpha\hat{\lambda}^{0}=a^0+\alpha \int^{\tva}[\tau e^{-2\int^{u}\frac{\mu (s)}{\tau (s)}ds}]du
\eeq
Using this in the second formula (12.19) we get the expression for coefficients $\tilde{K}^{A}$
\begin{multline}
\tilde{K}^{A}_{,\tva\tva}=-M_{A}\cdot \frac{\lambda^{0}_{\tva}(\tva)}{\tau (\tva)}=-M_{A}\alpha e^{-2\int^{\tva}\frac{\mu}{\tau}(s)ds}\Rightarrow \\ \Rightarrow \tilde{K}^{A} =k^A \tva +m^A +\alpha \cdot \hat{K}^{A}(\tva ) =k^A \tva +m^A-\alpha \int^{\tva} dw\int^{w}[M_{A}(u)e^{-2\int^{u}\frac{\mu}{\tau}(s)ds}]du.
\end{multline}

Functions $\hat{K}^{A}(\tva )$ are defined by the second formula in the second line.\par
Thus, functions $\lambda^{0}_{\va},\tilde{K}^{A}_{,\va\va}$ are defined by the constitutive relations while coefficients $\alpha >0,a^0,k^A,m^A$ are arbitrary functions of $x^\mu$.
\begin{remark} It would be interesting to chose lower limit in the integrals in previous formulas from physical reasons.  Then the coefficients $a^0,k^A,m^A$ might have some physical meaning too.
\end{remark}
Combine obtained results and returning to the variable $\va$ we get (here we repeatedly use the relation $f_{,\tva}=\va_{,\tva} f_{,\va}=(\tva_{,\va})^{-1} f_{,\va}=\Lambda^{-1}_{,\va}f_{,\va}$)
\beq
\begin{cases}
K^0 =\lambda^0 \e -\int^{\tva}\lambda^{0}_{,\tva} \e^{eq} ds +\tau(\tva )[\frac{1}{2}\lambda^{0}_{\tva}\Vert q\Vert^2+\tilde{K}^{A}_{,\tva}(\tva)q^A]+f_{0}=\\=(a^0+\alpha\hat{\lambda}^0 ) \e -\alpha \int^{\va}\hat{\lambda}^{0}_{,\va} \e^{eq} ds +\frac{\tau(\va )}{\Lambda_{,\va}}[\frac{\alpha}{2}\hat{\lambda}^{0}_{\va}\Vert q\Vert^2+(\Lambda_{,\va} k^A+\alpha\hat{K}^{A}_{,\va}(\va))q^A]+f_{0},\\
K^{A}=\lambda^{0}(\tva)q^A +\tilde{K}^{A}(\tva)=(a^0+\alpha\hat{\lambda}^{0}(\va))q^A + k^A \Lambda(\va) +m^A +\alpha\hat{K}^{A}(\va ) ,\ A=1,2,3.\\
Q=-\lambda^A q^A =-(\lambda^{0}_{\tva}\Vert q\Vert^2 +\tilde{K}^{A}_{,\tva}(\tva)q^A)=-\Lambda^{-1}_{,\va}(\lambda^{0}_{\va}\Vert q\Vert^2 +\Lambda_{,\va } k^A q^A +\alpha\hat{K}^{A}_{,\va}(\va)q^A )=\\
-\Lambda^{-1}_{,\va}(\alpha \hat{\lambda}^{0}_{\va}\Vert q\Vert^2 +\Lambda_{,\va } k^A q^A +\alpha\hat{K}^{A}_{,\va}(\va)q^A ).
\end{cases}
\eeq

Collecting obtained results together we present obtained expressions for secondary balance laws first in short form and then - in the form where relations (12.19-20) were used to separate the original balance laws from the general form
\begin{multline}
\begin{pmatrix}K^0\\K^1\\ K^2\\ K^3 \\ Q\end{pmatrix} = \begin{pmatrix}\lambda^0 \e -\int^{\va}\lambda^{0}_{,\va} \e^{eq} ds +\tau(\va )\Lambda^{-1}_{\va}[\frac{1}{2}\lambda^{0}_{\va}\Vert q\Vert^2+\tilde{K}^{A}_{,\va}(\va)q^A]+f_{0}\\ \lambda^{0}(\va)q^1 +\tilde{K}^{1}(\va)\\ \lambda^{0}(\va)q^2 +\tilde{K}^{2}(\va)\\ \lambda^{0}(\va)q^3 +\tilde{K}^{3}(\va)\\ -\Lambda^{-1}_{,\va }(\lambda^{0}_{,\va}\Vert q\Vert^2 +\tilde{K}^{A}_{,\va}(\va)q^A)\end{pmatrix}=\\=a^0 \begin{pmatrix} \e \\ q^1\\q^2\\q^3 \\0
\end{pmatrix}+ \sum_{A}k^A \begin{pmatrix} \tau(\va )q^A \\ \delta^{1}_{A} \Lambda( \va) \\ \delta^{2}_{A} \Lambda(\va) \\ \delta^{3}_{A} \Lambda(\va) \\ - q^A\end{pmatrix}
+\begin{pmatrix} \hat{K}^{A}_{,\va}(\va)q^A\\ \hat{K}^{1}(\va)\\ \hat{K}^{2}(\va)\\ +\hat{K}^{3}(\va) \\ -\Lambda^{-1}_{,\va }\hat{K}^{A}_{,\va}(\va)q^A
\end{pmatrix}
+ \alpha \begin{pmatrix}\hat{\lambda}^0 \e -\int^{\va}\lambda^{0}_{,\va} \e^{eq} ds +\tau(\va )\Lambda^{-1}_{\va}[\frac{1}{2}\lambda^{0}_{,\va}\Vert q\Vert^2]\\ \hat{\lambda}^{0}(\va)q^1 \\ \hat{\lambda}^{0}(\va)q^2 \\ \hat{\lambda}^{0}(\va)q^3 \\ -\Lambda^{-1}_{,\va }\hat{\lambda}^{0}_{,\va}\Vert q\Vert^2 \end{pmatrix}+\begin{pmatrix} f_0\\ m^1\\m^2\\m^3 \\0
\end{pmatrix}.
\end{multline}
\begin{remark} Notice the correspondence between the tensor structure of the basic fields of Cattaneo system - one scalar field (temperature $\va$) and one vector field (heat flux $q^A, A=1,2,3$) and the structure of space $\mathcal{SBL}(C)$ of supplementary balance laws - elements of $\mathcal{SBL}(C)$ depend on one scalar function of temperature $\lambda^{0}(\va)$ and one covector function of temperature $\hat{K}_{A}$.
\end{remark}

Returning to the variable $\va$ in the expression (12.15) and using the relation $\partial_{\tva}=\frac{1}{\Lambda (\va)_{,\va}} \partial_{\va}$  we get the expression for the internal energy
\begin{multline}
\e =\e^{eq}(\va)+\frac{\tau_{,\va}}{2\Lambda_{,\va}}\Vert q\Vert^2 -\frac{\tau(\va)}{\lambda^{0}_{,\va}}\left[ \frac{1}{2}\left(\frac{\lambda^{0}_{,\va}}{\Lambda_{,\va}} \right)_{,\va}\Vert q\Vert^2+ \left(\frac{{\tilde K}^{A}_{,\va}}{\Lambda_{,\va}}\right)_{,\va}q^A \right]=\\ = ^{\Lambda_{,\va}=\kappa-const}\e^{eq}(\va) +\frac{\tau_{,\va}}{2\kappa }\Vert q\Vert^2 -\frac{\tau(\va)}{\kappa\lambda^{0}_{,\va}}\left[ \frac{1}{2}\lambda^{0}_{,\va\va}\Vert q\Vert^2+ {\tilde K}^{A}_{,\va\va}q^A \right].
\end{multline}
Notice that for $\lambda^{0}=0$, balance law given by the 4th column in (12.24) vanish. The same is true for deformations of the Cattaneo equation defined by the third column.\par

First and second balance laws in the system (12.24) are the balance laws of the original Cattaneo system. Last one is the trivial balance law (see Sec.3).  Third column gives the balance law
\begin{multline}
\partial_{t}\left[ \hat{\lambda}^0 \e -\int^{\va}\lambda^{0}_{,\va} \e^{eq}ds +\tau(\va )\Lambda^{-1}_{\va}[\frac{1}{2}\lambda^{0}_{\va}\Vert q\Vert^2+\hat{K}^{A}_{,\va}(\va)q^A]\right] +\partial_{x^A}\left[\hat{\lambda}^{0}(\va)q^A +\hat{K}^{A}(\va) \right]=\\ = -\Lambda^{-1}_{,\va }(\hat{\lambda}^{0}_{\va}\Vert q\Vert^2 +\hat{K}^{A}_{,\va}(\va)q^A).
\end{multline}

Source/production term in this equation has the form

\begin{multline}
 -\Lambda^{-1}_{,\va }(\hat{\lambda}^{0}_{\va}\Vert q\Vert^2 +\hat{K}^{A}_{,\va}(\va)q^A)=-\Lambda^{-1}_{,\va }\hat{\lambda}^{0}_{\va}(\Vert q\Vert^2 +\frac{\hat{K}^{A}_{,\va}(\va)}{\hat{\lambda}^{0}_{\va}}q^A)=\\=
 -\Lambda^{-1}_{,\va }\hat{\lambda}^{0}_{\va}\left[\sum_{A}( q^A +\frac{\hat{K}^{A}_{,\va}(\va)}{2\hat{\lambda}^{0}_{\va}})^2 -\sum_{A} \left( \frac{\hat{K}^{A}_{,\va}(\va)}{2\hat{\lambda}^{0}_{\va}}\right)^2\right]
\end{multline}

For a fixed $\va$ this expression may have constant sign as the function of $q^A$ if and only if all $\hat{K}^{A}_{,\va}(\va)=0$. Therefore this is possible only if
the internal energy (12.25) has the form
\beq
\e=\e^{eq}(\va)+\left[ \frac{\tau_{,\va}}{2\Lambda_{,\va}} -\frac{\tau(\va)}{2\hat{\lambda}^{0}_{,\va}} \left(\frac{\hat{\lambda}^{0}_{,\va}}{\Lambda_{,\va}} \right)_{,\va} \right] \Vert q\Vert^2
\eeq
with some function ${\hat \lambda}^{0}(\va)$, Cattaneo model has the supplementary balance law
\begin{multline}
\partial_{t}\left[ \hat{\lambda}^0 \e -\int^{\va}\hat{\lambda}^{0}_{,\va} \e^{eq}ds +\frac{1}{2}\tau(\va )\Lambda^{-1}_{\va}\hat{\lambda}^{0}_{\va}\Vert q\Vert^2\right] +\partial_{x^A}\left[\hat{\lambda}^{0}(\va)q^A \right]= -\Lambda^{-1}_{,\va }\hat{\lambda}^{0}_{\va}\Vert q\Vert^2
\end{multline}
 \textbf{with the production term that may have constant sign - nonnegative, provided}
 \beq
 \Lambda^{-1}_{,\va }\hat{\lambda}^{0}_{\va}\leqq 0.
 \eeq
\textbf{This inequality is the II law of thermodynamics for Cattaneo heat propagation model.}\par
We collect obtained results in the following
\begin{theorem}
\begin{enumerate}
\item For the Cattaneo hear propagation balance system (12.1) compatible with the entropy principle the internal energy has the form (12.25).
All supplementary balance laws are listed in (12.24).  New supplementary balance laws depend on the 4 functions of temperature - $\hat{\lambda}^{0}(\va),\tilde{K}^{A}(\va), A=1,2,3$.
\item Additional balance law (12.26) given by the sum of third and forth columns in (12.24) has the nonnegative production term if and only if the internal energy $\epsilon$ is given by the formula (12.28) and, in addition, the condition (12.30) holds.
\end{enumerate}
\end{theorem}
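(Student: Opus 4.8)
The plan is to prove the theorem by directly solving the Lagrange--Liu system (4.5) for the Cattaneo constitutive data (12.2), and then harvesting the two parts from the explicit solution. First I would specialize the LL-system to the fluxes and sources $F^{0}_{0}=\e,\ F^{0}_{A}=\tau q^A,\ F^{B}_{0}=q^B,\ F^{B}_{C}=\delta^{B}_{C}\Lambda(\va)$ with $\Pi_A=-q^A$, obtaining the block form (12.4)--(12.6). The key simplifying move is the change of variable $\tva=\Lambda(\va)$, which clears the factors $\Lambda_{,\va}$ and puts the system into the symmetric shape (12.7).

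Next I would exploit the spatial subsystem $d_{v}K^A=\lambda^A\,d\va+\lambda^0\,dq^A$ first, since it is the most rigid. Its integrability (equality of mixed vertical derivatives) forces $\lambda^0$ to depend on $\tva$ alone and yields $K^A=\lambda^0(\tva)q^A+\tilde K^A(\tva)$ together with the expression (12.10) for $\lambda^A$. Substituting these into the temporal subsystem and integrating the $q^A$-equations gives the representation (12.12) for $K^0$ up to a single function $f(\tva)$; differentiating $K^0$ in $\tva$ and comparing with the first temporal equation produces the compatibility identity (12.13). Evaluating this identity at $q=0$ (the no-flux, equilibrium state) pins down $f$ in terms of the equilibrium energy $\e^{eq}$, and reading the full identity as a relation for $\e$ gives the constitutive formula (12.16)/(12.25). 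This establishes part (1): every SBL is produced from the free data $\lambda^0(\tva)$ and $\tilde K^A(\tva)$ (four functions of temperature), compatibility with the entropy principle forces $\e$ into the stated form, and assembling $K^\mu$ and $Q$ yields the table (12.24).

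For part (2) I would compute the production of the genuinely new SBL (third plus fourth columns of (12.24)) as $Q=\lambda^A\Pi_A=-\lambda^A q^A$, which by (12.17) is the quadratic-plus-linear form in $q$ displayed in (12.27). Completing the square exhibits it as $-\Lambda^{-1}_{,\va}\hat\lambda^{0}_{,\va}\bigl[\sum_A(q^A+\tfrac{\hat K^{A}_{,\va}}{2\hat\lambda^{0}_{,\va}})^2-\sum_A(\tfrac{\hat K^{A}_{,\va}}{2\hat\lambda^{0}_{,\va}})^2\bigr]$. The decisive observation is that, for each fixed $\va$, a nonhomogeneous quadratic form in the free variables $q^A$ keeps a constant sign only when its linear part vanishes; hence $Q$ can be sign-definite for all $q$ iff $\hat K^{A}_{,\va}\equiv 0$ for every $A$. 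That condition collapses the internal energy to the special form (12.28), after which $Q=-\Lambda^{-1}_{,\va}\hat\lambda^{0}_{,\va}\Vert q\Vert^2$ is nonnegative precisely when the inequality (12.30) holds, giving the claimed equivalence.

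The step I expect to be the genuine obstacle is the correct reading of the compatibility identity (12.13): it must be interpreted not as an obstruction to the existence of an SBL but as a \emph{determination of the constitutive function} $\e(\va,q)$, which is exactly where the entropy principle imposes its restriction. Care is also needed in the completing-the-square argument to justify that sign-definiteness over all $q\in\mathbb{R}^3$ genuinely forces the linear terms to vanish, and not merely at isolated temperatures; this is where the freedom to vary the $q^A$ independently, guaranteed by the nondegeneracy of the balance system, is used.
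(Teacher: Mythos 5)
Your proposal is correct and follows essentially the same route as the paper: specializing the LL-system to the Cattaneo data, passing to $\tva=\Lambda(\va)$, solving the rigid spatial subsystem first to force $\lambda^0=\lambda^0(\tva)$ and $K^A=\lambda^0 q^A+\tilde K^A(\tva)$, then integrating the temporal subsystem, fixing $f(\tva)$ by evaluating the compatibility identity at $q=0$, reading that identity as the constitutive restriction on $\e$, and finally completing the square in the production term to show that sign-definiteness forces $\hat K^A_{,\va}\equiv 0$ and hence (12.28) together with (12.30). Your two flagged "obstacle" points (interpreting the compatibility identity as determining $\e$, and justifying that sign-definiteness over all $q\in\mathbb{R}^3$ kills the linear terms) are exactly the places the paper's argument turns on as well.
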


\section{\textbf{Conclusion.}}
In this work we study the mathematical aspects of the development in the continuum thermodynamics that started with the pioneering works of B.Coleman, W.Noll and I. Muller in 60th of XX cent. and got its further development mostly in the works of G. Boillat, I-Shis Liu and T.Ruggeri.  This development is known as the "Entropy Principle".  It combined in itself the structural requirement on the form of balance laws of the thermodynamical system (denote it $(\mathcal{C})$) and on the entropy balance law with the convexity condition of the entropy density.\par
First of these requirements has pure mathematical form defining so called "supplementary balance laws" (SBL) associated with the original balance system. Space of SBL can be considered as a kind of natural "closure" of the original balance system.  This vector space includes: original balance laws, the entropy balance, the balance laws corresponding to the symmetries of the balance system and some other balance equations (see, for instance, Example 1). Thus, the space of supplementary balance laws carries important information about the thermodynamical system $(\mathcal{C})$ and deserves the attention.\par
In this text we study the case of Rational Extended Thermodynamics where densities, fluxes and production terms of the balance system do not depend on the derivatives of physical fields $y^i$. We revisit the formalism of "main fields", Lagrange-Liu equations and dual formulation of the balance system in terms more formal then it is done in physical literature. Our main goal here was to obtain and start studying the defining system of equations for the density $h^0$ of a supplementary balance law.  This overdetermined linear system of PDE of second order contains in itself  equations determining all the densities $h^0$ and with them, due to the formalism of RET, the fluxes and sources of SBL. In addition to this, this system contains, in the form of solvability conditions, the constitutive restrictions on the balance equations of the original balance system.\par
We illustrate our results by some simple examples of balance system and by describing all the supplementary balance laws and the constitutive restrictions for the Cattaneo heat propagation system. \par
Study of the space of supplementary balance laws for other physical systems (5 fields fluid, Elasticity, 13 fields) is in the process and will be presented elsewhere.

\section{\textbf{Appendix.}}
\subsection{{Appendix I. Solvability of systems of PDE.}}
Here we remind the following definition (\cite{O}, (2.70)).
\begin{definition}
\begin{enumerate}
 \item
 A system of differential equations $\Delta (x^\mu,y^i (x),z^{i}_{\Lambda }\vert \Lambda \leqq k )=0,$ is called \textbf{locally solvable at a point} $z_{0}=(x_{0},y_{0},\ldots z^{i}_{\Lambda \ 0})\in J^{k}(\pi)$ if there exists a smooth solution $y=s(x)$ defined in a neighborhood of the point $x_0$ which has prescribed "initial condition" $j^{k}s(x_0 )=z_{0}$.
  \item A system called non-degenerate  at a point $z_{0}\in J^{1}_{p}(\pi)$ if it is locally solvable at $z_{o}$ and \textbf{is of maximal rank} at this point.
 \item A system is called locally solvable (non-degenerate) in a domain $D\subset X$ if it is locally solvable (non-degenerate) at each point of domain $D$.
 \end{enumerate}
\end{definition}
\subsection{{Appendix II. Symmetric hyperbolic systems.}}
\begin{definition} A system of linear equations
\beq
A^{\mu}\frac{\partial}{\partial x^\mu}\bar{y}=\overline{\Pi}
\eeq
for a vector function $\bar y (x^\mu)$ is called symmetrical hyperbolic system if
\begin{enumerate}
\item  Matrices $A^\mu(x,y)$ are symmetrical,
\item Matrix $A^{0}$ is positive definite.
\end{enumerate}
\end{definition}

\subsection{{Appendix III. Cartan Lemma.}}
\begin{proposition}(Cartan Lemma,\cite{Ster}., Thm.4.4) Let $V$ be a vector space over a field $k$. Let vectors $x_{i},i=1,\ldots, p$  are linearly independent and for a vectors $y_i\in V, i=1,\ldots, p$,
\[
\sum_{i}x^i \wedge y_i=0.
\]
Then
\[
y_i=\sum_{j=1}^{p}A_{ij}x_j,\ \text{where}\ A_{ij}=A_{ji}.
\]
\end{proposition}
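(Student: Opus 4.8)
The plan is to reduce the single wedge-product identity to a finite system of scalar equations by working in a basis adapted to the independent family $x_1,\ldots,x_p$. First I would complete $x_1,\ldots,x_p$ to a basis $x_1,\ldots,x_p,x_{p+1},\ldots,x_n$ of $V$. (If $V$ is infinite-dimensional, it suffices to pass to the finite-dimensional subspace spanned by $x_1,\ldots,x_p$ together with $y_1,\ldots,y_p$, inside which all hypotheses and conclusions live.) Relative to this basis each $y_i$ has a unique expansion
\[
y_i=\sum_{j=1}^{p}A_{ij}x_j+\sum_{k=p+1}^{n}B_{ik}x_k,
\]
for scalars $A_{ij},B_{ik}\in k$, and the target statement becomes: all $B_{ik}$ vanish and the matrix $(A_{ij})$ is symmetric.

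The next step is to substitute this expansion into the hypothesis $\sum_i x_i\wedge y_i=0$ and group the resulting decomposable bivectors according to the basis $\{x_a\wedge x_b:a<b\}$ of $\Lambda^2 V$. The terms separate into two families: products $x_i\wedge x_j$ with both indices in $\{1,\ldots,p\}$, whose coefficient on $x_i\wedge x_j$ (for $i<j$) equals $A_{ij}-A_{ji}$ once the antisymmetry $x_i\wedge x_j=-x_j\wedge x_i$ is used; and mixed products $x_i\wedge x_k$ with $i\le p<k$, whose coefficient is $B_{ik}$. Since the bivectors $x_a\wedge x_b$ with $a<b$ are linearly independent in $\Lambda^2 V$, every such coefficient must vanish.

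Reading off the two families then finishes the argument. The mixed family forces $B_{ik}=0$ for all $i\le p$ and $k>p$, so each $y_i$ already lies in the span of $x_1,\ldots,x_p$ and $y_i=\sum_{j=1}^{p}A_{ij}x_j$; the first family forces $A_{ij}-A_{ji}=0$, i.e. the symmetry $A_{ij}=A_{ji}$, which is precisely the claim. The one point that genuinely carries the proof — and hence the main obstacle — is the linear independence of the decomposable bivectors $\{x_a\wedge x_b\}_{a<b}$ in $\Lambda^2 V$; this is the standard fact that an independent set in $V$ induces an independent family of wedge products, and it is exactly what converts the single exterior equation into the finite collection of scalar identities exploited above.
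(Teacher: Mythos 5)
Your proof is correct. Note that the paper itself offers no proof of this proposition: it is stated as Cartan's Lemma with a citation to Sternberg's \emph{Differential Geometry}, Thm.~4.4, so there is no in-paper argument to compare against. Your basis-completion argument is the standard proof of this fact: extend $x_1,\ldots,x_p$ to a basis, expand each $y_i$, and use that the bivectors $x_a\wedge x_b$ with $a<b$ are linearly independent in $\Lambda^2 V$ to kill the mixed coefficients $B_{ik}$ and force the symmetry $A_{ij}=A_{ji}$. Your handling of the infinite-dimensional case is also sound, since the inclusion of the finite-dimensional subspace $W$ spanned by the $x_i$ and $y_i$ induces an injection $\Lambda^2 W\rightarrow\Lambda^2 V$ (a subspace of a vector space is a direct summand), so the hypothesis descends to $\Lambda^2 W$ where the basis argument applies.
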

\vskip1cm

\textbf{Acknowledgements}. I would like to express my deep gratitude to Professor T.Ruggeri.  His lecture in Messina at the Thermoconn 2005 was the starting point of my work on this subject and the talks with him at the Montecatini Terme on October 2009 were especially strong stimulus for the study of supplementary balance laws for the balance systems of Continuum Thermodynamics. I am also profoundly thankful to Professor W.Muschik for the useful and deep discussion of the important thermodynamical topics.

\vskip1cm

\end{document}